\DeclareOldFontCommand{\rm}{\normalfont\rmfamily}{\mathrm}
\DeclareOldFontCommand{\sf}{\normalfont\sffamily}{\mathsf}
\DeclareOldFontCommand{\tt}{\normalfont\ttfamily}{\mathtt}
\DeclareOldFontCommand{\bf}{\normalfont\bfseries}{\mathbf}
\DeclareOldFontCommand{\it}{\normalfont\itshape}{\mathit}
\DeclareOldFontCommand{\sl}{\normalfont\slshape}{\@nomath\sl}
\DeclareOldFontCommand{\sc}{\normalfont\scshape}{\@nomath\sc}
\DeclareRobustCommand*\cal{\@fontswitch\relax\mathcal}
\DeclareRobustCommand*\mit{\@fontswitch\relax\mathnormal}
\colorlet{keywordcolor}{blue!50!black}
\colorlet{commentcolor}{green!60!black}
\colorlet{typecolor}{violet}
\newcommand{\sourcefont}{\ttfamily\small}
\newcommand{\commentfont}{\slshape\rmfamily\color{commentcolor}}
\lstdefinelanguage{ABS}{
        keywords={od,do,assert,this,new,data,type,def,case,of,local,class,interface,
        extends,implements,if,then,else,await,get,return,skip,while,module,
        import,export,from,to,suspend,delta,adds,modifies,removes,original,productline,
        features,core,corefeatures,optionalfeatures,after,when,product,hasAttribute,
        hasMethod,hasField,hasInterface,uses,root,extension,group,allof,oneof,require,
        stateupdate,object,main,objectupdate,classupdate,fi,
        exclude,original,ifin,ifout,opt,null,
        newgroup,data,thiscomp,in,joins,leaves,subtypeOf,acquire,except,as,component,Pre,Abs
        },
        keywordstyle=\color{keywordcolor}\bfseries\sffamily,
        morekeywords=[2]{Unit, Fut,Int, Bool, Rat, List, Set, Pair, Fut, Maybe, String, Triple, Either, Map},
        keywordstyle=[2]\color{typecolor},
        sensitive=true,
        comment=[l]{//},
        morecomment=[s]{/*}{*/},
        morestring=[b]"
}
\lstdefinelanguage[v9]{Java}[]{Java}{
        morekeywords={module,requires,provides,uses,with,to,exports}
}
\lstdefinelanguage[ContextJ]{Java}[]{Java}{
        morekeywords={layer,with,without,proceed,before,after}
}
\lstdefinelanguage[FOP]{Java}[]{Java}{
        morekeywords={refines,original,Super}
}
\lstdefinelanguage[JastAdd]{Java}[]{Java}{
        morekeywords={aspect,syn,inh,lazy}
}
\lstdefinestyle{code}{
        basicstyle=\sourcefont\upshape,
        keywordstyle=\color{keywordcolor}\bfseries\sffamily,
        commentstyle=\commentfont,
        columns=fullflexible,
        mathescape=true,
        escapechar={\#},
        keepspaces=true,
        showstringspaces=false,
        aboveskip=8pt, 
        numbers=left,
        stepnumber=1, 
        numberstyle=\ttfamily\scriptsize\color{gray},
        numbersep=4pt,
        xleftmargin=1.5em,
        xrightmargin=1.5em,
        framexleftmargin=1.2em,
        framexrightmargin=1em,
        framextopmargin=0.5ex,
        breaklines=true,
        breakindent=3pt,
}
\lstdefinestyle{abs}{
        style=code,
        language=ABS,
}
\lstdefinestyle{java}{
        style=code,
            language=Java
}
\lstdefinestyle{java9}{
        style=code,
            language=[v9]Java
}
\lstdefinestyle{aspectj}{
        style=code,
        language=[AspectJ]Java
}
\lstdefinestyle{jastadd}{
        style=code,
        language=[JastAdd]Java
}
\lstdefinestyle{contextj}{
        style=code,
        language=[ContextJ]Java
}
\lstdefinestyle{FOP}{
        style=code,
        language=[FOP]Java
}
\lstdefinestyle{scala}{
        style=code,
        language=Scala,
        morekeywords={self}
}
\newcommand{\code}[2][]{\lstinline[style=code,basicstyle=\ttfamily\upshape,#1]{#2}}
\newcommand{\abs}[2][]{\code[style=abs,#1]{#2}}
\newcommand{\ektodo}[1]{\todo[color=brown!40, author=EK]{#1}}
\newcommand{\COMMENT}[1]{}
\newsavebox{\@brx}
\newcommand{\llangle}[1][]{\savebox{\@brx}{\(\m@th{#1\langle}\)}%
  \mathopen{\copy\@brx\kern-0.5\wd\@brx\usebox{\@brx}}}
  \newcommand{\rrangle}[1][]{\savebox{\@brx}{\(\m@th{#1\rangle}\)}%
    \mathclose{\copy\@brx\kern-0.5\wd\@brx\usebox{\@brx}}}
\let\temp\phi
\let\phi\varphi
\let\varphi\temp
\newcommand{\xRightarrow}[2][]{\ext@arrow 0359\Rightarrowfill@{#1}{#2}}
\newcommand{\branchs}[1]{\ensuremath{
\left\{\begin{array}{l}
#1
\end{array}\right\}
}\xspace}
\newcommand{\cased}[1]{\ensuremath{
\left\{\begin{array}{ll}
#1
\end{array}\right.
}\xspace}
\newcommand{\casedr}[1]{\ensuremath{
\left.\begin{array}{rl}
#1
\end{array}\right\}\\
}\xspace}
\DeclareMathOperator*{\halfsim}{\Vdash}
\DeclareMathOperator*{\chop}{\ast\ast}
\DeclareMathOperator*{\megachop}{\ast\!\ast\!\ast}
\newcommand{\xabs}[1]{\text{\abs{#1}}}
\newcommand{\osynt}[1]{\mathsf{#1}}
\newcommand{\sep}{  \ | \ }
\newcommand{\many}[1]{\overrightarrow{#1}}
\newcommand{\coreactor}{\ensuremath{CAO}\xspace}
\newcommand{\sequence}[1]{\ensuremath{ \left\langle #1 \right\rangle \xspace}}
\newcommand{\prgm}{\ensuremath{\osynt{Prgm}}\xspace}
\newcommand{\classlang}{\ensuremath{\osynt{Class}}\xspace}
\newcommand{\mainlang}{\ensuremath{\osynt{Main}}\xspace}
\newcommand{\methodlang}{\ensuremath{\osynt{Meth}}\xspace}
\newcommand{\fieldlang}{\ensuremath{\osynt{Field}}\xspace}
\newcommand{\explang}{\ensuremath{\synt{e}}\xspace}
\newcommand{\sexplang}{\ensuremath{\symbolic{\explang}}\xspace}
\newcommand{\expr}{\ensuremath{\xabs{e}}\xspace}
\newcommand{\sexpr}{\ensuremath{\symbolic{\xabs{e}}}\xspace}
\newcommand{\objname}{\ensuremath{\synt{X}}\xspace}
\newcommand{\classname}{\ensuremath{\synt{C}}\xspace}
\newcommand{\methodname}{\ensuremath{\synt{m}}\xspace}
\newcommand{\method}{\mathsf{M}}
\newcommand{\varname}{\ensuremath{\xabs{v}}\xspace}
\newcommand{\mainABS}{\text{\abs{main}}}
\newcommand{\event}{\ensuremath{\mathsf{ev}}\xspace}
\newcommand{\invocev}{\ensuremath{\mathsf{invEv}}\xspace}
\newcommand{\invocrev}{\ensuremath{\mathsf{invREv}}\xspace}
\newcommand{\resolvev}{\ensuremath{\mathsf{futEv}}\xspace}
\newcommand{\resolvrev}{\ensuremath{\mathsf{futREv}}\xspace}
\newcommand{\noev}{\ensuremath{\mathsf{noEv}}\xspace}
\newcommand{\awaitev}{\ensuremath{\mathsf{condEv}}\xspace}
\newcommand{\awaitrev}{\ensuremath{\mathsf{condREv}}\xspace}
\newcommand{\suspev}{\ensuremath{\mathsf{suspEv}}\xspace}
\newcommand{\susprev}{\ensuremath{\mathsf{suspREv}}\xspace}
\newcommand{\future}{\ensuremath{f}}
\newcommand{\TINFER}[3]{\begin{array}{c}\rulename{#1}  \frac{\begin{array}{c}#2
\\[0.5mm]
\end{array}
}{
\begin{array}{c}
\\[-3.5mm]
\displaystyle{#3}
\end{array}
}\end{array}}
\newcommand{\rulename}[1]{\textbf{\scriptsize(\textsf{#1})}}
\newcommand{\TINFER}[4][]{
\begin{prooftree}
\AxiomC{\ensuremath{#3}}
\LeftLabel{\rulename{#2}}
\RightLabel{#1}
\UnaryInfC{\ensuremath{#4}}
\end{prooftree}
}
\def\fCenter{\ \Rightarrow\ }
\newcommand{\SINFER}[4][]{
\begin{prooftree}
\AxiomC{\ensuremath{#3}}
\LeftLabel{\rulename{#2}}
\RightLabel{#1}
\UnaryInfC{\ensuremath{#4}}
\end{prooftree}
}
\newcommand{\SSINFER}[5][]{
\begin{prooftree}
\AxiomC{\ensuremath{#3}}
\noLine
\UnaryInfC{\ensuremath{#4}}
\LeftLabel{\rulename{#2}}
\RightLabel{#1}
\UnaryInfC{\ensuremath{#5}}
\end{prooftree}
}
\newcommand{\synt}[1]{\ensuremath{\xabs{#1}}}
\newcommand{\dType}{\ensuremath{\xabs{D}}}
\newcommand{\statement}{\ensuremath{\xabs{s}}\xspace}
\newcommand{\type}{\ensuremath{\mathbbm{T}}\xspace}
\newcommand{\typesyn}{\ensuremath{\tau}\xspace}
\newcommand{\typecalc}{\ensuremath{\gamma}\xspace}
\newcommand{\typesem}{\ensuremath{\alpha}\xspace}
\newcommand{\typescheme}{\ensuremath{\iota}\xspace}
\newcommand{\state}{\ensuremath{\sigma}\xspace}
\newcommand{\heap}{\ensuremath{\rho}\xspace}
\newcommand{\heapid}{\ensuremath{\rho_\mathsf{id}}\xspace}
\newcommand{\trace}{\ensuremath{\theta}\xspace}
\newcommand{\globaltrace}{\ensuremath{\gamma}\xspace}
\newcommand{\traceset}{\ensuremath{\Theta}\xspace}
\newcommand{\history}{\ensuremath{\mathsf{hs}}\xspace}
\newcommand{\selection}{\ensuremath{\mathsf{sc}}\xspace}
\newcommand{\concr}{\ensuremath{\chi}\xspace}
\newcommand{\pttype}{\ensuremath{\type_\mathsf{p2}}\xspace}
\newcommand{\pttypesem}{\ensuremath{{\mathsf{p2}}}\xspace}
\newcommand{\ptsim}{\ensuremath{\halfsim\limits^{\pttypesem}}\xspace}
\newcommand{\posttype}{\ensuremath{\type_\mathsf{pst}}\xspace}
\newcommand{\posttypesem}{\ensuremath{{\mathsf{pst}}}\xspace}
\newcommand{\postsim}{\ensuremath{\halfsim\limits^{\posttypesem}}\xspace}
\newcommand{\localtype}{\ensuremath{\type_\mathsf{met}}\xspace}
\newcommand{\methtype}{\ensuremath{\mathsf{L}}\xspace}
\newcommand{\localtypesem}{\ensuremath{\typesem_\mathsf{met}}\xspace}
\newcommand{\localtypescheme}{\ensuremath{\typescheme_\mathsf{met}}\xspace}
\newcommand{\localsim}{\ensuremath{\halfsim\limits^{\localtypesem}}\xspace}
\newcommand{\localactchoice}[1]{\oplus\branchs{#1}\xspace}
\newcommand{\localpaschoice}[2]{\&(#1)\branchs{#2}\xspace}
\newcommand{\TPL}{\ensuremath{\text{\upshape BPL}}\xspace}
\newcommand{\traceterm}{\ensuremath{\mathsf{t}_\mathsf{tr}}\xspace}
\newcommand{\SAINFER}[3]{
\begin{prooftree}
\AxiomC{\phantom{$E^E$}}
\LeftLabel{\rulename{#2}}
\RightLabel{#1}
\UnaryInfC{#3}
\end{prooftree}
}
\newcommand{\symbolic}[1]{\ensuremath{\underline{#1}}}
\newcommand{\pairdelim}{\genfrac{(}{)}{0pt}{}}
\newcommand{\statepair}[2]{\ensuremath{\pairdelim{#1}{#2}}\xspace}
\newcommand{\defstatepair}{\statepair{\state}{\heap}}
\newcommand{\defstatepairo}{\statepair{\state'}{\heap'}}
\newcommand{\marker}{\ensuremath{\diamond}\xspace}
\newcommand{\destiny}{\ensuremath{f}\xspace}
\newcommand{\process}{\ensuremath{\mathsf{proc}}\xspace}
\newcommand{\processpool}{\ensuremath{\mathsf{procpool}}\xspace}
\newcommand{\defeval}[1]{\ensuremath{ \llbracket #1 \rrbracket_{\objname,\destiny,\methodname,\defstatepair}\xspace}}
\newcommand{\anyeval}[1]{\ensuremath{ \left\llbracket #1 \right\rrbracket \xspace}}
\newcommand{\eval}[2]{\ensuremath{ \left\llbracket #1 \right\rrbracket_{#2} \xspace}}
\title{Behavioral Program Logic and LAGC Semantics without Continuations (Technical Report)} 
\author{Eduard Kamburjan}
\institute{Technische Universit{\"a}t Darmstadt, Darmstadt, Germany\\ Department of Computer Science, Software Engineering Group\\\texttt{kamburjan@cs.tu-darmstadt.de}}
\titlerunning{Behavioral Program Logic}
\begin{document}
\maketitle
\begin{abstract}
We present Behavioral Program Logic (BPL), a dynamic logic for trace properties that incorporates concepts from behavioral types and allows reasoning about non-functional properties within a sequent calculus.
BPL uses \emph{behavioral modalities} $[\statement \halfsim \tau]$, to verify statements $\statement$ against \emph{behavioral specifications} $\tau$.
Behavioral specifications generalize postconditions and behavioral types. 
They can be used to specify other static analyses, e.g., data flow analyses.
This enables deductive reasoning about the results of multiple analyses on the same program, potentially implemented in different formalisms.
Our calculus for BPL verifies the behavioral specification gradually, as common for behavioral types.
This vastly simplifies specification, calculus and composition of local results.
We present a sequent calculus for object-oriented actors with futures that integrates a pointer analysis and bridges the gap between behavioral types and deductive verification.

This technical report introduces (1) complete LAGC semantics of a Core Active Object language (CAO) without continuations (2) Behavioral Program Logic and (3) gives an example for a behavioral type expressed in Behavioral Program Logic, \emph{method types}.
This report contains the soundness proofs for method types.
While the semantics cover CAO with suspension, the method types do not, to simplify the presentation.
\end{abstract}
\bibliographystyle{acm}
\section{Introduction}
When reasoning about concurrent programs, the intermediate states of an execution are of more relevance than when reasoning about sequential programs.
In an object-oriented setting, it does not suffice to specify pre- and postcondition of some method \methodname. 
Instead, the \emph{traces} generated by \methodname must be specified. 

Recently, dynamic logics for trace properties have been developed~\cite{bern,tracedl,absdl2} to leverage well-established verification techniques from dynamic logic~\cite{keybook} to a concurrent setting.
The application of these approaches to real world models of distributed systems~\cite{DinTHJ15,KamburjanH17} revealed two shortcomings:
(1) the composition of method-local verification results to a guarantee of the whole system is not automatic and 
(2) the specification of trace properties is too complex.
Thus, the current approaches are deemed as not practical for serious verification efforts.

Another group of verification techniques, \emph{behavioral types}, aim ``{\it to describe properties associated with the behavior of programs and in this way also describe how a computation proceeds.}''~\cite{hut}.  
For object-oriented languages, behavioral types can also be seen as specifications of traces of methods.
Behavioral types, especially session types~\cite{Honda08}, are restricted in their expressive power to easily compose their local results to global guarantees, and are natural specifications for protocols. 
However, they lack precision when handling state~\cite{Bocchi12} or require additional static analyses~\cite{ifm}.
For Active Objects~\cite{boer} (object-oriented actors with futures), a translation from session types to a trace logic has been given~\cite{ifm}.

We introduce \emph{Behavioral Program Logic} (\TPL) to combine precise state reasoning from program logics with the relative simplicity of behavioral types and enable the integration of static analyses into deductive reasoning.
The main difference to previous approaches in dynamic logic for trace properties is the 
\emph{behavioral modality} $[\statement \halfsim \tau]$, which expresses that all traces of statement \statement satisfy specification $\tau$. 
The specification $\tau$ is not a formula, as the postcondition of modalities in classical dynamic logic, but is a specification translated into a monadic second order formula over traces. 
Similarly to behavioral types, $\tau$ may contain syntactic elements and allows to syntactically match with $\statement$. Sequent calculi for \TPL may reduce $\statement$ \emph{and} $\tau$ in one rule.
Contrary to previous dynamic logics for traces, behavioral specifications are more succinct and easier to compose and decompose by, e.g., using the projection mechanism of session types.

We distinguish between behavioral types, that have a sequent calculus of the above kind, and \emph{behavioral specifications}, which do not.
Behavioral specifications interface with external properties, such as a data-flow points-to analysis. 
Beyond integrating external analyses into the sequent calculus, this modularizes the sequent calculus by expressing different properties with different behavioral specifications.
Behavioral specifications are clear interfaces that allow to close proofs once more context is known and generalize proof repositories~\cite{repos}.

Our main contributions are (1) \TPL, a trace program logic that integrates deductive reasoning with static analyses
(2) \emph{method types}, a behavioral type in \TPL that generalizes method contracts, object invariants and local types for Active Objects.
Decomposition and projection for Session Types for Active Objects are given in~\cite{ifm,icfem}.
We introduce our programming language in Sec.~\ref{ch:ao} and \TPL in Sec.~\ref{sec:bpl1}--\ref{sec:idea}.
In Sec.~\ref{sec:symbol} we introduce method types. 
Sec.~\ref{sec:other} summarizes previous approaches
and concludes. 

\section{\coreactor: An Active Object Language with Locally Abstract, Globally Concrete Semantics}\label{ch:ao}
We use the Active Object concurrency model, an extension of the Actor~\cite{actor} concurrency model.
Actors are endpoints which communicate solely via asynchronous message-passing.
An actor may receive messages at any time and each message is handled \emph{preemption-free}: once the handler for a message has started, it may not be interrupted by any other handler.
Messages arriving during execution of a handler are stored in a bag data structure.

As Actors do not share memory and handlers cannot be interrupted, reasoning about concurrent behavior in Actor-based systems is simpler than for C-style concurrency models with shared memory, where two process may interfere at any point, except when explicitly forbidden to do so.
Indeed, compared to cloud systems based on other paradigms, real-world systems build upon pure actors sustain less coordination-based bugs~\cite{HeddenZ18}.
Erlang~\cite{Armstrong07} and the Akka framework are some mainstream implementations of Actors, with library-support for further languages being available.

In an object-oriented setting, message sending can be encoded asynchronous method calls and the handlers as the corresponding methods.
Object-orientation also adds a further possibility for interaction: processes within one Actor may communicate by storing and reading data in the objects \emph{heap memory}.
However, there is still no preemption, so the points where there interactions happen are clear. 

Active Object languages are object-oriented languages that enforce the Actor concurrency model
Additionally, modern Active Object languages such as ABS~\cite{abs}, Encore~\cite{BrandauerCCFJPT15} or ProActive~\cite{ProActive} use \emph{futures}~\cite{BakerH77,future}.
A future is an identifier for a method call that allows to synchronize on the called process and read its return value. 
Futures simplify the sending of a result back to the caller Actor. 
Instead of sending it to a second method via a callback, a downside of the pure Actor model\COMMENT{ informally called ``callback hell''}, the result may now be handled in the same method by synchronizing on the called process. 
However, futures enable Active Objects to deadlock by circular synchronization on futures.

Finally, Creol~\cite{JohnsenOY06} and ABS introduced \emph{cooperative scheduling}: a process may be descheduled, but only at a special \abs{await} statement. The statements between two such suspension points still have exclusive access to the objects heap memory,
data races may only occur at method start and \abs{await} statements.

We define \coreactor in this chapter, an Active Object language with futures, consequent strong encapsulation (i.e., all fields are object-private) and cooperative scheduling. 
For a further discussion of Active Object languages we refer to the survey of de~Boer et al.~\cite{BoerSHHRDJSKFY17}.

\subsection{Syntax}
\begin{definition}[Syntax]\label{def:syntax}
Let \xabs{v} range over program variables, \xabs{f} over field names, \classname over class names, \methodname over method names, $i$ and \abs{n} over \(\mathbb{N}\) and $\sim$ range over $\xabs{&&},\xabs{||},\xabs{+},\xabs{-},\xabs{*},\xabs{/}$. The syntax of \coreactor is defined by the grammar in Fig.~\ref{fig:syntax}. We use $\many{\cdot}$ to denote (possibly empty) lists.
\begin{figure}[h]
\begin{align*}
\prgm ::=&~ \many{\classlang}~\mainlang
\quad \mainlang ::=~\mainABS\{\synt{si}\} 
\quad \classlang ::=~\xabs{class}~\classname~\xabs{(}\many{\classname~\xabs{f}}\xabs{)}\{\many{\fieldlang}~\many{\methodlang}\}\\
\methodlang ::=&~\dType~\methodname(\many{\dType~\varname})\{\synt{s}\}
\quad\fieldlang ::=~\dType~\xabs{f = e;}
\quad\synt{si} ::= \synt{C v = C(}\many{\synt{v}}\synt{); si} \sep \synt{v!m(}\many{\expr}\synt{)}\\
\dType ::=&~ \xabs{Rat} \sep \xabs{Unit} \sep \xabs{Int} \sep\xabs{List<}\dType\xabs{>} \sep \xabs{Bool} \sep \xabs{Fut<}\dType\xabs{>}\\
\synt{s} ::=&~
[\dType]~ \synt{l = e} \sep 
[\dType]~\synt{v = e.get}_i\sep 
\synt{await g}_i\sep 
[\dType]~\xabs{v = f!m(}\many{\xabs{e}}\xabs{)} \sep 
\synt{skip} \sep 
\xabs{return e} \\
&\sep
\xabs{while(e)}\{\xabs{s}\} \sep
\xabs{if(e)} \{\xabs{s}\}\xabs{else}\{\xabs{s}\}\sep
\synt{s;s}\\
\explang ::=&~ \xabs{l} \sep \xabs{n} \sep \xabs{Never} \sep \xabs{Nil} \sep \xabs{True} \sep \xabs{False} \sep \xabs{len(}\explang\xabs{)} \sep \xabs{hd(}\explang\xabs{)} \sep \xabs{tl(}\explang\xabs{)} \sep \xabs{Cons(}\explang\xabs{,}\explang\xabs{)} \\
&\sep \explang \sim \explang  \sep\xabs{!}\explang \sep \xabs{-}\explang \qquad \xabs{l} ::=~ \xabs{this.f}\sep\xabs{v}\qquad \xabs{g} ::=~ \explang \sep \explang?
\end{align*}
\caption{Syntax of \coreactor.}
\label{fig:syntax}
\end{figure}
\end{definition}
A \coreactor program consists of a set of classes and a main block.
The main block is a list of object instantiantions and one final method call to one of the instantiantions to initialize the communication.
The object instantiantions bind the created object to a name and take as parameters references to other objects. 
All objects are created at once and thus the order of object creations is not important.
E.g., the following initializes \abs{a} and \abs{b} with pointers to each other.

\noindent\begin{abscode}[numbers=none]
main{ C a = C(b); C b = C(a); a!m(); }
\end{abscode}

A class has (1) a list of references to other objects (2) a list of fields which are initialized upon creation and (3) a list of methods.
The references are regarded as fields that cannot be reassigned.
\coreactor does not support inheritance.\COMMENT{no: aliasing\ektodo{maybe one can allows classes to be data types?}} 
There are six kinds of data types: Rationals, the Unit type, Integers, Booleans, parametric lists and parametric futures.
The expressions for Booleans and Integers are straightforward. \abs{no} is a unique future that is never resolved, \abs{Nil} is the empty list, \abs{hd} returns the first element of a list and \abs{tl} the remainder.
\abs{Cons} is the list constructor and \abs{len} returns the length of a list.

The statements for assignment, branching, repetition and the empty statement are standard.
The statement \abs{v = f!m(}$\many{\xabs{e}}$\xabs{)} is an asynchronous method call on method \abs{m} on the object \abs{f} with parameters $\many{\xabs{e}}$. 
A fresh future is generated and stored in \abs{v}. This future identifies the called process. 
We say that the called process will \emph{resolve} this future.
The statement \abs{return e} terminates the process and stores the value \abs{e} in the future it is resolving. 
The statement \abs{v = e.get} synchronizes on the future in \abs{e}. Once this future is resolved, the result stored in it is written into \abs{v}. Until the future is resolved, the process blocks the object.
The statement \abs{await g} suspends the current process until the guard becomes active.
A future guard \abs{e?} becomes active, once the future stored in \abs{e} becomes active. A boolean guard \abs{e} becomes active once the boolean expression \abs{e} evaluates to true.
Once the guard is active, the process \emph{may} be rescheduled, but if several process may be scheduled, one is chosen non-deterministically.
The \abs{await} and \abs{get} statements have an program-point identifier $i$. This identifier is used later by the specification to distinguish multiple suspension and synchronization points.

Methods are standard, but parameters cannot be reassigned.
We only consider \coreactor programs which fulfill the following constraints:
\begin{itemize}
\item They are data type checked.\footnote{We do not give a data type system: the system from~\cite{abs} is applicable for the classes and checking the main block is trivial.}
\item Each method has exactly one \abs{return} statement, which is the very last statement of the method body.
\item Each branch of the \abs{if} statement and the loop body of \abs{while} end in a \abs{skip} statement. 
\item Each variable name and field name and program-point identifier is program-wide unique. 
\end{itemize}

We drop empty \abs{else} branches and (if not needed) the program-point identifiers in examples.
We assume that every branch ends in an \abs{skip} statement.
\COMMENT{
\begin{example}
The code in Fig.~\ref{ex:ema} sets up an server \abs{ema} that computes the exponential moving average with $\alpha=0.5$ and a client \abs{c} sending values to it.
If the moving average drops below 0, the server interrupts the client. In the setup example, this happens once the value \xabs{-4} is send.

\begin{figure}
\begin{minipage}{0.45\textwidth}
\begin{abscode}
class Client (EMAServer server){
  Bool stopped = False;
  Unit trigger(){ 
      this.stopped = True; 
  }
  Unit compute(List<Int> values){
    List<Int> val = values;
    while(val != Nil){
      Int v = hd(val);
      Fut<Rat> f = server.ema(v);
      await f?;
      val = tl(val);
      if(this.stopped) {val = Nil;}
    }
  }
}
\end{abscode}
\end{minipage}
\begin{minipage}{0.55\textwidth}
\begin{abscode}[firstnumber=17]
class EMAServer (Client c){
  Rat val = 1; Rat a = 1/2;
  Rat ema(Int next){
    this.val = this.a*next+(1-this.a)*this.val;
    if(this.val < 0) {
      Fut<Unit> f = c!trigger();
      Unit unit = f.get;
    }
    return this.val;
  }
}
main{
  EMAServer ema = new EMAServer(c);
  Client c = new Client(ema);
  c!compute(Cons(2,Cons(3,Cons(-4,Cons(5,Nil)))));
}
\end{abscode}
\end{minipage}
\caption{Computing the Exponential Moving Average in \coreactor.}
\label{ex:ema}
\end{figure}

\end{example}
}

\subsection{Locally Abstract Semantics}
We use a locally abstract, globally concrete (LAGC) semantics~\cite{DinHJPT17} for \coreactor.
A LAGC semantics consists of two layers: A locally abstract (LA) layer for statements and methods, and a globally concrete (GC) layer for objects and systems. 
The LA layer is a denotational semantics that abstractly describes the behavior of a method in every possible context, while the GC layer is a operational semantics that concretize the 
abstract behavior of the processes in a concrete context.
LAGC semantics allow to analyze a method in isolation and to express possible context by concretizing the most abstract behavior bit by bit. 

The semantics of a method is a set of symbolic traces, which all together describe the behavior of the method in every possible context, i.e., for every possible object heap, call parameters and accessed futures.
Symbolic traces contain symbolic values and symbolic expressions, additionally to semantic values (a semantics value is e.g., an integer stored in a variable of \abs{Int} type).
Symbolic values have no operations defined on them, instead they act as placeholders. 
They are replaced by semantic values once the method is running and the object heap, call parameters etc. are known.

We first define the semantic values.
Given a program \prgm with $n$ object creations in its main block, the set $(\objname_i)_{i\leq n}$ is the set of object names. $\objname_i$ is the object name of the $i$th object creation.
\begin{definition}[Semantic Values]
The set of \emph{semantic values} contains the unit $\mathbf{unit}$, the boolean values $\mathbf{True},\mathbf{False}$, the natural numbers $n$, all object names and all future identities, including $\mathbf{no}$. 
Additionally, a sequence of semantic values is also a semantic value.
\end{definition}
Futures are identities, not containers. We say that a future contains a value, but this is established by the traces, not by some explicit state of the future.

\subsubsection{Semantics of Expressions}
Symbolic expressions contain symbolic values and semantic values, but while their structure mirrors the syntax of \coreactor-expressions, they do not contain any reference to state, i.e., no variables, fields or constants.
Instead, they contain symbolic values and symbolic fields. Symbolic fields act like symbolic values, but contain the name of the field they are abstracting and a counter. This is need to later substitute a concrete value in a trace up to the next point where the heap may change.
\begin{definition}[Symbolic Values and Expressions]
A symbolic expression \sexplang is defined by the below grammar, analogously to the grammar of Def.~\ref{def:syntax} above, with semantic values $v$, symbolic values $\symbolic{v}$ and symbolic fields $\symbolic{\xabs{this.f}}_i$ instead of variables, fields or constants.
\[\sexplang ::= \xabs{len}\xabs{(}\sexplang\xabs{)} \sep {\xabs{hd}}\xabs{(}\sexplang\xabs{)} \sep 
                \xabs{tl}\xabs{(}\sexplang\xabs{)} \sep {\xabs{Cons}}\xabs{(}\sexplang\xabs{,}\sexplang\xabs{)} \sep \sexplang \sim \sexplang \sep \xabs{!}\sexplang \sep \xabs{-}\sexplang \sep \symbolic{v} \sep v \sep \symbolic{\xabs{this.f}}_i\]
Where $i$ ranges over $\mathbb{N}$.
To clearly separate symbolic values and fields from other expressions, we always denote them underlined, e.g., $\symbolic{v}$ is a symbolic value and $\symbolic{v}+1$ is a symbolic expression.
In contrast, \xabs{1+1} is a syntactic expression and $2$ a semantic value.
\end{definition}
We stress that semantic values are a subset of symbolic expressions. Syntactic expressions and symbolic expressions share only their operators.

The local state of a method and the heap of a method are functions from variable names (resp. field names) to symbolic expressions.
\begin{definition}[Local State and Heap]
Given a class \classname, a heap \heap is a function from the fields of \classname to symbolic expressions.
Given a method \classname.\methodname, a local state \state is a map from the variables of \classname.\methodname to symbolic expressions.
A local state or heap is \emph{concrete} if all elements of its image are values. We call a pair of a local state \state and a heap \heap a \emph{object state} and write it \statepair{\state}{\heap}. If the local state is empty, we write \statepair{\cdot}{\heap}.

We denote the heap that maps every field \xabs{f} to the symbolic expression $\symbolic{\xabs{this.f}}_i$ and every parameter according to $\heap$ with $\heapid^i(\heap)$.
\end{definition}

We are now able to give a semantics to syntactic expressions, by evaluating them under a given local state and a heap to a symbolic expression.
\begin{definition}[Semantics of Expressions]
The evaluation $\anyeval{\expr}$ on semantic values (and operations on them) has its natural definition and we identify the operators of $\sim$ with their natural counterpart.
Let \expr be an expression and $\defstatepair$ an object state. The evaluation function $\eval{\expr}{\defstatepair}$ is defined in Fig.~\ref{fig:semexpr}.
The evaluation function is not defined if division by zero occurs or the head of an empty list is accessed.
\begin{figure}
\noindent\begin{minipage}{0.5\textwidth}
\begin{align*}
\eval{\xabs{len(e)}}{\statepair{\state}{\heap}} &= \cased{ \left|\eval{\expr}{\statepair{\state}{\heap}}\right| & \text{if \eval{\expr}{\statepair{\state}{\heap}} is a semantic value} \\ \xabs{len(}|\eval{\expr}{\statepair{\state}{\heap}}\xabs{)} & \text{if \eval{\expr}{\statepair{\state}{\heap}} is symbolic}}\\
\eval{\xabs{-e}}{\statepair{\state}{\heap}} &= \cased{ \anyeval{-\eval{\expr}{\statepair{\state}{\heap}}} & \text{if \eval{\expr}{\statepair{\state}{\heap}} is a semantic value} \\ \eval{\expr}{\statepair{\state}{\heap}} &\text{if \eval{\expr}{\statepair{\state}{\heap}} is symbolic}}\\
\eval{\xabs{!e}}{\statepair{\state}{\heap}} &= \cased{ \anyeval{!\eval{\expr}{\statepair{\state}{\heap}}} & \text{if \eval{\expr}{\statepair{\state}{\heap}} is a semantic value} \\ \eval{\expr}{\statepair{\state}{\heap}} &  \text{if \eval{\expr}{\statepair{\state}{\heap}} is symbolic}}
\end{align*}
\end{minipage}
\begin{minipage}{0.5\textwidth}
\begin{align*}
\eval{\xabs{this.f}}{\statepair{\state}{\heap}} &= \heap(\xabs{f}) \\
\eval{\xabs{v}}{\statepair{\state}{\heap}} &= \state(\xabs{v}) \\
\eval{\xabs{n}}{\statepair{\state}{\heap}} &= n \\
\eval{\xabs{Nil}}{\statepair{\state}{\heap}} &= \sequence{} \\
\eval{\xabs{True}}{\statepair{\state}{\heap}} &= \mathbf{True} \\
\eval{\xabs{False}}{\statepair{\state}{\heap}} &= \mathbf{False} \\
\eval{\xabs{Cons(}\expr\xabs{,}\expr')}{\statepair{\state}{\heap}} &= \sequence{\eval{\expr}{\statepair{\state}{\heap}}} \circ\eval{\expr'}{\statepair{\state}{\heap}}
\end{align*}
\end{minipage}
\begin{align*}
\eval{\xabs{hd(e)}}{\statepair{\state}{\heap}} &= \cased{ \xabs{hd(}\eval{\xabs{e}}{\statepair{\state}{\heap}}\xabs{)} & \text{if \eval{\expr}{\statepair{\state}{\heap}} is symbolic} \\ \eval{\xabs{e}}{\statepair{\state}{\heap}}\left[1\right] & \text{if \eval{\expr}{\statepair{\state}{\heap}} is a semantic value and not empty}}\\
\eval{\xabs{tl(e)}}{\statepair{\state}{\heap}} &= \cased{ \xabs{tl(}\eval{\xabs{e}}{\statepair{\state}{\heap}}\xabs{)} & \text{if \eval{\expr}{\statepair{\state}{\heap}} is symbolic} \\ \eval{\xabs{e}}{\statepair{\state}{\heap}}\left[2..\left|\eval{\xabs{e}}{\statepair{\state}{\heap}}\right|\right] & \text{if \eval{\expr}{\statepair{\state}{\heap}} is a semantic value and not empty}}\\
\eval{\expr \sim \expr'}{\statepair{\state}{\heap}} &= \cased{ \anyeval{\eval{\expr}{\statepair{\state}{\heap}} \sim \eval{\expr'}{\statepair{\state}{\heap}}} &\text{if \eval{\expr}{\statepair{\state}{\heap}}, \eval{\expr'}{\statepair{\state}{\heap}} are semantic and $\eval{\expr'}{\statepair{\state}{\heap}} \neq0$}\\ \eval{\expr}{\statepair{\state}{\heap}} \sim \eval{\expr'}{\statepair{\state}{\heap}} &\text{if either \eval{\expr}{\statepair{\state}{\heap}} or \eval{\expr'}{\statepair{\state}{\heap}} are symbolic}}
\end{align*}
\caption{Locally Abstract Semantics of Expressions.}
\label{fig:semexpr}
\end{figure}
\end{definition}

The semantics of expressions is well-defined, i.e., given a syntactic expression the semantics produce a symbolic expression.
Additionally, the semantics ensure that if a symbolic expression contains no symbolic value, then it is fully evaluated to a semantic value.
\begin{lemma}
Let \expr be an expression, \state a local state and \heap a heap, such that $\sexpr = \eval{\expr}{\statepair{\state}{\heap}}$ is defined
\begin{enumerate}
\item If \sexpr contains no symbolic value, then it is a semantic value. 
\item If \state and \heap are both concrete, then \sexpr is a semantic value.
\end{enumerate}
\end{lemma}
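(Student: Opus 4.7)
The plan is to prove both statements simultaneously by structural induction on the expression $\expr$, exploiting the fact that every compound clause in Fig.~\ref{fig:semexpr} has exactly two sub-cases: a \emph{semantic} branch that is taken precisely when all subevaluations are semantic values (and which itself produces a semantic value), and a \emph{symbolic} branch whose result keeps the top-level operator and therefore contains at least one symbolic subterm.

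For the base cases, the constants $\xabs{n}$, $\xabs{Nil}$, $\xabs{True}$, $\xabs{False}$ are directly mapped to semantic values, and both claims are immediate. For $\xabs{v}$ and $\xabs{this.f}$, part~(2) is immediate from the assumption that $\state$ and $\heap$ are concrete, since $\eval{\xabs{v}}{\defstatepair}=\state(\xabs{v})$ and $\eval{\xabs{this.f}}{\defstatepair}=\heap(\xabs{f})$. Part~(1) in these cases requires the well-formedness invariant that heaps and local states never store a compound symbolic expression built only out of semantic values (i.e., whenever $\state(\xabs{v})$ or $\heap(\xabs{f})$ contains no symbolic value, it already is a semantic value). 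This invariant holds for the initial heaps $\heapid^i(\heap)$ (whose fields are either mapped to a symbolic field $\symbolic{\xabs{this.f}}_i$ or to the values supplied by $\heap$) and will have to be preserved by the update operations of the operational layer.

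For the inductive step, consider each clause of Fig.~\ref{fig:semexpr} in turn. Take $\xabs{hd(e)}$ as a representative: by the induction hypothesis, if $\eval{\xabs{e}}{\defstatepair}$ contains no symbolic value then it is a semantic value, so the lower case of the rule fires and returns the first element of a sequence, which is again semantic. If $\eval{\xabs{e}}{\defstatepair}$ is symbolic, the result is $\xabs{hd(}\eval{\xabs{e}}{\defstatepair}\xabs{)}$, which inherits the symbolic content of the subexpression. The binary case $\expr \sim \expr'$ is analogous, with the extra proviso that the definedness assumption rules out the division-by-zero sub-case. For part~(2), concreteness propagates: every subexpression evaluates to a semantic value by the IH, hence the upper (semantic) branch of each clause always fires and yields a semantic value. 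For part~(1), argue contrapositively on the rule chosen: if $\sexpr$ contains no symbolic value, it cannot have been produced by the symbolic branch (which would leave a top-level $\xabs{len}$, $\xabs{hd}$, $\sim$, etc.\ with symbolic content visible in $\sexpr$); so the semantic branch was taken, forcing a semantic value.

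The only real obstacle is the variable and field base case for part~(1): the statement as given only goes through under the invariant described above. I would therefore either strengthen the lemma statement (assuming $\state$ and $\heap$ themselves respect this invariant, i.e., elements of their images contain no symbolic value only if they are semantic values) or discharge it as a global invariant maintained by the LAGC construction, so that all reachable $\defstatepair$ encountered during evaluation satisfy it. Once this is settled, the rest of the induction is a routine case analysis on the finitely many expression forms.
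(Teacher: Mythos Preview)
The paper states this lemma without proof, so there is nothing to compare against directly. Your approach---structural induction on $\expr$, with the two-branch case analysis for each compound operator---is the natural one and is what any proof of this lemma would have to do. The inductive step is handled correctly: in every clause of Fig.~\ref{fig:semexpr} the symbolic branch literally wraps a symbolic subterm under the outer operator, so if the result contains no symbolic value the semantic branch must have fired.

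Your identification of the one genuine subtlety is also correct. Part~(1) at the base cases $\xabs{v}$ and $\xabs{this.f}$ reduces to the claim that $\state(\xabs{v})$ (resp.\ $\heap(\xabs{f})$) is a semantic value whenever it contains no symbolic value or symbolic field; since the definition of local state and heap allows \emph{arbitrary} symbolic expressions in the image, this does not follow from the hypotheses as written. A state mapping $\xabs{v}$ to the unreduced symbolic expression $1+1$ would be a counterexample. The fix you propose---either add the invariant ``every element of the image of $\state$ and $\heap$ is a semantic value or contains a symbolic value/field'' to the hypotheses, or argue it as a global invariant of the LAGC construction---is exactly right. The invariant holds for the initial fully-symbolic state and for $\heapid^i(\heap)$, and is preserved by the update $\state[\xabs{v}\mapsto\eval{\expr}{\defstatepair}]$ precisely because of the lemma itself (applied under the strengthened hypothesis), so the mutual induction closes.
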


\begin{example}
Consider the expression \abs{this.f+i+hd(Cons(2,Nil))}, and the following states and heap for a class with no parameters and a field \abs{f} of \abs{Int} data type:
\\\noindent\scalebox{0.8}{\begin{minipage}{\textwidth}
\begin{align*}
\state &= \{\xabs{i} \mapsto 1\}\qquad\state' = \{\xabs{i} \mapsto\symbolic{i}\}\qquad\heap = \{\xabs{f} \mapsto 2\}\qquad \heap'=\heapid^1\\
\eval{\xabs{this.f+i+hd(Cons(2,Nil))}}{\statepair{\state}{\heap}} &= \eval{\xabs{this.f}}{\statepair{\state}{\heap}} + \eval{\xabs{i}}{\statepair{\state}{\heap}}  + \eval{\xabs{hd(Cons(2,Nil))}}{\statepair{\state}{\heap}} \\
&=2 + 1 + \left(\sequence{2}\circ\sequence{}\right)[1] = 5\\
\eval{\xabs{this.f+i+hd(Cons(2,Nil))}}{\statepair{\state}{\heap'}} &= \eval{\xabs{this.f}}{\statepair{\state}{\heap'}} + \eval{\xabs{i}}{\statepair{\state}{\heap'}}  + \eval{\xabs{hd(Cons(2,Nil))}}{\statepair{\state}{\heap'}} \\
&= \symbolic{\xabs{this.f}}_1 + 1 + \left(\sequence{2}\circ\sequence{}\right)[1] = \symbolic{\xabs{this.f}}_1 + 3\\
\eval{\xabs{this.f+i+hd(Cons(2,Nil))}}{\statepair{\state'}{\heap'}} &= \eval{\xabs{this.f}}{\statepair{\state'}{\heap'}} + \eval{\xabs{i}}{\statepair{\state'}{\heap'}}  + \eval{\xabs{hd(Cons(2,Nil))}}{\statepair{\state'}{\heap'}} \\
&= \symbolic{\xabs{this.f}}_1 + \symbolic{i} + \left(\sequence{2}\circ\sequence{}\right)[1] = \symbolic{\xabs{this.f}}_1 + \symbolic{i} + 2
\end{align*}
\end{minipage}
}
\end{example}

\subsubsection{Semantics of Statements and Methods}
The traces in the semantics of a method, contain not only states, but also events: markers for visible communication (and the event \noev for uniformity).
In the globally concrete semantics, the events are used to merge the local traces of a method first into the local trace of an object and then into a global trace of the whole system.
\begin{definition}[Events]
Fig.~\ref{fig:events} shows the syntax of events.
Each type of event is paired with a reaction event, which either models that the communication has had its effect on the other party, or that the process has regained control.
\begin{itemize}
\item
The invocation event $\invocev(\symbolic{\objname},\symbolic{\objname'},\symbolic{f},\methodname,\many{\sexpr})$ models a call from \objname  to $\symbolic{\objname'}$ on method \methodname. The future $\symbolic{f}$ is used and $\many{\sexpr}$ are the call parameters.
The future and the callee may be symbolic, because without global information it is not possible to know what future will be used, what object will be called and what object the local semantics are generated for.
\item 
The invocation reaction event $\invocrev(\symbolic{\objname},\symbolic{\future},\methodname,\many{\sexpr})$ is the callee view on the method call. The object \objname here is the callee, the caller is not visible to the callee.
\item 
The resolving event $\resolvev(\symbolic{\objname},\symbolic{\future},\methodname,\sexpr)$ models the termination of a process for future $\symbolic{\future}$ that computer method \methodname in object $\symbolic{\objname}$ and returns $\sexpr$.
\item 
The resolving reaction event $\resolvrev(\symbolic{\objname},\symbolic{\future}, \symbolic{\methodname},\sexpr, i)$ models the read of a \abs{get}$_i$ statement in object $\symbolic{\objname}$ on the future $\symbolic{\future}$, reading value $\sexpr$.
\item 
The condition synchronization event $\awaitev(\symbolic{\objname},\symbolic{\future},\expr, i)$ models that the process computing the future $\symbolic{\future}$ in object $\symbolic{\objname}$ on statement $\xabs{await e}_i$. Note that \expr is not symbolic, but the syntactic guard of the statement. 
\item
The condition synchronization reaction event $\awaitrev(\symbolic{\objname},\symbolic{\future},\expr, i)$ is analogous for continuing execution.
\item 
The suspension event $\suspev(\symbolic{\objname},\symbolic{\future},\sexpr, i)$ models that the process computing the future $\symbolic{\future}$ in object $\symbolic{\objname}$ on statement $\xabs{await e?}_i$, where $\sexpr$ is the future which is read.
\item 
The suspension reaction event is analogous to the condition synchronization reaction event.
\item 
Finally, \noev models a step without visible communication.
\end{itemize}
We say that $\invocev(\symbolic{\objname},\symbolic{\objname'},\symbolic{f},\methodname,\many{\sexpr})$ introduces \symbolic{f} and that $\resolvrev(\symbolic{\objname},\sexpr, \sexpr', i)$ introduces $\sexpr'$.
\begin{figure}
\begin{align*}
\event ::= \quad
&\invocev(\symbolic{\objname},\symbolic{\objname'},\symbolic{f},\methodname,\many{\sexpr}) \sep 
\invocrev(\symbolic{\objname},\symbolic{\future},\methodname,\many{\sexpr}) \sep
\resolvev(\symbolic{\objname},\symbolic{\future},\methodname,\sexpr) \sep
\resolvrev(\symbolic{\objname},\symbolic{\future}, \symbolic{\methodname},\sexpr, i)\\
\sep&\awaitev(\symbolic{\objname},\symbolic{\future},\expr, i) \sep
\awaitrev(\symbolic{\objname},\symbolic{\future},\expr, i) \sep
\suspev(\symbolic{\objname},\symbolic{\future},\sexpr, i) \sep
\susprev(\symbolic{\objname},\symbolic{\future},\sexpr, i) \sep
\noev
\end{align*}
\caption{Syntax of Events.}
\label{fig:events}
\end{figure}
\end{definition}
In the globally concrete semantics of a program, the object parameters of the events, as well as the computed future, are not symbolic. They are, however, symbolic when we analyze a method in isolation.

Local traces consist of two parts. (1) A selection condition, a set of symbolic expressions that express when this trace can be selected to execute the next step and (2) a history, a sequence of events and local states.
\begin{definition}[Local Traces and Chops]
A local trace $\trace$ has the form $\selection \triangleright \history$, where $\selection$ is a set of symbolic expressions, called \emph{selection condition}, and $\history$ is a non-empty sequence, called \emph{history}, such that 
\begin{itemize}
\item $|\history|$ is odd and or for all odd $i\leq|\history|$, $\history[i]$ is an object state 
\item For all even $i\leq|\history|$, $\history[i]$ is an event or the special symbol \marker
\end{itemize}
We use two chopping operators. The standard $\chop$ merges two histories if they share the last and first object state.
The extended chop $\megachop$ operates like $\chop$ if the standard chop is defined. If only the heaps are equal, it concatenates the histories with a marker \marker.
Both are not defined if the heaps are different. Both chops are lifted to traces by chopping the histories and joining the selection conditions.
\begin{align*}
\history\circ\sequence{\defstatepair} \chop \sequence{\statepair{\state'}{\heap'}}\circ\history' ~=~& \cased{\history \circ \sequence{\defstatepair} \circ \history' &\text{ if }\defstatepair = \statepair{\state'}{\heap'}\\ \text{undefined} &\text{ otherwise }}\\ 
\history\circ\sequence{\defstatepair} \megachop \sequence{\statepair{\state'}{\heap'}}\circ\history' ~=~& \cased{\history \circ \sequence{\defstatepair} \circ \history' &\text{ if }\defstatepair = \statepair{\state'}{\heap'}\\ 
\history \circ \sequence{\defstatepair,\marker,\statepair{\state'}{\heap'}} \circ \history' &\text{ if }\heap = \heap',\state \neq \state'\\
 \text{undefined} &\text{ otherwise }}\\
\selection \rhd \history \chop \selection' \rhd \history' ~=~& \selection \cup \selection' \rhd \history\chop\history'\\
\selection \rhd \history \megachop \selection' \rhd \history' ~=~& \selection \cup \selection' \rhd \history\megachop\history'
\end{align*}
\end{definition}

The marker models that at this point in the trace, another process may run, but \marker is no event.
Before defining the semantics of statements and methods, we require some technical definitions.
\begin{definition}[Freshness and Symbolic State]
A symbolic value or a heap counter $i$ is fresh, if it occurs nowhere else\footnote{To be more precise: nowhere in the state of the system, which we define in the next section. We refrain from introducing local and global freshness, as it offers no insights into LAGC semantics.}.
A heap (resp. local state) is symbolic if it maps some field (resp. variable) to a symbolic expression that is not a semantic value.
\end{definition}

\begin{definition}[Local Semantics]
The semantics of methods and statements is defined by a function $\defeval{\cdot}$ where $\objname$ is the object name, $\future$ the future the method is resolving, \methodname the method name and \defstatepair the current object state.
Future, object name and state may be symbolic.
The semantic of a method \methodname with method body \statement is
\[\defeval{\methodname} = \left\{\emptyset \triangleright \sequence{\defstatepair,\invocrev(\objname,\future,\methodname,\many{\expr}),\defstatepair} \chop \trace \sep \trace \in \defeval{\statement}\right\} \]
Where $\many{\expr}$ is extracted from the parameter names in $\state$. E.g., if the signature of a method is \abs{Int m(Int a, Rat b)} then $\many{\expr} = \sequence{\state(\xabs{a}),\state(\xabs{b})}$.
Fig.~\ref{fig:lasem} shows the rules for statements. We stress that if the semantics of an expression is not defined, then so is the semantics of a statement.
\begin{figure}
\scalebox{0.7}{
\begin{minipage}{\textwidth}
\begin{align*}
\defeval{\xabs{skip}} &= \left\{ \emptyset \triangleright \sequence{\defstatepair} \right\}\\
\defeval{\xabs{skip;s}} &= \defeval{\xabs{s}}\\
\defeval{\xabs{v = e;s}} &= \left\{ \emptyset \triangleright \sequence{\defstatepair,\noev,\statepair{\state'}{\heap}} \chop \trace \middle| \trace \in \eval{\xabs{s}}{\objname,\destiny,\methodname,\statepair{\state'}{\heap}}\right\}\text{ where $\state' = \state[\xabs{v} \mapsto \eval{\xabs{e}}{\defstatepair}]$}\\
\defeval{\xabs{D v = e;s}} &= \defeval{\xabs{v = e;s}}\\
\defeval{\xabs{this.f = e;s}} &= \left\{ \emptyset \triangleright \sequence{\defstatepair,\noev,\statepair{\state}{\heap'}} \chop \trace \middle| \trace \in \eval{\xabs{s}}{\objname,\destiny,\methodname,\statepair{\state}{\heap'}}\right\}\text{ where $\heap' = \heap[\xabs{f} \mapsto \eval{\xabs{e}}{\defstatepair}]$}\\
\defeval{\xabs{return e}} &= \left\{ \emptyset \triangleright \sequence{\defstatepair,\resolvev \left(\objname,\destiny,\methodname,\eval{\xabs{e}}{\statepair{\state}{\heap}}\right),\defstatepair} \right\}\\
\defeval{\xabs{if(e)}\{\xabs{s}\}\xabs{else}\{\xabs{s'}\}\xabs{;s''}} &= ~ ~ ~
\left\{ \{\eval{\xabs{e}}{\defstatepair}\} \triangleright \sequence{\defstatepair}\chop\trace\middle| \trace \in \eval{\xabs{s;s''}}{\objname,\destiny,\methodname,\defstatepair} \right\}\\
&~\cup~ \left\{ \{\eval{\xabs{!e}}{\defstatepair}\} \triangleright \sequence{\defstatepair}\chop\trace\middle| \trace \in \eval{\xabs{s';s''}}{\objname,\destiny,\methodname,\defstatepair} \right\}
\end{align*}
\begin{align*}
\defeval{\xabs{v = e.get}_i\xabs{;s}} &= \\ &\hspace{-25mm}\left\{  \emptyset \triangleright 
\sequence{\defstatepair,\resolvrev\left(\objname,\eval{\xabs{e}}{\defstatepair},\symbolic{v},i\right) ,\statepair{\state'}{\heap}}\chop \trace 
\middle| \trace \in \eval{\xabs{s}}{\objname,\destiny,\methodname,\statepair{\state'}{\heap}}\right\} \text{where  $\state' = \state[\xabs{v} \mapsto \symbolic{v}]$} \\
\defeval{\xabs{v = v'!n(}\many{\xabs{e}}\xabs{);s}} &= \\ &\hspace{-25mm}\left\{  \emptyset \triangleright 
\sequence{\defstatepair,\invocev\left(\objname,\eval{\xabs{v}}{\defstatepair},\symbolic{v},\xabs{n},\many{\eval{\xabs{e}}{\defstatepair}}\right) ,\statepair{\state'}{\heap}}\chop \trace 
\middle| \trace \in \eval{\xabs{s}}{\objname,\destiny,\methodname,\statepair{\state'}{\heap}}\right\} \text{where $\state' = \state[\xabs{v} \mapsto \symbolic{v}]$} \\
\defeval{\xabs{await e?}_i\xabs{;s}} &= \\ &\hspace{-25mm}\left\{  \emptyset \triangleright 
\sequence{\defstatepair,\suspev\left(\objname,\destiny,\eval{\xabs{e}}{\defstatepair},i\right),\defstatepair,\marker,\statepair{\state}{\heapid^j(\heap)},\susprev\left(\objname,\destiny,\eval{\xabs{e}}{\defstatepair},i\right),\statepair{\state}{\heapid^i(\heap)}}\chop \trace 
\middle| \trace \in \eval{\xabs{s}}{\objname,\destiny,\methodname,\statepair{\state}{\heapid^j(\heap)}}\right\}\\
\defeval{\xabs{await e}_i\xabs{;s}} &= \\ &\hspace{-25mm}\left\{  \emptyset \triangleright 
\sequence{\defstatepair,\awaitev\left(\objname,\destiny,\xabs{e},i\right),\defstatepair,\marker,\statepair{\state}{\heapid^j(\heap)},\awaitrev\left(\objname,\destiny,\exp,i\right),\statepair{\state}{\heapid^i(\heap)}}\chop \trace 
\middle| \trace \in \eval{\xabs{s}}{\objname,\destiny,\methodname,\statepair{\state}{\heapid^j(\heap)}}\right\}\\
\defeval{\xabs{while(e)}\{\xabs{s}\}\xabs{;s'}} &= \defeval{\xabs{if(e)}\{\xabs{s};\xabs{while(e)}\{\xabs{s}\}\xabs{;skip}\}\xabs{else}\{\xabs{skip}\}\xabs{;s'}} 
\end{align*}
\end{minipage}
}
\caption{Locally Abstract Semantics of Statements, where the symbolic $\symbolic{v}$ value and the number $j$ are fresh in their rules.}
\label{fig:lasem}
\end{figure}
\begin{itemize}
\item
The first rule for \abs{skip} generates a trace that can always be selected and consists solely of the current state. 
If \abs{skip} is followed by some statement \abs{s}, all traces of its semantic are used.
\item 
The rule for assignment of variables updates the local state, adds a \noev event and is followed by the traces of the following statement, generated with the updated store.
\item 
The rule for variable declaration just skips over the declaration and only evaluates the initialization. The local state has all local variables in its domain, the type system ensures that the variables are not used before declaration.
\item 
The rule for assignment of fields is analogous, but updates the heap instead of the local state.
\item 
The rule for branching adds the guard to the selection condition of the first branch and its negation to the selection condition of the second.
\item 
The rule for \abs{get} is similar to the variable assignment, but gets a fresh symbolic value and stores it in the local state. As the event, and resolving reaction event is added, which stores the accessed future and the fresh symbolic value.
Note that the accessed future may also be symbolic.
\item 
The rule for method calls is analogous, but uses a fresh future for the call instead of a fresh read value. The added event is an invocation event with the evaluated parameters.
\item 
The two rules for \abs{await} are the only ones that add a marker \marker. In both rules, the next statement is evaluated with a fresh symbolic heap.
\item 
Finally, \abs{while} is defined recursively. 
This turns the used equations into a fixpoint definition, but we refrain from formally introducing fixpoints, because we only analyze terminating systems with finite traces which can be obtained by an unbounded, but finite number of applications of the \abs{while} rule. The formalism needed behind infinite traces can be found in~\cite{BubelDHN15}.
\end{itemize}
\end{definition}

\begin{example}\label{ex:lasem1}
Consider the following method .
\\\noindent\begin{minipage}{0.49\textwidth}
\begin{abscode}
Int m(Int p){
    this.f = this.f + 1; 
    Fut<Int> fut = o!n();
    Int i = fut.get;$_1$
    if(this.f > i){
        await this.f < 0;$_2$ 
\end{abscode}
\end{minipage}
\begin{minipage}{0.49\textwidth}
\begin{abscode}[firstnumber=7]
        if(this.f < p){
            this.f = 0;
        }
    }
    return this.f;
}
\end{abscode}
\end{minipage}

\noindent Its semantics for a call with parameter 5 in a class that has object $\objname'$ stored in its parameter \abs{o} are shown in Fig.~\ref{fig:exmsem}.
For readability's sake we omitted the parameters \abs{p} and \abs{o} from the object state, as parameters cannot be reassigned.

\begin{figure}
\noindent\scalebox{0.65}{
\begin{minipage}{\textwidth}
\begin{align*}
\trace_1 = &\{
    \symbolic{\xabs{this.f}}_1 > \symbolic{i}, \symbolic{\xabs{this.f}}_2 < 5\} \triangleright \\
    &\sequence{
    \statepair{\xabs{i} \mapsto 0, \xabs{fut} \mapsto \mathbf{no}}{\xabs{f} \mapsto \symbolic{\xabs{this.f}}_1},
    \invocrev(\objname,\destiny,\methodname,\sequence{5}),
    \statepair{\xabs{i} \mapsto 0, \xabs{fut} \mapsto \mathbf{no}}{\xabs{f} \mapsto \symbolic{\xabs{this.f}}_1},
    \noev,
    \statepair{\xabs{i} \mapsto 0, \xabs{fut} \mapsto \mathbf{no}}{\xabs{f} \mapsto \symbolic{\xabs{this.f}}_1+1},
    \invocev(\objname, \objname', \symbolic{f}, \xabs{n}, \sequence{})
    }\\
\circ &
\sequence{
    \statepair{\xabs{i} \mapsto 0, \xabs{fut} \mapsto \symbolic{f}}{\xabs{f} \mapsto \symbolic{\xabs{this.f}}_1+1},
    \resolvrev(\objname,\symbolic{f},\symbolic{i},1),
    \statepair{\xabs{i} \mapsto \symbolic{i}, \xabs{fut} \mapsto \symbolic{f}}{\xabs{f} \mapsto \symbolic{\xabs{this.f}}_1+1},
    \suspev,
    \statepair{\xabs{i} \mapsto \symbolic{i}, \xabs{fut} \mapsto \symbolic{f}}{\xabs{f} \mapsto \symbolic{\xabs{this.f}}_1+1},
    \marker
    }\\
\circ &
\sequence{
    \statepair{\xabs{i} \mapsto \symbolic{i}, \xabs{fut} \mapsto \symbolic{f}}{\xabs{f} \mapsto \symbolic{\xabs{this.f}}_2},
    \susprev,
    \statepair{\xabs{i} \mapsto \symbolic{i}, \xabs{fut} \mapsto \symbolic{f}}{\xabs{f} \mapsto \symbolic{\xabs{this.f}}_2},
    \noev,
    \statepair{\xabs{i} \mapsto \symbolic{i}, \xabs{fut} \mapsto \symbolic{f}}{\xabs{f} \mapsto 0},
    \resolvev(\objname,\destiny,\methodname,\symbolic{\xabs{this.f}}_2),
    \statepair{\xabs{i} \mapsto \symbolic{i}, \xabs{fut} \mapsto \symbolic{f}}{\xabs{f} \mapsto 0}
    }\\[5mm]
\trace_2 &= \{\symbolic{\xabs{this.f}}_1 > \symbolic{i}, \symbolic{\xabs{this.f}}_2 \geq 5\} \triangleright \\
&\sequence{
    \statepair{\xabs{i} \mapsto 0, \xabs{fut} \mapsto \mathbf{no}}{\xabs{f} \mapsto \symbolic{\xabs{this.f}}_1},
    \invocrev(\objname,\destiny,\methodname,\sequence{5}),
    \statepair{\xabs{i} \mapsto 0, \xabs{fut} \mapsto \mathbf{no}}{\xabs{f} \mapsto \symbolic{\xabs{this.f}}_1},
    \noev,
    \statepair{\xabs{i} \mapsto 0, \xabs{fut} \mapsto \mathbf{no}}{\xabs{f} \mapsto \symbolic{\xabs{this.f}}_1+1},
    \invocev(\objname, \objname', \symbolic{f}, \xabs{n}, \sequence{})
    }\\
\circ &
\sequence{
    \statepair{\xabs{i} \mapsto 0, \xabs{fut} \mapsto \symbolic{f}}{\xabs{f} \mapsto \symbolic{\xabs{this.f}}_1+1},
    \resolvrev(\objname,\symbolic{f},\symbolic{i},1),
    \statepair{\xabs{i} \mapsto \symbolic{i}, \xabs{fut} \mapsto \symbolic{f}}{\xabs{f} \mapsto \symbolic{\xabs{this.f}}_1+1}
    \suspev,
    \statepair{\xabs{i} \mapsto \symbolic{i}, \xabs{fut} \mapsto \symbolic{f}}{\xabs{f} \mapsto \symbolic{\xabs{this.f}}_1+1},
    \marker
    }\\
\circ &
\sequence{
    \statepair{\xabs{i} \mapsto \symbolic{i}, \xabs{fut} \mapsto \symbolic{f}}{\xabs{f} \mapsto \symbolic{\xabs{this.f}}_2},
    \susprev,
    \statepair{\xabs{i} \mapsto \symbolic{i}, \xabs{fut} \mapsto \symbolic{f}}{\xabs{f} \mapsto \symbolic{\xabs{this.f}}_2},
    \resolvev(\objname,\destiny,\methodname,\symbolic{\xabs{this.f}}_2),
    \statepair{\xabs{i} \mapsto \symbolic{i}, \xabs{fut} \mapsto \symbolic{f}}{\xabs{f} \mapsto \symbolic{\xabs{this.f}}_2}
    }\\[5mm]
\trace_3 &= \{\symbolic{\xabs{this.f}}_1 \leq \symbolic{i}\} \triangleright \\&\sequence{
    \statepair{\xabs{i} \mapsto 0, \xabs{fut} \mapsto \mathbf{no}}{\xabs{f} \mapsto \symbolic{\xabs{this.f}}_1},
    \invocrev(\objname,\destiny,\methodname,\sequence{5}),
    \statepair{\xabs{i} \mapsto 0, \xabs{fut} \mapsto \mathbf{no}}{\xabs{f} \mapsto \symbolic{\xabs{this.f}}_1},
    \noev,
    \statepair{\xabs{i} \mapsto 0, \xabs{fut} \mapsto \mathbf{no}}{\xabs{f} \mapsto \symbolic{\xabs{this.f}}_1+1},
    \invocev(\objname, \objname', \symbolic{f}, \xabs{n}, \sequence{}),
    }\\
\circ &
\sequence{
    \statepair{\xabs{i} \mapsto 0, \xabs{fut} \mapsto \symbolic{f}}{\xabs{f} \mapsto \symbolic{\xabs{this.f}}_1+1},
    \resolvrev(\objname,\symbolic{f},\symbolic{i},1),
    \statepair{\xabs{i} \mapsto \symbolic{i}, \xabs{fut} \mapsto \symbolic{f}}{\xabs{f} \mapsto \symbolic{\xabs{this.f}}_1+1},
    \resolvev(\objname,\destiny,\methodname,\symbolic{\xabs{this.f}}_1+1),
    \statepair{\xabs{i} \mapsto \symbolic{i}, \xabs{fut} \mapsto \symbolic{f}}{\xabs{f} \mapsto \symbolic{\xabs{this.f}}_1+1}
    }
\end{align*}
\end{minipage}}
\caption{Example Semantics.}
\label{fig:exmsem}
\end{figure}
\end{example}

\begin{definition}[Applying Heaps]
Let $\heap$ be a heap. We write $\history\heap$ for the history that results from applying \heap, i.e., replacing all symbolic fields $\symbolic{\xabs{this.f}}_i$ with $\heap(\xabs{f})$ up to the first $\marker$.
Given a trace $\trace$, we apply heaps with $\trace\heap = \selection\heap \triangleright \history\heap$.
The selection condition set $\selection\heap$ only substitutes for those symbolic fields $\symbolic{\xabs{this.v}}_i$ which were substituted in $\history\heap$.
We assume that resulting semantic values are evaluated directly after substitution.
\end{definition}

\begin{example}\label{ex:lasem2}
Consider the trace $\trace_1$ from Ex.~\ref{ex:lasem1} and the heap $\heap = \{\xabs{f} \mapsto 2 \}$.
\\\noindent\scalebox{0.65}{
\begin{minipage}{\textwidth}
\begin{align*}
\trace_1\heap &= 
    \{2 > \symbolic{i}, \symbolic{\xabs{this.f}}_2 < 5\} \triangleright \\&\sequence{
    \statepair{\xabs{i} \mapsto 0, \xabs{fut} \mapsto \mathbf{no}}{\xabs{f} \mapsto 2},
    \invocrev(\objname,\destiny,\methodname,\sequence{5}),
    \statepair{\xabs{i} \mapsto 0, \xabs{fut} \mapsto \mathbf{no}}{\xabs{f} \mapsto 2},
    \noev,
    \statepair{\xabs{i} \mapsto 0, \xabs{fut} \mapsto \mathbf{no}}{\xabs{f} \mapsto 3},
    \invocev(\objname, \objname', \symbolic{f}, \xabs{n}, \sequence{})
    }\\
\circ &
\sequence{
    \statepair{\xabs{i} \mapsto 0, \xabs{fut} \mapsto \symbolic{f}}{\xabs{f} \mapsto 3},
    \resolvrev(\objname,\symbolic{f},\symbolic{i},1),
    \statepair{\xabs{i} \mapsto \symbolic{i}, \xabs{fut} \mapsto \symbolic{f}}{\xabs{f} \mapsto 3},
    \suspev,
    \statepair{\xabs{i} \mapsto \symbolic{i}, \xabs{fut} \mapsto \symbolic{f}}{\xabs{f} \mapsto 3},
    \marker
    }\\
\circ &
\sequence{
    \statepair{\xabs{i} \mapsto \symbolic{i}, \xabs{fut} \mapsto \symbolic{f}}{\xabs{f} \mapsto \symbolic{\xabs{this.f}}_2},
    \susprev,
    \statepair{\xabs{i} \mapsto \symbolic{i}, \xabs{fut} \mapsto \symbolic{f}}{\xabs{f} \mapsto \symbolic{\xabs{this.f}}_2},
    \noev,
    \statepair{\xabs{i} \mapsto \symbolic{i}, \xabs{fut} \mapsto \symbolic{f}}{\xabs{f} \mapsto 0},
    \resolvev(\objname,\destiny,\methodname,\symbolic{\xabs{this.f}}_2),
    \statepair{\xabs{i} \mapsto \symbolic{i}, \xabs{fut} \mapsto \symbolic{f}}{\xabs{f} \mapsto 0}
    }\\
\end{align*}
\end{minipage}
}

\end{example}

\subsection{Globally Concrete Semantics}
This section gives the globally concrete semantics of \coreactor. While the locally abstract semantics are a denotational semantics, the GC semantics are operational semantics, where a reduction function is used
to compute the next step in the computation. \coreactor has a two-layered GC semantics: there is a global semantics for objects and a global semantics for the whole system.
The two semantics are not independent: the global system-semantics is defined using the object-semantics and ensures that the communication steps that the object makes are correct with respect to the whole system.

\subsubsection{Semantics of Objects}
\begin{definition}[Object]
An object $\mathbf{O}$ has the form $\big(\processpool,\trace,\process\big)_\objname$, where \objname is the object name, 
\process is the \emph{active process}: a set of traces which may continue execution.
\processpool is the \emph{process pool}, a set of processes, i.e., a set of sets of traces.
One of the process may be chosen for continuation of execution once no process is active.
Finally, the trace $\trace$ is the \emph{object trace}. This is a local trace that models the execution of the object up to this point in time. 
\end{definition}

Initially, all object are initialized with an empty process pool, no active process and an object trace that has no local state and a heap initializing all fields to (semantic) default values.
The object called by the main block has a non empty process pool, with a single element: the semantics of the called method for some concrete future $f$ and an accordingly instantiated local state.
\begin{definition}[Initial Object]\label{def:iobj}
Let $\xabs{v!m(}\many{\expr}\xabs{)}$ be the initial call from the main block. The first initial object below is the initial object for all other objects, the second initial object for the called object.
\begin{align*}
&\left(\emptyset,\emptyset \triangleright \sequence{\statepair{\cdot}{\heap_i}}, \emptyset\right)_{\objname_i}\\
&\left(\left\{\eval{\xabs{m}}{\objname,\destiny,\xabs{m},\statepair{\state}{\heap_i}}\right\},\emptyset \triangleright \sequence{\statepair{\state}{\heap_i}}, \emptyset\right)_{\objname_i}
\end{align*}
Where $\heap_i$ maps all fields to the default value of their data type ({\normalfont$\mathbf{unit},0,\textbf{False},\sequence{},\mathbf{no}$}) and the parameters according to the parameters of the object creation of $\objname_i$.
The local state of the only process $\state$ is initialized according to the parameters $\many{\xabs{e}}$ of the initial call.
\end{definition}

The GC semantics of objects are the point where the locally abstract traces are executed and concretized.
This is done by \emph{agreement}: the traces in a process agree on how to continue execution by generating a local trace \emph{without symbolic values}, so the object can do one execution step.
Informally, agreement is established as follows: 
\begin{enumerate}
\item
For each trace in the process, a candidate is generated. 
The history of the candidate are the first three elements of the history of the trace.
The selection condition of the candidate is the set of all expressions in the selection condition of the trace, which have either no symbolic values or only a symbolic value introduced by the event in the history of the candidate.
The heap of the current state is applied upon this candidate trace.
If a trace cannot generate a candidate, agreement fails. This is the case if the candidate traces after application of the heap contains a value that is not introduced by its event.
\item
In the next step, agreement is established. The processes agree on a candidate, if its symbolic value can be substituted such that the candidate is fully concrete and the following holds.
For each candidate, the symbolic value is substituted by some semantic value and the selection condition is evaluated. A candidate is agreed on, if it its the (concretized) candidate of all processes whose selection condition holds.
Then, the continuation process is computed: it is the suffix of all traces whose candidate was selected and who are still selectable after the symbolic value is substituted.
\end{enumerate}
Fig.~\ref{fig:agree} gives an informal overview how a process agrees on a trace: 
Step 1 generates the candidates. The sets $\{\sexpr_1,\dots,\sexpr_{k'}\}$ are those expressions in the selection conditions, where the only symbolic value is occurring also in the candidates history. Step 2 applies the heap. Step 3 ensures that all selectable candidates are equal. Continuation is not pictured, but in this example only $\trace$ and $\trace'$ are used, $\trace''$ is discarded because its candidate is not selected.
\begin{figure}
\scalebox{0.8}{
\begin{minipage}{0.3\textwidth}
\casedr{
\{\sexpr_1,\dots,\sexpr_k\} &\triangleright \sequence{\state_1,\event,\state_2}\chop\trace \\
\{\sexpr'_1,\dots,\sexpr'_l\} &\triangleright \sequence{\state'_1,\event',\state'_2}\chop\trace' \\
\{\sexpr''_1,\dots,\sexpr''_m\} &\triangleright \sequence{\state''_1,\event'',\state''_2}\chop\trace''
}
\end{minipage}}
\hspace{1cm}$\xrightarrow{Step 1}$
\scalebox{0.8}{
\begin{minipage}{0.3\textwidth}
\casedr{
\{\sexpr_1,\dots,\sexpr_{k'}\}\heap &\triangleright \sequence{\state_1,\event,\state_2}\heap \\
\{\sexpr'_1,\dots,\sexpr'_{l'}\}\heap &\triangleright \sequence{\state'_1,\event',\state'_2}\heap \\
\{\sexpr''_1,\dots,\sexpr''_{m'}\}\heap &\triangleright \sequence{\state''_1,\event'',\state''_2}\heap
}
\end{minipage}}
\hspace{1cm}$\xrightarrow{Step 2}$
\scalebox{0.8}{
\begin{minipage}{0.3\textwidth}
\casedr{
\{\mathbf{True}\} &\triangleright \sequence{\state_1,\event,\state_2}\heap[\symbolic{v} \mapsto v] \\
\{\mathbf{True\}} &\triangleright \sequence{\state'_1,\event',\state'_2}\heap[\symbolic{v} \mapsto v] \\
\{\mathbf{False}\} &\triangleright \sequence{\state''_1,\event'',\state''_2}\heap[\symbolic{v} \mapsto v]
}
\end{minipage}}
\hspace{1cm}$\xrightarrow{Step 3} \emptyset  \triangleright \sequence{\state_1,\event,\state_2}\heap[\symbolic{v} \mapsto v]$

\caption{Structure of Agreement.}
\label{fig:agree}
\end{figure}
Before we give an example, we first introduce the required technical definitions.
\begin{definition}
Let 
\heap be a heap.
A trace $\trace = \selection \triangleright \history$ is \emph{selectable} if the natural evaluation of all expressions in the selection condition is \textbf{True}:
\[\forall \sexpr \in \selection.~\eval{\sexpr}{} = \mathbf{True}\]

The \emph{\heap-selection candidate} of a trace $\trace = \selection \triangleright \history$ is a trace $\trace_C = \selection_C \triangleright \history_C$
such that 
\begin{align*}
\history\heap &= \history_C \chop \history' \text{ for some $\history'$ and $|\history_C| = 3$.}\\
&\text{All symbolic values in $\history_C$ are introduced within.}\\
\selection_C &= \{\sexpr \in \selection\heap \mid \text{\sexpr contains at most one symbolic value, the one introduced in $\history_C$} \}
\end{align*}
\end{definition}
\begin{example}\label{ex:lasem3}
In the example~\ref{ex:lasem2}, $\trace_1, \trace_2,\trace_3$ have the following $\{\xabs{f} \mapsto 2 \}$-selection candidates:
\begin{align*}
    \emptyset \triangleright &\sequence{
    \statepair{\xabs{i} \mapsto 0, \xabs{fut} \mapsto \mathbf{no}}{\xabs{f} \mapsto 2,\xabs{o} \mapsto \objname'},
    \invocrev(\objname,\destiny,\methodname,\sequence{5}),
    \statepair{\xabs{i} \mapsto 0, \xabs{fut} \mapsto \mathbf{no}}{\xabs{f} \mapsto 2,\xabs{o} \mapsto \objname'}}\\
    \emptyset \triangleright &\sequence{
    \statepair{\xabs{i} \mapsto 0, \xabs{fut} \mapsto \mathbf{no}}{\xabs{f} \mapsto 2,\xabs{o} \mapsto \objname'},
    \invocrev(\objname,\destiny,\methodname,\sequence{5}),
    \statepair{\xabs{i} \mapsto 0, \xabs{fut} \mapsto \mathbf{no}}{\xabs{f} \mapsto 2,\xabs{o} \mapsto \objname'}}\\
    \emptyset \triangleright &\sequence{
    \statepair{\xabs{i} \mapsto 0, \xabs{fut} \mapsto \mathbf{no}}{\xabs{f} \mapsto 2,\xabs{o} \mapsto \objname'},
    \invocrev(\objname,\destiny,\methodname,\sequence{5}),
    \statepair{\xabs{i} \mapsto 0, \xabs{fut} \mapsto \mathbf{no}}{\xabs{f} \mapsto 2,\xabs{o} \mapsto \objname'}}
\end{align*}
\end{example}

Agreement generates not only the agreed upon trace and the continuation process, but also outputs the event within the candidate. 
This event is used to communicate the eventual choice for the symbolic value needed for agreement -- the global semantics will not execute a step, if the agreement relies on an event that is not possible.
E.g., if the process agrees to substitute $5$ for the read value of a future $f$, but the future was resolved with $0$.
\begin{definition}\label{def:agree}
$\process$ $\heap$-agrees on a trace $\trace_F$ with continuation $\process_F$ if the following holds.
\begin{itemize}
\item If the selection candidates have a symbolic value \symbolic{v} then there is a concrete value $v$ such that for every \heap-selection candidate $\trace$, the following holds: If $\trace[\symbolic{v} \mapsto v]$ is selectable, then it is equal to $\trace_F$.
We say that \trace is selected under $[\symbolic{v} \mapsto v]$.
\item Let $\overline{\process}$ be the set of traces, whose \heap-selection candidates are selected under their resp. $[\symbolic{v} \mapsto v]$.
\[\process_F = \{\trace_C \mid \exists \trace' \in \overline{\process}.~\trace' = ((\trace_F\megachop\trace_C)[\symbolic{v} \mapsto v])\rho\}\]
\end{itemize}
We say that $\trace_C$ is the continuation of $\trace'$ and that $\process$ $\heap$-agrees on a trace $\trace_F$ with continuation $\process_F$ under event $\trace_F[2]$.
\end{definition}

\begin{example}\label{ex:gsem1}
The above three traces $\{\xabs{f} \mapsto 2\}$-agree on the following trace under $\invocrev(\objname,\destiny,\methodname,\sequence{5})$.
\begin{align*}
\hat{\trace}  =  \emptyset \triangleright &\sequence{
    \statepair{\xabs{i} \mapsto 0, \xabs{fut} \mapsto \mathbf{no}}{\xabs{f} \mapsto 2,\xabs{o} \mapsto \objname'},
    \invocrev(\objname,\destiny,\methodname,\sequence{5}),
    \statepair{\xabs{i} \mapsto 0, \xabs{fut} \mapsto \mathbf{no}}{\xabs{f} \mapsto 2,\xabs{o} \mapsto \objname'}}
\end{align*}
The continuation is:
\\\noindent\scalebox{0.65}{
\begin{minipage}{\textwidth}
\begin{align*}
\trace_1' = \{
    2 > \symbolic{i}, \symbolic{\xabs{this.f}}_2 < 5\} \triangleright &\sequence{
    \statepair{\xabs{i} \mapsto 0, \xabs{fut} \mapsto \mathbf{no}}{\xabs{f} \mapsto 2,\xabs{o} \mapsto \objname'},
    \noev,
    \statepair{\xabs{i} \mapsto 0, \xabs{fut} \mapsto \mathbf{no}}{\xabs{f} \mapsto 3,\xabs{o} \mapsto \objname'},
    \invocev(\objname, \objname', \symbolic{f}, \xabs{n}, \sequence{})
    }\\
\circ &
\sequence{
    \statepair{\xabs{i} \mapsto 0, \xabs{fut} \mapsto \symbolic{f}}{\xabs{f} \mapsto 3,\xabs{o} \mapsto \objname'},
    \resolvrev(\objname,\symbolic{f},\symbolic{i},1),
    \statepair{\xabs{i} \mapsto \symbolic{i}, \xabs{fut} \mapsto \symbolic{f}}{\xabs{f} \mapsto 3,\xabs{o} \mapsto \objname'},
    \suspev,
    \statepair{\xabs{i} \mapsto \symbolic{i}, \xabs{fut} \mapsto \symbolic{f}}{\xabs{f} \mapsto 3,\xabs{o} \mapsto \objname'},
    \marker
    }\\
\circ &
\sequence{
    \statepair{\xabs{i} \mapsto \symbolic{i}, \xabs{fut} \mapsto \symbolic{f}}{\xabs{f} \mapsto \symbolic{\xabs{this.f}}_2,\xabs{o} \mapsto \objname'},
    \susprev,
    \statepair{\xabs{i} \mapsto \symbolic{i}, \xabs{fut} \mapsto \symbolic{f}}{\xabs{f} \mapsto \symbolic{\xabs{this.f}}_2,\xabs{o} \mapsto \objname'},
    \noev,
    \statepair{\xabs{i} \mapsto \symbolic{i}, \xabs{fut} \mapsto \symbolic{f}}{\xabs{f} \mapsto \symbolic{0},\xabs{o} \mapsto \objname'}}\\
    \circ&\sequence{\resolvev(\objname,\destiny,\methodname,\symbolic{\xabs{this.f}}_2,
    \statepair{\xabs{i} \mapsto \symbolic{i}, \xabs{fut} \mapsto \symbolic{f}}{\xabs{f} \mapsto \symbolic{0},\xabs{o} \mapsto \objname'}
    }\\[5mm]
\trace_2' = \{2 > \symbolic{i}, \symbolic{\xabs{this.f}}_2 \geq 5\} \triangleright &\sequence{
    \statepair{\xabs{i} \mapsto 0, \xabs{fut} \mapsto \mathbf{no}}{\xabs{f} \mapsto 2,\xabs{o} \mapsto \objname'},
    \noev,
    \statepair{\xabs{i} \mapsto 0, \xabs{fut} \mapsto \mathbf{no}}{\xabs{f} \mapsto 3,\xabs{o} \mapsto \objname'},
    \invocev(\objname, \objname', \symbolic{f}, \xabs{n}, \sequence{})
    }\\
\circ &
\sequence{
    \statepair{\xabs{i} \mapsto 0, \xabs{fut} \mapsto \symbolic{f}}{\xabs{f} \mapsto 3,\xabs{o} \mapsto \objname'},
    \resolvrev(\objname,\symbolic{f},\symbolic{i},1),
    \statepair{\xabs{i} \mapsto \symbolic{i}, \xabs{fut} \mapsto \symbolic{f}}{\xabs{f} \mapsto 3,\xabs{o} \mapsto \objname'}
    \suspev,
    \statepair{\xabs{i} \mapsto \symbolic{i}, \xabs{fut} \mapsto \symbolic{f}}{\xabs{f} \mapsto 3,\xabs{o} \mapsto \objname'},
    \marker
    }\\
\circ &
\sequence{
    \statepair{\xabs{i} \mapsto \symbolic{i}, \xabs{fut} \mapsto \symbolic{f}}{\xabs{f} \mapsto \symbolic{\xabs{this.f}}_2,\xabs{o} \mapsto \objname'},
    \susprev,
    \statepair{\xabs{i} \mapsto \symbolic{i}, \xabs{fut} \mapsto \symbolic{f}}{\xabs{f} \mapsto \symbolic{\xabs{this.f}}_2,\xabs{o} \mapsto \objname'}}\\
    \circ & \sequence{  \resolvev(\objname,\destiny,\methodname,\symbolic{\xabs{this.f}}_2),
    \statepair{\xabs{i} \mapsto \symbolic{i}, \xabs{fut} \mapsto \symbolic{f}}{\xabs{f} \mapsto \symbolic{\xabs{this.f}}_2,\xabs{o} \mapsto \objname'},
    }\\[5mm]
\trace_3' = \{2 \leq \symbolic{i}\} \triangleright &\sequence{
    \statepair{\xabs{i} \mapsto 0, \xabs{fut} \mapsto \mathbf{no}}{\xabs{f} \mapsto 2,\xabs{o} \mapsto \objname'},
    \noev,
    \statepair{\xabs{i} \mapsto 0, \xabs{fut} \mapsto \mathbf{no}}{\xabs{f} \mapsto 3,\xabs{o} \mapsto \objname'},
    \invocev(\objname, \objname', \symbolic{f}, \xabs{n}, \sequence{}),
    }\\
\circ &
\sequence{
    \statepair{\xabs{i} \mapsto 0, \xabs{fut} \mapsto \symbolic{f}}{\xabs{f} \mapsto 3,\xabs{o} \mapsto \objname'},
    \resolvrev(\objname,\symbolic{f},\symbolic{i},1),
    \statepair{\xabs{i} \mapsto \symbolic{i}, \xabs{fut} \mapsto \symbolic{f}}{\xabs{f} \mapsto 3,\xabs{o} \mapsto \objname'},
    \resolvev(\objname,\destiny,\methodname,3),
    \statepair{\xabs{i} \mapsto \symbolic{i}, \xabs{fut} \mapsto \symbolic{f}}{\xabs{f} \mapsto 3,\xabs{o} \mapsto \objname'},
    }
\end{align*}
\end{minipage}}

\noindent
Next, the continuation agrees on the following (for any heap):
\begin{align*}
\hat{\trace}' = \emptyset \triangleright &\sequence{
    \statepair{\xabs{i} \mapsto 0, \xabs{fut} \mapsto \mathbf{no}}{\xabs{f} \mapsto 2,\xabs{o} \mapsto \objname'},
    \noev,
    \statepair{\xabs{i} \mapsto 0, \xabs{fut} \mapsto \mathbf{no}}{\xabs{f} \mapsto 3,\xabs{o} \mapsto \objname'}}
\end{align*}
An then on the following, again for any heap and any future $f$:
\begin{align*}
\hat{\trace}'' = \emptyset \triangleright &\sequence{
    \statepair{\xabs{i} \mapsto 0, \xabs{fut} \mapsto \mathbf{no}}{\xabs{f} \mapsto 3,\xabs{o} \mapsto \objname'},
    \invocev(\objname, \objname', f, \xabs{n}, \sequence{}),
    \statepair{\xabs{i} \mapsto 0, \xabs{fut} \mapsto f}{\xabs{f} \mapsto 3,\xabs{o} \mapsto \objname'}}
\end{align*}
Note that they can agree on any future -- the global system semantics later forces an agreement on a fresh future.
Finally, there is a split, they can agree on, among others, one of the following two traces, depending on the value that is instantiated for $\symbolic{i}$.
\begin{align*}
\hat{\trace}''' = \{2 > 0\} \triangleright &\sequence{
    \statepair{\xabs{i} \mapsto 0, \xabs{fut} \mapsto f}{\xabs{f} \mapsto 3,\xabs{o} \mapsto \objname'},
    \resolvrev(\objname,f,0,1),
    \statepair{\xabs{i} \mapsto 0, \xabs{fut} \mapsto f}{\xabs{f} \mapsto 3,\xabs{o} \mapsto \objname'}}\\
\hat{\trace}'''' = \{2 \leq 5\} \triangleright &\sequence{
    \statepair{\xabs{i} \mapsto 5, \xabs{fut} \mapsto f}{\xabs{f} \mapsto 3,\xabs{o} \mapsto \objname'},
    \resolvrev(\objname,f,5,1),
    \statepair{\xabs{i} \mapsto 5, \xabs{fut} \mapsto f}{\xabs{f} \mapsto 3,\xabs{o} \mapsto \objname'}}
\end{align*}
The difference now is that the first trace selects the first branch of the branching (and its continuation set has two elements), while the second trace selects the empty \abs{else} branch and has only one element in its continuation set. Again, the global system semantics ensures later that the value the process agrees on to substituted as the read value is indeed the value the future was resolved with.
\end{example}

\begin{definition}[Object Semantics]
The semantics of an object is a relation $\xrightarrow{\event}$ between objects, where \event is either an event or a marked event $\overline{\event}$. 
We say that \event is communicated to the outside.
The semantics consists of four rules, given in Fig.~\ref{fig:gcosem}.
\begin{itemize}
\item
Rule \rulename{O-Schedule} activates a process in the process pool, if its traces agree on a trace and no other process is active.
This trace is executed by being appended to the object trace and making the continuation to the active process.
The event under which the traces agree is communicated to the outside. 
It is not possible to active a process that immediately suspends -- note that if two \abs{await} statement follow each other directly, then there is still a suspension reaction event between their suspension events.
\item 
Rule \rulename{O-Step} is analogous to \rulename{O-Schedule}, but the process is already active.
\item
Rule \rulename {O-Deschedule} is analogous to \rulename{O-Step}, but if the communicated event is suspending, then the continuation is added to the process pool and no active process exists.
\item
Rule \rulename{O-Add} finally adds a called method to the process pool. This rule can always be executed, the system semantics ensure that every called method is added exactly once.
\end{itemize}
Only the \rulename{O-Schedule} rules uses the $\megachop$ operator, because this is the only point where the local state of the next step may not agree with the local state of the last step.
There is no rule for process termination. If a process terminates, then the continuation is the empty set and this falls under the rule \rulename{O-Step}.
\begin{figure}
\SSINFER{O-Schedule}{
     \text{$\event$ is neither $\awaitev$ nor $\suspev$}
}{
\mathsf{last}(\trace) = \defstatepair \qquad \text{$\Theta$ $\heap$-agrees on $\trace'$ with continuation $\Theta'$ under $\event$}
}{
\big((\traceset_i)_{i\in I}\cup\{\Theta\},\trace,\emptyset\big)_\objname \xrightarrow{\event} \big((\traceset_i)_{i\in I},\trace\megachop\trace',\Theta'\big)_\objname
}
\SSINFER{O-Step}{
     \text{$\event$ is neither $\awaitev$ nor $\suspev$}\qquad
}{
     \mathsf{last}(\trace) = \defstatepair \qquad \text{$\Theta$ $\heap$-agrees on $\trace'$ with continuation $\Theta'$ under $\event$}
}{
    \big((\traceset_i)_{i\in I},\trace,\Theta\big)_\objname \xrightarrow{\event} \big((\traceset_i)_{i\in I},\trace\chop\trace',\Theta'\big)_\objname
}
\SSINFER{O-Deschedule}{
     \text{$\event$ is either $\awaitev$ or $\suspev$}
}{
     \mathsf{last}(\trace) = \defstatepair \qquad \text{$\Theta$ $\heap$-agrees on $\trace'$ with continuation $\Theta'$ under $\event$}
}{
    \big((\traceset_i)_{i\in I},\trace,\Theta\big)_\objname \xrightarrow{\event} \big((\traceset_i)_{i\in I}\cup\{\Theta'\},\trace\chop\trace',\emptyset\big)_\objname
}
\SINFER{O-Add}{
}{
\big((\traceset_i)_{i\in I},\trace,\Theta\big)_\objname    \xrightarrow{\overline{\invocev(\objname',\objname,f,\methodname,\many{\sexpr})}}
\big((\traceset_i)_{i\in I}\cup\{\eval{\xabs{m}}{\objname,\destiny,\xabs{m},\statepair{\state}{\heapid}}\},\trace,\Theta\big)_\objname
}
\caption{Globally Concrete Semantics of Objects.}
\label{fig:gcosem}
\end{figure}
\end{definition}

\begin{example}
The above example~\ref{ex:gsem1} can be formulated as follows with the object semantics. For readability's sake we omitted the continuations and the last trace before suspension.
\begingroup
\allowdisplaybreaks
\begin{align*}
&\left(\emptyset,\statepair{\cdot}{\xabs{f} \mapsto 2},\emptyset\right)_\objname \\
\xrightarrow{\overline{\invocev(\objname,\methodname,\destiny,\sequence{5})}}
&\left(\{\{\trace_1,\trace_2,\trace_3\}\},\statepair{\cdot}{\xabs{f} \mapsto 2},\emptyset\right)_\objname \\
\xrightarrow{\invocrev(\objname,\destiny,\methodname,\sequence{5})}
&\left(\emptyset,\statepair{\cdot}{\xabs{f} \mapsto 2}\megachop\hat{\trace},\{\trace_1',\trace_2',\trace_3'\}\right)_\objname \\
\xrightarrow{\noev}
&\left(\emptyset,\statepair{\cdot}{\xabs{f} \mapsto 2}\megachop\hat{\trace}\chop\hat{\trace}',\{...\}\right)_\objname\\
\xrightarrow{\invocev(\objname, \objname', f, \xabs{n}, \sequence{})}
&\left(\emptyset,\statepair{\cdot}{\xabs{f} \mapsto 2}\megachop\hat{\trace}\chop\hat{\trace}'\chop\hat{\trace}'',\{...\}\right)_\objname\\
\xrightarrow{\resolvrev(\objname,f,0,1)}
&\left(\emptyset,\statepair{\cdot}{\xabs{f} \mapsto 2}\megachop\hat{\trace}\chop\hat{\trace}'\chop\hat{\trace}''\chop\hat{\trace}''',\{...\}\right)_\objname\\
\xrightarrow{\suspev}
&\left(\{...\},\statepair{\cdot}{\xabs{f} \mapsto 2}\megachop\hat{\trace}\chop\hat{\trace}'\chop\hat{\trace}''\chop\hat{\trace}'''\chop...,\emptyset\right)_\objname
\end{align*}
\endgroup
\end{example}

\subsubsection{Semantics of Programs}
Objects are always reduced in the context of a system, that enforces that the steps a process agrees on to execute next adheres to the values already used in the system.
\begin{definition}[Systems]
A \emph{system} \textbf{S} has the form $\mathsf{evs}\blacktriangleright\mathsf{obs}$, where $\mathsf{evs}$ is a sequence of events and $\mathsf{obs}$ consists of objects:
\[\mathsf{obs} ::= \mathbf{O} \sep \mathsf{obs}~\mathsf{obs}\]
The composition $\mathsf{obs}~\mathsf{obs}$ is commutative and associative: $\mathsf{obs}~\mathsf{obs}' = \mathsf{obs}'~\mathsf{obs}$ and $\mathsf{obs}~(\mathsf{obs}'~\mathsf{obs}'') = (\mathsf{obs}~\mathsf{obs}')~\mathsf{obs}''$.
\end{definition}

\begin{definition}[Initial, Terminated and Stuck Systems]
A system is terminated if all its objects have (1) an empty process pool and (2) no active process.
A system is stuck if it cannot be reduced further, but is not terminated.
Given a main block
\\\noindent\begin{abscode}
main{ 
    C$_1$ v$_1$ = new C$_1$(e);
    ...
    C$_n$ v$_n$ = new C$_n$(e);
    v$_j$!m();
}
\end{abscode}

The initial system has the form 
\[\sequence{}\blacktriangleright\mathbf{O}_1~\dots~\mathbf{O}_n\]
Where the initial objects are as defined in definition.~\ref{def:iobj}
\end{definition}

\begin{definition}[System Semantics]
Fig.~\ref{fig:gcssem} shows the rules for system semantics.
\begin{itemize}
\item \rulename{S-Internal} just applies one rule for some object and records the event in the global history.
\item \rulename{S-Get} applies a rule for some object that communicates a resolving reaction event. It is checked that the read value that the process agreed on is the value that the read future is resolved with. 
\item Rule \rulename{S-Invoc} applies a rule for some object that communicates an invocation event. It adds the called method by applying \rulename{O-Add} on the called object. 
Only the communicated invocation event is added to the global history.
\end{itemize}
\begin{figure}
\SINFER{S-Internal}{
\mathbf{O} \xrightarrow{\event} \mathbf{O}'\qquad\event\text{ is neither a \invocev nor a \resolvrev nor has the form $\overline{\event}$}
}{
 \mathsf{evs}\blacktriangleright\mathbf{O}~\mathsf{obs} \xrightarrow{\event} \mathsf{evs} \circ \sequence{\event}\blacktriangleright\mathbf{O}'~\mathsf{obs}
}
\SINFER{S-Get}{
\mathbf{O} \xrightarrow{\event} \mathbf{O}'\qquad\event=\resolvrev(\objname,f,\xabs{v},i) \qquad \resolvev(\objname',f,\methodname,\xabs{v}) \text{ is in $\mathsf{evs}$ for some $\objname',\methodname$}
}{
 \mathsf{evs}\blacktriangleright\mathbf{O}~\mathsf{obs} \xrightarrow{\event} \mathsf{evs} \circ \sequence{\event}\blacktriangleright\mathbf{O}'~\mathsf{obs}
}
\SINFER{S-Invoc}{
\mathbf{O}_1 \xrightarrow{\invocev(\objname,\objname',f,\methodname,\many{\xabs{e}})} \mathbf{O}_1' \qquad  \mathbf{O}_2 \xrightarrow{\overline{\invocev(\objname,\objname',f,\methodname,\many{\xabs{e}})}} \mathbf{O}_2' \qquad \text{$f$ is fresh in $\mathsf{evs}$}
}{
 \mathsf{evs}\blacktriangleright\mathbf{O}_1~\mathbf{O}_2~\mathsf{obs} \xrightarrow{\invocev(\objname,\objname',f,\methodname,\many{\xabs{e}})} \mathsf{evs} \circ \sequence{\invocev(\objname,\objname',f,\methodname,\many{\xabs{e}})}\blacktriangleright\mathbf{O}_1'~\mathbf{O}_2'~\mathsf{obs}
}
\caption{Globally Concrete Semantics of Systems.}
\label{fig:gcssem}
\end{figure}
\end{definition}

Analogously to local traces, we define global traces.
A global trace is a sequence of pairs, where each pair is the global state at this point of execution, and the last executed event.
A global state only records the current heaps of all objects, not the traces, processes and process pools.
\begin{definition}[Global Traces and Big-Step Semantics]
A global state $\mathsf{gl}$ is a function from object names to heaps. Given a system 
\\\noindent\scalebox{0.9}{\begin{minipage}{\textwidth}
\[\mathbf{S} = \mathsf{evs}~\left(\processpool_1,\trace_1\megachop\sequence{\statepair{\state_1}{\heap_1}},\process_1\right)_{\objname_1}~\dots~\left(\processpool_n,\trace_n\megachop\sequence{\statepair{\state_n}{\heap_n}},\process_n\right)_{\objname_n}\]
\end{minipage}}

The corresponding global state $\mathsf{gl}(\mathbf{S})$ is $\{\objname_1 \mapsto \heap_1,\dots,\objname_n \mapsto \heap_n\}$.

A global trace $\globaltrace$ is a sequence global states and events, analogous to local histories.
Given a run of a system 
\[\mathbf{S}_1\xrightarrow{\event_2}\mathbf{S}_2\dots\xrightarrow{\event_n}\mathbf{S}_n\]
its global trace is 
\[\sequence{\mathsf{gl}(\mathbf{S}_1),\event_2,\mathsf{gl}(\mathbf{S}_2),\dots,\event_n, \mathsf{gl}(\mathbf{S}_n)}\]
Given a program \prgm, we say that \prgm realizes a global trace $\globaltrace$, if there is a run from its initial state to a terminated state, such that $\globaltrace$ is the trace of that run. We write this as
\[\prgm\Downarrow\globaltrace\]
\end{definition}
Symbolic values do not escape locally abstract semantics, neither the object trace nor the global trace contains symbolic values.
\begin{lemma}
Let \prgm be a program and $\globaltrace$ a global trace with $\prgm\Downarrow\globaltrace$. $\globaltrace$ contains no symbolic values.
\end{lemma}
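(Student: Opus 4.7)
\emph{Proof plan.}
The plan is to proceed by induction on the length $n$ of the reduction sequence $\mathbf{S}_1 \xrightarrow{\event_2} \cdots \xrightarrow{\event_n} \mathbf{S}_n$ underlying $\prgm\Downarrow\globaltrace$, and to strengthen the invariant into two parts: (i) the event $\event_k$ appended to the global history is free of symbolic values, and (ii) the last object-state $\statepair{\state_i}{\heap_i}$ of every object trace in $\mathbf{S}_k$ has a concrete heap (so $\mathsf{gl}(\mathbf{S}_k)$ is concrete). The base case is immediate: the initial system of Def.~\ref{def:iobj} equips every $\objname_i$ with a heap mapping fields to default semantic values and parameters to object names, and the global trace is $\sequence{\mathsf{gl}(\mathbf{S}_1)}$, which contains no events.

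For the inductive step, I would case split on the system rule (\rulename{S-Internal}, \rulename{S-Get}, \rulename{S-Invoc}). The \rulename{O-Add} half of \rulename{S-Invoc} only enlarges a process pool, leaving $\trace$ and $\mathsf{gl}$ unchanged; moreover \rulename{S-Invoc} records the event produced by the caller and demands that the future $f$ it contains be fresh in $\mathsf{evs}$, which forces the symbolic future introduced by the caller's $\invocev$ to be instantiated to a concrete fresh future. The remaining object rules (\rulename{O-Schedule}, \rulename{O-Step}, \rulename{O-Deschedule}) all dispatch on agreement on some trace $\trace_F$ produced from $\heap$-selection candidates, where $\heap$ is the current heap of the object and is concrete by the induction hypothesis.

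The heart of the argument is that $\trace_F$ is concrete. Applying $\heap$ to a candidate substitutes every symbolic field $\symbolic{\xabs{this.f}}_i$ occurring up to the next $\marker$ with its concrete image in $\heap$. Inspection of the LA-semantics rules in Fig.~\ref{fig:lasem} shows that each length-three prefix of a local trace introduces at most one symbolic value (either a $\symbolic{v}$ via $\xabs{get}$, a $\symbolic{f}$ via a call, or none) and that any remaining occurrences of symbolic fields beyond $\marker$ are not part of the candidate. Hence the candidate $\trace_C = \selection_C \triangleright \history_C$ contains at most one symbolic value, introduced within $\history_C$. The clause of Def.~\ref{def:agree} then guarantees a concrete $v$ with $\trace_C[\symbolic{v}\mapsto v] = \trace_F$, so both components of $\trace_F[2]$ (the event, which becomes $\event_{k+1}$) and $\trace_F[3]$ (the new last object state) are fully concrete. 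Appending $\trace_F$ via $\chop$ or $\megachop$ preserves the concreteness of the last state, closing the invariant.

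The main obstacle I expect is bookkeeping around the ``single symbolic value'' clause of Def.~\ref{def:agree}: one must justify carefully that every LA-rule yields candidates for which at most one freshly introduced symbolic value survives the application of $\heap$, and that $\selection_C$ (restricted in its definition to expressions containing only that one value) is really what drives selection. A secondary point to handle explicitly is the \rulename{S-Get} side condition that the read value agreed upon coincides with the value recorded in the matching $\resolvev$; together with the IH that prior $\resolvev$-events in $\mathsf{evs}$ are concrete, this rules out instantiating $\symbolic{v}$ to a symbolic expression. Once these invariants are locked in, the conclusion---$\globaltrace$ has no symbolic values---follows directly since $\globaltrace$ is built only from the $\mathsf{gl}(\mathbf{S}_k)$ and the $\event_k$, both shown concrete.
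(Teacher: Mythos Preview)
The paper states this lemma without proof; it appears immediately before Section~\ref{sec:aol} (Selectability) with only the one-line motivation ``Symbolic values do not escape locally abstract semantics, neither the object trace nor the global trace contains symbolic values.'' There is therefore nothing to compare your argument against.

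Your plan is sound and is essentially the natural proof. The key observation you isolate is the right one: the definition of a $\heap$-selection candidate already \emph{requires} that ``all symbolic values in $\history_C$ are introduced within,'' and Def.~\ref{def:agree} then supplies a concrete $v$ for that single symbolic value, so $\trace_F$ is concrete whenever the applied heap $\heap$ is. Carrying the concreteness of the last heap of each object trace through the induction is exactly what is needed to feed the next agreement step. One small remark: your invariant~(ii) speaks only of the heap component of the last object state, which is all that $\mathsf{gl}(\mathbf{S}_k)$ needs, but your argument for~(i) actually yields more, namely that all three positions of $\trace_F$ are concrete (including the local store); you may as well record that, since it makes the reasoning about the \rulename{O-Schedule} case after a $\marker$ (where $\heapid^j$ introduces fresh symbolic fields that must be eliminated by the \emph{current} concrete heap) slightly cleaner. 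The bookkeeping you flag around the ``at most one symbolic value'' clause and around \rulename{S-Get}/\rulename{S-Invoc} forcing concrete instantiations is real but routine, and your sketch already identifies the relevant side conditions.
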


\subsection{Selectability}

The local semantics of a method can be expressed as $\eval{\xabs{m}}{\symbolic{\objname},\symbolic{\destiny},\xabs{m},\statepair{\symbolic{\state}}{\symbolic{\heap}}}$, where $\symbolic{\state}$ and $\symbolic{\heap} = \heapid^i(\heap)$ are fully symbolic.
The heap $\symbolic{\state}$ also maps all parameters to symbolic object names.
This set describes the behavior of a method in every possible context, but also vastly overapproximate it, especially when some context is known.
Such overapproximation manifests in several ways: 
\begin{itemize}
\item Some symbolic traces are never selected, because their selection condition can never hold.
\item Some symbolic traces are not selected in some context, because in this context their selection condition can never hold.
\item Some concrete instantiantions of symbolic traces are not possible in some or every context.
\end{itemize}
\begin{example}\label{ex:select}
Consider the method in Fig.~\ref{fig:secex}, which is in some class with a field \abs{Int i = 10}:
\begin{figure}
\begin{abscode}
Int m(Int j){
    if( j> 0 ) {
        if( j> 0 ){ this.i = this.i + 1 } 
        else      { this.i = -1 }
    } else {
        if(this.i >= 0){ this.i = this.i + 1 }
        else { this.i = -1 }
    }
    return this.i;
}
\end{abscode}
\caption{Example Method for Selectability}
\label{fig:secex}
\end{figure}
The semantics contains four trace, one for each branch (with simplified events), which are shown in Fig.~\ref{fig:secex2}.
\begin{figure}
\scalebox{0.8}{\begin{minipage}{\textwidth}
\begin{align}
\{\symbolic{j} > 0, \symbolic{j} > 0\} 
&\triangleright \sequence{
    \statepair{\{\xabs{j} \mapsto \symbolic{j}\}}{\{\xabs{i} \mapsto \symbolic{i}\}},
    \invocrev,
    \statepair{\{\xabs{j} \mapsto \symbolic{j}\}}{\{\xabs{i} \mapsto \symbolic{i}\}},
    \noev,
    \statepair{\{\xabs{j} \mapsto \symbolic{j}\}}{\{\xabs{i} \mapsto \symbolic{i}+1\}},
    \resolvev,
    \statepair{\{\xabs{j} \mapsto \symbolic{j}\}}{\{\xabs{i} \mapsto \symbolic{i}+1\}}
}\label{ex:sel:1}\\
\{\symbolic{j} > 0, \symbolic{j} <= 0\}
&\triangleright \sequence{
    \statepair{\{\xabs{j} \mapsto \symbolic{j}\}}{\{\xabs{i} \mapsto \symbolic{i}\}},
    \invocrev,
    \statepair{\{\xabs{j} \mapsto \symbolic{j}\}}{\{\xabs{i} \mapsto \symbolic{i}\}},
    \noev,
    \statepair{\{\xabs{j} \mapsto \symbolic{j}\}}{\{\xabs{i} \mapsto -1\}},
    \resolvev,
    \statepair{\{\xabs{j} \mapsto \symbolic{j}\}}{\{\xabs{i} \mapsto -1\}}
}\label{ex:sel:2}\\
\{\symbolic{j} <= 0, \symbolic{i} >= 0\}
&\triangleright \sequence{
    \statepair{\{\xabs{j} \mapsto \symbolic{j}\}}{\{\xabs{i} \mapsto \symbolic{i}\}},
    \invocrev,
    \statepair{\{\xabs{j} \mapsto \symbolic{j}\}}{\{\xabs{i} \mapsto \symbolic{i}\}},
    \noev,
    \statepair{\{\xabs{j} \mapsto \symbolic{j}\}}{\{\xabs{i} \mapsto \symbolic{i}+1\}},
    \resolvev,
    \statepair{\{\xabs{j} \mapsto \symbolic{j}\}}{\{\xabs{i} \mapsto \symbolic{i}+1\}}
}\label{ex:sel:3}\\
\{\symbolic{j} <= 0, \symbolic{i} < 0\}
&\triangleright \sequence{
    \statepair{\{\xabs{j} \mapsto \symbolic{j}\}}{\{\xabs{i} \mapsto \symbolic{i}\}},
    \invocrev,
    \statepair{\{\xabs{j} \mapsto \symbolic{j}\}}{\{\xabs{i} \mapsto \symbolic{i}\}},
    \noev,
    \statepair{\{\xabs{j} \mapsto \symbolic{j}\}}{\{\xabs{i} \mapsto -1\}},
    \resolvev,
    \statepair{\{\xabs{j} \mapsto \symbolic{j}\}}{\{\xabs{i} \mapsto -1\}}
}\label{ex:sel:4}
\end{align}
\end{minipage}}
\caption{Traces of the method of Fig.~\ref{fig:secex}.}
\label{fig:secex2}
\end{figure}
Obviously, the trace \ref{ex:sel:2} is never selected, no matter how the rest of the program looks like, as no semantic value can be substituted for $\symbolic{j}$ such that both conditions are true.

With some context being known, we can exclude further traces.
If there is no further method in this class, than trace \ref{ex:sel:4} is also never selected:
\abs{this.i} is not negative in the initial state and so the forth trace is not selected in the first run of the method -- but the field stays non-negative and so every further run also never selects this trace.
Finally, trace \ref{ex:sel:1} is an abstraction for the following concrete trace:

\noindent\scalebox{0.9}{\begin{minipage}{\textwidth}
\[
\{\textbf{True},\textbf{True}\} 
\triangleright \sequence{
    \statepair{\{\xabs{j} \mapsto 1\}}{\{\xabs{i} \mapsto 5\}},
    \invocrev,
    \statepair{\{\xabs{j} \mapsto 1\}}{\{\xabs{i} \mapsto 5\}},
    \noev,
    \statepair{\{\xabs{j} \mapsto 1\}}{\{\xabs{i} \mapsto 6\}},
    \resolvev,
    \statepair{\{\xabs{j} \mapsto 1\}}{\{\xabs{i} \mapsto 6\}}
}\]
\end{minipage}}

But, for the same reason as why trace \ref{ex:sel:4} is not selected, the symbolic value $\symbolic{i}$ is never substituted with $5$. Indeed, none of the traces substitutes $\symbolic{i}$ with a value below $10$.

We stress that the above reasoning only requires the class to be known, not the whole program.
\end{example}

For precision it is important to ensure that formal verification only considers traces that really correspond to a possible run, otherwise verification may fail with a false negative: verification may fails, but only for some behavior that does not correspond to a possible run.
This is akin to the restriction on reachable states in, e.g., model checking, instead of checking all possible states of a system.
While it is not possible to exclude all traces that do not corresponds to real run, one can still aim to be as precise as possible.
Before we can formalize the reasoning about the above example, we require some technical tools to be more precise about substitution.
\begin{definition}[Initial Continuation]
Let $\symbolic{\state}$ and $\symbolic{\heap}$ be fully symbolic and let $\trace \in \eval{\xabs{m}}{\symbolic{\objname},\symbolic{\destiny},\xabs{m},\statepair{\symbolic{\state}}{\symbolic{\heap}}}$ be a trace of \abs{m}.
Let there is some application of \rulename{O-Add} with $\trace' \in \eval{\xabs{m}}{\objname,\destiny,\xabs{m},\statepair{\state}{\heapid}}$ such that there are 
symbolic values $\{\xabs{v}_i \mapsto \symbolic{v}_i\}_{i\in I} \in \symbolic{\state}$ and $\{\xabs{v}_i \mapsto v_i\}_{i \in I} \in \state$ such that 
\[\trace[\symbolic{v}_i \mapsto v_i] = \trace'\]
We say that $\trace'$ is an initial continuation of $\trace$. 

In the following, whenever we refer to continuations, we mean the transitive closure formed by Definition~\ref{def:agree} and this.
\end{definition}
\begin{definition}[Concretizers]
A concretizer $\concr$ is a function from symbolic fields and symbolic values to semantics values.
\end{definition}
Note that neither local states nor heaps are concretizers.
\begin{definition}[Used Concretizers]
Let $\symbolic{\state}$ and $\symbolic{\heap}$ be fully symbolic and let $\trace \in \eval{\xabs{m}}{\symbolic{\objname},\symbolic{\destiny},\xabs{m},\statepair{\symbolic{\state}}{\symbolic{\heap}}}$ be a trace of \abs{m}.
A used concretizer $\concr$ for trace in a run of \prgm is constructed as follows.
\begin{itemize}
\item There is some application of \rulename{O-Add} with $\trace' \in \eval{\xabs{m}}{\objname,\destiny,\xabs{m},\statepair{\state}{\heapid}}$ such that there are 
symbolic values $\{\xabs{v}_i \mapsto \symbolic{v}_i\}_{i\in I} \in \symbolic{\state}$ and $\{\xabs{v}_i \mapsto v_i\}_{i \in I} \in \state$ such that 
\[\trace[\symbolic{v}_i \mapsto v_i]_{i \in I} = \trace'\]
Then we set
\[\concr(\symbolic{v})_i = v_i\text{ for all $i \in I$}\]
\item For each agreement where $\trace'$ (or one of its continuations) agrees under some substitution $[\symbolic{v} \mapsto v]$. 
\[\concr(\symbolic{v}) = v\]
\item For each agreement where $\trace'$ (or one of its continuations) agrees under some heap which substitutes $\symbolic{\xabs{this.f}}_i$ by a concrete value $v$
\[\concr(\symbolic{\xabs{this.f}}_i) = v\]
\end{itemize}
We write the application of a concretizer to a trace $\trace$ (history, etc.) as $\trace\concr$
\end{definition}

\begin{example}
Example~\ref{ex:select} shows fully symbolic traces.
In the example~\ref{ex:lasem1}, one fully symbolic trace is the following:
\\\noindent\scalebox{0.825}{
    \begin{minipage}{\textwidth}
\begin{align*}
\trace_{\mathsf{symb}} &= \{
     \symbolic{\xabs{this.f}}_1 > \symbolic{i}, \symbolic{\xabs{this.f}}_2 < \symbolic{\xabs{p}}\} \triangleright \\&\sequence{
    \statepair{\xabs{i} \mapsto 0, \xabs{fut} \mapsto \mathbf{never}}{\xabs{f} \mapsto  \symbolic{\xabs{this.f}}_1,\xabs{o} \mapsto \symbolic{\objname'}},
    \noev,
    \statepair{\xabs{i} \mapsto 0, \xabs{fut} \mapsto \mathbf{never}}{\xabs{f} \mapsto \symbolic{\xabs{this.f}}_1,\xabs{o} \mapsto \symbolic{\objname'}},
    \invocev(\objname, \symbolic{\objname'}, \symbolic{f}, \xabs{n}, \sequence{})
    }\\
\circ &
\sequence{
    \statepair{\xabs{i} \mapsto 0, \xabs{fut} \mapsto \symbolic{f}}{\xabs{f} \mapsto \symbolic{\xabs{this.f}}_1+1,\xabs{o} \mapsto \symbolic{\objname'}},
    \resolvrev(\objname,\symbolic{f},\symbolic{i},1),
    \statepair{\xabs{i} \mapsto \symbolic{i}, \xabs{fut} \mapsto \symbolic{f}}{\xabs{f} \mapsto \symbolic{\xabs{this.f}}_1+1,\xabs{o} \mapsto \symbolic{\objname'}},
    \suspev}\\
\circ&\sequence{
    \statepair{\xabs{i} \mapsto \symbolic{i}, \xabs{fut} \mapsto \symbolic{f}}{\xabs{f} \mapsto \symbolic{\xabs{this.f}}_1+1,\xabs{o} \mapsto \symbolic{\objname'}},
    \marker
    \statepair{\xabs{i} \mapsto \symbolic{i}, \xabs{fut} \mapsto \symbolic{f}}{\xabs{f} \mapsto \symbolic{\xabs{this.f}}_2,\xabs{o} \mapsto \symbolic{\objname'}},
    \susprev
    }\\
\circ &
\sequence{
    \statepair{\xabs{i} \mapsto \symbolic{i}, \xabs{fut} \mapsto \symbolic{f}}{\xabs{f} \mapsto \symbolic{\xabs{this.f}}_2,\xabs{o} \mapsto \symbolic{\objname'}},
    \noev,
    \statepair{\xabs{i} \mapsto \symbolic{i}, \xabs{fut} \mapsto \symbolic{f}}{\xabs{f} \mapsto \symbolic{0},\xabs{o} \mapsto \symbolic{\objname'}},
    \resolvev(\objname,\destiny,\methodname,\symbolic{\xabs{this.f}}_2),
    \statepair{\xabs{i} \mapsto \symbolic{i}, \xabs{fut} \mapsto \symbolic{f}}{\xabs{f} \mapsto \symbolic{0},\xabs{o} \mapsto \symbolic{\objname'}}
    }
\end{align*}
\end{minipage}}

A used concretizer for $\trace_{\mathsf{symb}}$ is, as outlined by $\trace_1$ and its continuations in examples~\ref{ex:lasem1} to ~\ref{ex:gsem1} is the following (if during the suspension the heap does not change):
\[\concr_1 = \{\symbolic{\objname'} \mapsto \objname, \symbolic{\xabs{p}} \mapsto 5, \symbolic{\xabs{this.f}}_1 \mapsto 2, \symbolic{f} \mapsto f, \symbolic{i} \mapsto 0, \symbolic{\xabs{this.f}}_2 \mapsto 3\}\]
\end{example}

\begin{definition}
A trace $\trace$ is never selected in \prgm if it has no used concretizers.
A trace $\trace$ is selected as $\trace\concr$ in \prgm if $\concr$ is a used concretizer of $\trace$ by some terminating run of $\prgm$. 
\end{definition}

We stress that this is only defined for traces of methods in a given program, not arbitrary traces.
To include continuations, we require the notion of sharing, because the continuations generated by agreement are not suffixes, but partially apply the concretizer.

\begin{definition}
A trace $\trace = \selection \triangleright \history $ shares a concretizer $\concr$ with a trace $\trace'  = \selection' \triangleright \history'$. 
If there is a sub-concretizer $\concr' \subseteq \concr$ such that $\trace\concr' = \trace'' \megachop \trace'$ for some $\trace''$.
\end{definition}
I.e.,  a trace $\trace$ shares a concretizer with another trace $\trace'$, if $\trace'$ is the suffix of the trace resulting from applying a part of $\concr$ on $\trace$.

\begin{example}
In the above example, $\trace_{\mathsf{symb}}$ shares $\concr_1$ with $\trace_1$, where the sub-concretizer is 
\[\{\symbolic{\xabs{p}} \mapsto 5,\symbolic{\objname'} \mapsto \objname\}\]
$\trace_1$ shares $\concr_1$ with $\trace_1'$ and sub-concretizer is 
\[\{\symbolic{\xabs{this.f}}_1 \mapsto 2\}\]
\noindent Sharing is transitive:  $\trace_{\mathsf{symb}}$ shares $\concr_1$ with $\trace_1$, where the sub-concretizer is 
\[\{\symbolic{\objname'} \mapsto \objname, \symbolic{\xabs{p}} \mapsto 5, \symbolic{\xabs{this.f}}_1 \mapsto 2\}\]
\end{example}
Concretizers and continuations allow us to restrict us to used traces for reasoning. We can reject never selected traces, as illustrated in the above example.

Given a trace stemming from the semantics of a method, sharing allows us to reason about intermediate traces.
Given a subtrace we may now also restrict the possible values the symbolic values can take, by arguing that the subtrace is a continuation of some trace from the semantics of a method and thus shares its concretizers. 

\begin{lemma}
Let $\trace$ be a trace of some method \methodname. It shares concretizers with all its continuations.
\end{lemma}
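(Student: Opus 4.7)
The plan is to induct on the length of the chain of continuation steps that produces $\trace'$ from $\trace$, where the "continuation" relation is the transitive closure formed by initial continuations (from \rulename{O-Add}) and the agreement-based continuations of Definition~\ref{def:agree}. Throughout the induction I maintain the invariant that the sub-concretizer $\concr' \subseteq \concr$ witnessing sharing accumulates exactly those substitutions that were consumed on the path from $\trace$ to the current continuation. Because the used concretizer $\concr$ is defined bullet-by-bullet in terms of exactly these three sources of substitutions (initial parameter instantiation, per-agreement value $[\symbolic{v}\mapsto v]$, and per-agreement heap $\heap$), each step of the induction will be able to extend $\concr'$ by precisely the substitution contributed by that step while staying inside $\concr$.

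The base case ($\trace' = \trace$, zero continuation steps) is immediate: take $\concr' = \emptyset$, so $\trace\concr' = \trace$, and decompose $\trace = \sequence{\mathsf{first}(\trace)} \megachop \trace$, which matches the definition of sharing with $\trace'' = \sequence{\mathsf{first}(\trace)}$. For the inductive step assume $\trace$ shares $\concr$ with $\trace'$ via $\concr' \subseteq \concr$, so $\trace\concr' = \trace''_0 \megachop \trace'$. If the next continuation $\trace''$ is an initial continuation of $\trace'$, then by definition $\trace'' = \trace'[\symbolic{v}_i \mapsto v_i]_{i \in I}$ for values occurring in the instantiated local state of the called process; the first bullet of the definition of used concretizer ensures $\{\symbolic{v}_i \mapsto v_i\}_{i\in I} \subseteq \concr$, so setting $\concr'_{\text{new}} = \concr' \cup \{\symbolic{v}_i \mapsto v_i\}_{i\in I}$ and applying this to $\trace$ yields $\trace''_0\concr''_{\text{loc}} \megachop \trace''$, witnessing the sharing with $\trace''$.

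If instead $\trace''$ arises by agreement from $\trace'$, then by Definition~\ref{def:agree} there is a substitution $[\symbolic{v} \mapsto v]$ and the current heap $\heap$ such that $(\trace_F \megachop \trace'')[\symbolic{v}\mapsto v]\heap$ equals (a suffix of) $\trace'\concr''_{\text{loc}}$ for some agreed trace $\trace_F$. The second and third bullets of the used concretizer definition add exactly $\symbolic{v}\mapsto v$ and the heap-induced maps $\symbolic{\xabs{this.f}}_i \mapsto \heap(\xabs{f})$ to $\concr$, so extending $\concr'$ by these remains inside $\concr$, and the resulting application gives $\trace\concr'_{\text{new}} = (\trace''_0 \megachop \trace_F) \megachop \trace''$, as required. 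The main obstacle is technical bookkeeping: aligning the piecewise substitutions recorded by $\concr$ with the piecewise application of heaps up to markers \marker in $\history\heap$, and verifying that $\megachop$-associativity lets us absorb the intermediate $\trace_F$ into the prefix $\trace''$-piece without disturbing the suffix $\trace''$. Everything else is mechanical, once sharing is formulated at the level of restrictions of $\concr$ rather than arbitrary concretizers.
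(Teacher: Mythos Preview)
The paper states this lemma without proof, so there is nothing to compare your argument against. Your induction on the length of the continuation chain is the natural approach and is sound: since the used concretizer $\concr$ is by construction the union of (i) the parameter instantiation from \rulename{O-Add}, (ii) each agreement's substitution $[\symbolic{v}\mapsto v]$, and (iii) each agreement's heap-induced field substitutions, restricting $\concr$ to the substitutions consumed on any initial segment of the chain gives a sub-concretizer witnessing sharing with the corresponding intermediate continuation. The point that makes the extension step go through is freshness: the symbolic values and the indices on symbolic fields $\symbolic{\xabs{this.f}}_i$ introduced at later steps do not occur in the already-consumed prefix $\trace''_0$, so enlarging $\concr'$ leaves the prefix unchanged and the $\megachop$-decomposition extends by associativity. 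You also correctly identify the only genuine subtlety, namely that piecewise heap application (which acts only up to the next \marker) coincides with global concretizer application precisely because each fresh index $i$ is confined to a single marker-delimited segment.

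Two cosmetic issues, neither a real gap. First, the symbol $\concr''_{\text{loc}}$ appears twice but is never defined; presumably you mean the increment $\concr'_{\text{new}}\setminus\concr'$. Second, in the agreement case you place the substitution and heap on the $(\trace_F\megachop\trace'')$ side of the equation rather than on $\trace'$; the paper's Definition~\ref{def:agree} is itself awkwardly phrased on this point, but its worked examples make clear that the intended relation is that $\trace'$ with heap and substitution applied decomposes as $\trace_F\megachop\trace''$, and your subsequent use of the decomposition is consistent with that reading.
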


We can now also formalize our statements in example~\ref{ex:select}.
\begin{example}
If there is no further method in the class of \methodname, then there is no used concretizer $\concr$ for any trace in Ex.~\ref{ex:select} that has $\concr(\symbolic{i}) = 5$
The traces \ref{ex:sel:2} and \ref{ex:sel:4} are never selected.
\end{example}

As discussed, we are only interested in selected traces and their used concretizers. Of course, this is in general undecidable, but we will safely overapproximate the set of used concretizers in the following analyses.
We used the following notation for all relevant traces of some statement.
\begin{definition}[Selected Traces]
Let \statement be a statement within a method \methodname in a program \prgm, \objname, \future be symbolic object name and future and \defstatepair a (possible partially symbolic) object state.
The set of selected traces of \statement, written $\defeval{\statement}^\prgm$ is defined as the set of continuations of \methodname that start with \statement and \defstatepair from selected traces in the local semantics of \methodname.
\end{definition}

\subsection{Semantic Logics}\label{sec:aol}
We use two logics to \emph{express} properties of states and traces. 
The state logic is a standard first-order logic.
The trace logic is a monadic second-order logic, that embeds the state logic by using them similar to predicates on states. Similarly, they have event terms that allow to specify events.
\subsubsection{Local First-Order State Logic (FOS)}
\begin{definition}[Syntax]\label{def:FOS}
Let $p$ range over predicate symbols, $f$ over function symbols, $x$ over logical variable names and $S$ over sorts. 
As sorts we take all data types $\dType$, all class names and additionally $\mathbb{N}$ and $\mathbf{Heap}$. The logical heaps are functions from field names to semantic values.
Formulas $\phi$ and terms $\mathsf{t}$ are defined by the following grammar, where \abs{v} are program variables, consisting of local variables and the special variable \abs{heap}, and \abs{f} are all field names.
\begin{align*}
\phi &::= p(\many{\mathsf{t}}) \sep \mathsf{t} \doteq \mathsf{t} \sep \phi \vee \phi \sep \neg \phi \sep \exists x\in S.~\phi\qquad\mathsf{t} ::= x \sep \xabs{v} \sep \xabs{f} \sep f(\many{\mathsf{t}}) 
\end{align*}
\end{definition}

We demand the usual constants and that each operator defined in syntactic expressions \expr is a function symbol, so one can directly translate a syntactic expression into a FOS term.
We, thus, assume the existence of the following function symbols (where we write the binary function symbols in infix-form):
\begin{align*}
\xabs{Never} &\sep \xabs{Nil} \sep \xabs{True} \sep \xabs{False} \sep \xabs{len}(\mathsf{t}) \sep \xabs{hd}(\mathsf{t}) \sep \xabs{tl}(\mathsf{t}) \sep \xabs{Cons}(\mathsf{t},\mathsf{t}) \sep \mathsf{t} \sim \mathsf{t} \sep \xabs{!}\mathsf{t} \sep \xabs{-}\mathsf{t}
\sep \mathsf{t}[\mathsf{t}] \\&\sep \mathsf{select}(\mathsf{t},\mathsf{t}) \sep \mathsf{store}(\mathsf{t},\mathsf{t},\mathsf{t}) \sep \mathsf{anon}(\mathsf{t})
\end{align*}
Where $\mathsf{t}[\mathsf{t}]$ is indexed list access, $\mathsf{select}(h,\xabs{f})$ selects the value of field \abs{f} from heap h, $\mathsf{store}(h,\xabs{f},\mathsf{t})$ stores the value of $\mathsf{t}$ in field \abs{f} of heap $h$.
Finally, $\mathsf{anon}(h)$ is a heap that may be different from heap $h$. 

\begin{definition}[Semantics of Terms]
Let $I$ be a map from function names to functions and from predicate names to predicates.
Let $\beta$ be a map from logical variable to semantic values.
The evaluation of terms in a state \defstatepair is as follows
\begin{align*}
\eval{x}{\defstatepair,I,\beta} &::= \beta(x) \\
\eval{\xabs{v}}{\defstatepair,I,\beta} &::= \cased{ \state(v) & \text{ if }\xabs{v} \neq \xabs{heap}\\ \heap &\text{ if }\xabs{v} = \xabs{heap}}\\
\eval{f(\many{\mathsf{t}})}{\defstatepair,I,\beta} &::= I(f)(\many{\eval{\mathsf{t}}{\defstatepair,I,\beta}})
\end{align*}
\end{definition}
\begin{definition}[Semantics of Formulas]
Let $\defstatepair$ be a state without symbolic values. 
The satisfiability relation $\defstatepair,\beta \models \phi$ is defined as follows
\begin{align*}
\defstatepair,\beta,I \models p(\many{\mathsf{t}}) &\iff I(p)(\many{\eval{\mathsf{t}}{\defstatepair,I,\beta}})\\
\defstatepair,\beta,I \models \phi_1 \vee \phi_2&\iff \defstatepair,\beta,I \models \phi_1\text{ or }\defstatepair,\beta,I \models \phi_2\\
\defstatepair,\beta,I \models \neg\phi&\iff \defstatepair,\beta,I \models \phi \text{ does not hold}\\
\defstatepair,\beta,I \models  \exists~S~x.~\phi&\iff \text{ there is some \textbf{x} in $S$ such that}\defstatepair,\beta,I[x \mapsto \mathbf{x}] \models \phi\\
\defstatepair,\beta,I \models \mathsf{t} \doteq \mathsf{t}' &\iff \eval{\mathsf{t}}{\defstatepair,I,\beta} = \eval{\mathsf{t}'}{\defstatepair,I,\beta}
\end{align*}
\end{definition}
We use a fixed interpretation $I$, that maps each function symbol to its natural semantic counterpart and omit it from the evaluation. Concerning the heap functions we demand only the following connection axioms, for all heaps $h$, all fields \abs{f} and terms $\mathsf{t}$.
\[I(\mathsf{select})(I(\mathsf{store})(h,\xabs{f},\mathsf{t}),\xabs{f}) = \mathsf{t}\]
We furthermore assume that all logical variables are unique and that the type and number of parameters for functions and predicates is adhered too.
We use common abbreviations such as $\forall~S~x.~\phi$ for $\neg\exists~S~x.~\neg\phi$ and \textsf{True}. 
We shorten comparisons for terms of \xabs{Bool} types by writing, e.g., $i > j$ instead of $i > j \doteq \mathbf{True}$ and $\mathsf{select}(\xabs{heap},\xabs{f})$ with $\xabs{this.f}$.

\begin{example}
The following expresses that in a given state, there is a positive entry in the list stored in \xabs{this.f} and this entry is equal to the sum of the variable \xabs{v} and its index.
\[\exists i \in \mathbb{N}.~\xabs{this.f}[i] > 0 \wedge \xabs{v} + i \doteq \xabs{this.f}[i]\]
We remind that \abs{Int} is semantically mapped to the integers and thus $\xabs{v} + i$ is well-typed and well-defined.
\end{example}

\subsubsection{Local Monadic Second-Order Trace Logic (MSO)}
MSO expresses properties of local traces. Its models are local trace and the whole semantic domain, to allow to quantify over method names etc.
Additionally to standard MSO constructs, we use $[\traceterm] \doteq \mathsf{evt}$ to say that the event at position $\traceterm$ of the trace is equal to the event term $\mathsf{evt}$.
Similarly, $[\traceterm] \vdash \phi$ expresses that the state at position $\traceterm$ is a model for the FOS formula $\phi$.
Both these constructs evaluate to false, if the element at position $\traceterm$ is not an event, resp. state.
\begin{definition}[Syntax]
Let $p$ range over predicate symbols, $f$ over function symbols, $x$ over logical variable names and $S$ over sorts. 
As sorts we take all data types $\dType$ and additionally $\xabs{Fut},\xabs{Expr},\mathbb{N},\mathbf{I},\mathsf{M}$, where $\mathbf{I}$ is a subset of the integers, $\xabs{Fut}$ is the set of all futures of all types, \abs{Expr} the set of all well-typed expressions and $\mathsf{M}$ the set of all method names.
Formulas $\psi$. terms $\traceterm$ and event terms $\mathsf{evt}$ are defined by the following grammar.

\begin{align*}
\psi ::=& p(\many{\traceterm}) \sep \psi \vee \psi \sep \neg \psi \sep \traceterm \subseteq \traceterm \sep \exists x\in S.~\psi \sep \exists X \subseteq S.~\psi\sep [\traceterm] \doteq \mathsf{evt} \sep [\traceterm] \vdash \phi\\
\traceterm ::=& x \sep f(\many{\traceterm}) \\
\mathsf{evt} ::=&\
\invocev(\traceterm,\traceterm,\traceterm,\traceterm,\many{\traceterm}) \sep 
\invocrev(\traceterm,\traceterm,\traceterm,\many{\traceterm}) \sep
\resolvev(\traceterm,\traceterm,\traceterm,\traceterm) \\
\sep&\resolvrev(\traceterm,\traceterm, \traceterm, \traceterm) \sep
\awaitev(\traceterm,\traceterm,\traceterm, \traceterm) \sep
\awaitrev(\traceterm,\traceterm,\traceterm, \traceterm) \\
\sep&\suspev(\traceterm,\traceterm,\traceterm, \traceterm) \sep
\susprev(\traceterm,\traceterm,\traceterm, \traceterm) \sep \noev \sep \marker
\end{align*}

\end{definition}
We assume two predicates $\mathsf{isEvent}$ and $\mathsf{isState}$ with their obvious interpretation (for this predicate, we regard \marker as an event).
Furthermore, we assume a function symbol $\mathsf{singleton}(\traceterm)$ whose interpretation maps an element to a set containing only this element. We write $\traceterm \in \traceterm'$ for $\mathsf{singleton}(\traceterm) \subseteq \traceterm'$.
For each type of event, we assume a predicate that holds iff the given position is an event of that kind, for some parameters, e.g.,
\[\mathsf{is\resolvrev}(i) \iff \exists f \in \xabs{Fut}.~\exists o \in \mathsf{O}.~\exists m \in \mathsf{M}.~\exists v \in \xabs{Any}.~[i] \doteq \resolvrev(o,f,m,v)\]

\begin{definition}[Semantics of Terms]
The semantics of terms and event terms is straightforward. Note that MSO-terms do not contain program variables or fields, thus the semantics of terms depend only on the variable assignment and the function symbol interpretation.
\begin{align*}
\eval{x}{I,\beta} = \beta(x) \quad \eval{f(\many{\traceterm})}{I,\beta} &= I(f)(\many{\eval{\traceterm}{I,\beta}}) \qquad \eval{\noev}{I,\beta} = \noev \quad \eval{\marker}{I,\beta} = \marker\\
\eval{\invocev(\traceterm,\traceterm,\traceterm,\traceterm,\many{\traceterm})}{I,\beta} &=
\invocev(\eval{\traceterm}{I,\beta},\eval{\traceterm}{I,\beta},\eval{\traceterm}{I,\beta},\eval{\traceterm}{I,\beta},\many{\eval{\traceterm}{I,\beta}}) \\
\eval{\invocrev(\traceterm,\traceterm,\traceterm,\many{\traceterm})}{I,\beta} &=
\invocrev(\eval{\traceterm}{I,\beta},\eval{\traceterm}{I,\beta},\eval{\traceterm}{I,\beta},\many{\eval{\traceterm}{I,\beta}}) \\
\eval{\resolvev(\traceterm,\traceterm,\traceterm,\traceterm)}{I,\beta} &=
\resolvev(\eval{\traceterm}{I,\beta},\eval{\traceterm}{I,\beta},\eval{\traceterm}{I,\beta},\eval{\traceterm}{I,\beta})\\
\eval{\resolvrev(\traceterm,\traceterm, \traceterm, \traceterm)}{I,\beta} &=
\resolvrev(\eval{\traceterm}{I,\beta},\eval{\traceterm}{I,\beta}, \eval{\traceterm}{I,\beta}, \eval{\traceterm}{I,\beta})\\
\eval{\awaitev(\traceterm,\traceterm,\traceterm, \traceterm)}{I,\beta} &=
\awaitev(\eval{\traceterm}{I,\beta},\eval{\traceterm}{I,\beta},\eval{\traceterm}{I,\beta}, \eval{\traceterm}{I,\beta})\\
\eval{\awaitrev(\traceterm,\traceterm,\traceterm, \traceterm)}{I,\beta} &=
\awaitrev(\eval{\traceterm}{I,\beta},\eval{\traceterm}{I,\beta},\eval{\traceterm}{I,\beta}, \eval{\traceterm}{I,\beta}) \\
\eval{\suspev(\traceterm,\traceterm,\traceterm, \traceterm)}{I,\beta} &=
\suspev(\eval{\traceterm}{I,\beta},\eval{\traceterm}{I,\beta},\eval{\traceterm}{I,\beta}, \eval{\traceterm}{I,\beta})\\
\eval{\susprev(\traceterm,\traceterm,\traceterm, \traceterm)}{I,\beta} &=
\susprev(\eval{\traceterm}{I,\beta},\eval{\traceterm}{I,\beta},\eval{\traceterm}{I,\beta}, \eval{\traceterm}{I,\beta})
\end{align*}
\end{definition}

\begin{definition}[Semantics of Formulas]
The semantics of MSO-formulas is analogous to the one of FOS, except that the model is a local trace and the constructs specific to MSO and our extension:

\noindent\scalebox{0.9}{\begin{minipage}{\textwidth}
\begin{align*}
\trace,I,\beta &\models \exists X \subseteq S.~\psi \iff \text{ there is a subset $\mathbf{X}$ of $S$ such that }\trace,I,\beta[X \mapsto \mathbf{X}] \models \psi\\
\trace,I,\beta &\models \traceterm \subseteq \traceterm' \iff \eval{\traceterm}{I,\beta}  \subseteq \eval{\traceterm'}{I,\beta} \\
\trace,I,\beta &\models [\traceterm] \doteq \mathsf{evt}\iff 1 \leq \eval{\traceterm}{I,\beta} \leq |\trace| \wedge \trace[\eval{\traceterm}{I,\beta}] = \eval{\mathsf{evt}}{I,\beta}\\
\trace,I,\beta &\models [\traceterm] \vdash \phi\iff 1 \leq \eval{\traceterm}{I,\beta} \leq |\trace| \wedge \trace[\eval{\traceterm}{I,\beta}]\text{ is a state }\wedge \trace[\eval{\traceterm}{I,\beta}],I,\beta \models \phi
\end{align*}
\end{minipage}}

\end{definition}

Relativization~\cite{old} syntactically restricts a formula on a substructure of the original model. This substructure is defined by another formula.
\begin{definition}[Relativization]

\scalebox{0.85}{\begin{minipage}{\textwidth}
\begin{align*}
(\psi \wedge \psi')[y \in S \setminus \psi''(y)] &= \psi[y \in S \setminus \psi''(y)] \wedge \psi'[y \in S \setminus \psi''(y)]\\
(\neg \psi)[y \in S \setminus \psi'(y)] &= \neg(\psi[y \in S \setminus \psi'(y)])\\
\traceterm \subseteq \traceterm'[y \in S \setminus \psi(y)] &= \traceterm \subseteq \traceterm'\\
p(\many{\traceterm})[y \in S \setminus \psi(y)] &=p(\many{\traceterm})\\
([\traceterm] \doteq \mathsf{evt})[y \in S \setminus \psi(y)] &= [\traceterm] \doteq \mathsf{evt}\\
([\traceterm] \vdash \phi)[y \in S \setminus \psi(y)] &= [\traceterm] \vdash \phi\\
(\exists x \in S.~\psi)[y \in S' \setminus \psi'(y)] &= \cased{\exists x \in S.~\left(\psi'(x) \wedge \psi[y \in S' \setminus \psi'(y)]\right) & \text{ if }S=S'\\ \exists x \in S.~\psi[y \in S' \setminus \psi'(y)] & \text{ otherwise}}\\
(\exists X \subseteq S.~\psi)[y \in S' \setminus \psi'(y)] &= \cased{\exists X \subseteq S.~\left(\forall x \in X.~\psi'(x) \wedge \psi[y \in S' \setminus \psi'(y)]\right) & \text{ if }S=S'\\ \exists X \subseteq S.~\psi[y \in S' \setminus \psi'(y)] & \text{ otherwise}}
\end{align*}
\end{minipage}}

\end{definition}

\begin{example}
The following expresses that a trace contains only \noev events.
\[\psi = \forall i \in \mathbf{I}.~\mathsf{isEvent}(i) \rightarrow [i] \doteq \noev\]
The following expresses that the trace up to position $p$, a trace contains only \noev events.
\[\psi[j \in \mathbf{I} \setminus j < p] \equiv \forall i \in \mathbf{I}.~\left(i< p \rightarrow \left(\mathsf{isEvent}(i) \rightarrow [i] \doteq \noev\right)\right) \]
\end{example}


\section{A Further Example}\label{sec:bpl1}
As described above, this technical report introduces Method Types for CAO without suspension to simplify presentation.
For illustration, we use the following examples.
\begin{example}\label{ex:original}
Fig.~\ref{fig:ex:met} shows a simple method that passes its input to \abs{Comp.cmp} and reads the result. If the result is negative,
its sign is inverted and the original input data is logged by \abs{Log.log}. The possibly inverted result is returned.
\end{example}
\begin{figure}
\begin{abscode}
class T(Comp S, Log L){
 Int test(Int i){
  Fut<Int> f = S!cmp(i);
  Int r = f.get$_0$;
  if(r < 0){
   r = -r;  f = L!log(i);
  }
  return r;
 }
}
\end{abscode}
\caption{An example method}
\label{fig:ex:met}
\end{figure}

\begin{example}\label{ex:mso}
Let $\xabs{r = f.get}_0$ be the synchronization from Ex.~\ref{ex:original}.
The following formula expresses that if every value read from a future of \abs{cmp} is positive, and every future read at synchronization point 0 is from \abs{cmp}, then in the state after the read, the value of \abs{r} is positive.
\begin{align*}
&(\forall i \in \textbf{I}.~\left(\forall v \in \xabs{Int}.~[i] \doteq \resolvrev(\_,\_,\xabs{cmp},v,\_) \rightarrow v > 0\right) \wedge\\
&\forall i \in \textbf{I}.~\left(\forall m \in \mathsf{M}.~[i] \doteq \resolvrev(\_,\_,m,\_,0) \rightarrow m \doteq \xabs{cmp}\right))\\
&\rightarrow \forall i\in\textbf{I}.~\left([i] \doteq \resolvrev(\_,\_,\_,\_,0) \rightarrow [i+1] \vdash \xabs{r > 0}\right)
\end{align*}
\end{example}

\section{Behavioral Program Logic}\label{sec:idea}

Behavioral Program Logic (\TPL) is an extension of FOS with \emph{behavioral modalities} $[\statement \halfsim^\typesem \tau]$ that contain a statement $\statement$ and a behavioral specification $(\tau,\typesem)$. 
A behavioral specification consists of (1) a syntactic component (the type $\tau$) and (2) a translation $\alpha$ of the type into an MSO formula that has to hold for all traces generated by the statement. 
Behavioral specifications can be seen as representations of a certain class of MSO formulas, which are deemed useful for verification of distributed systems.
For the rest of this section, we assume fixed parameters \prgm, \objname, \future, \methodname for evaluation.
\begin{definition}[Behavioral Program Logic]
A \emph{behavioral specification} $\type$ is a pair $(\typesyn_\type, \typesem_\type)$, where $\typesem_\type$ maps elements of $\typesyn_\type$ to MSO formulas.

$\TPL$-formulas $\phi$, terms $t$ and updates $U$ are defined by the following grammar, which extends Def.~\ref{def:FOS}.
The meta variables range as in Def.~\ref{def:FOS}. Additionally let $\statement$ range over statements and $(\typesyn_\type, \typesem_\type)$ over behavioral specifications.
\\\noindent\scalebox{0.95}{
\begin{minipage}{\textwidth}
\begin{align*}
\phi &::=  \dots \sep [\statement \halfsim^{\typesem_\type} {\typesyn_\type}]\sep \{U\}\phi  \quad \mathsf{t} ::= \dots \sep \{U\}\mathsf{t} \quad U ::= \epsilon \sep U || U \sep \{U\}U \sep \xabs{v} := \mathsf{t}
\end{align*}
\end{minipage}}
\end{definition}
The semantics of a behavioral modality $[\statement \halfsim^{\typesem_\type} {\typesyn_\type}]$ is that all traces generated by \statement \emph{selected within \prgm} are models for $\typesem_\type(\typesyn_\type)$.
We use updates~\cite{keybook} to keep track of state changes, their semantics is a state transition. Update $\xabs{v} := \mathsf{t}$ changes the state by updating \abs{v} to $\mathsf{t}$.
The parallel update  $U || U'$ applies $U$ and $U'$ in parallel, with $U'$ winning in case of clashes. $\epsilon$ is the empty update and update application $\{U\}$ evaluates the term (resp. formula) in the state after applying $U$.

\begin{definition}[Semantics of \TPL]
The semantical extension of FOS to \TPL is given in Fig.~\ref{fig:bpl}. The interpretation $I$ has the properties described above. 
A formula $\phi$ is \emph{valid} if every $\defstatepair$ and every $\beta$ make it true.
\end{definition}
\begin{figure}[b!t]
\scalebox{0.9}{
\begin{minipage}{\textwidth}
\begin{align*}
&\eval{\{U\}t}{\defstatepair,I,\beta} = \eval{t}{\eval{U}{\defstatepair,I,\beta},I,\beta} \quad \eval{\epsilon}{\defstatepair,I,\beta}\left(x\right) = x \quad
\defstatepair,I,\beta \models \{U\}\phi \Leftrightarrow \eval{U}{\defstatepair,I,\beta},I,\beta \models \phi 
\\&\hspace{20mm}\eval{\xabs{v} := \mathsf{t}}{\defstatepair,I,\beta}\left(\defstatepairo\right) = 
\cased{\statepair{\state'}{\heap''} &\text{ if \abs{v} = \abs{heap}, }\heap'' = \eval{\mathsf{t}}{\defstatepair,I,\beta}\\
\statepair{\state''}{\heap'} & \text{ otherwise, }\state'' = \state'[\xabs{v} \mapsto \eval{\mathsf{t}}{\defstatepair,I,\beta}]
}
\\
&\eval{U || U'}{\defstatepair,I,\beta}\left(x\right) = \eval{U'}{\defstatepair,I,\beta}\left(\eval{U}{\defstatepair,I,\beta}\left(x\right)\right) \qquad 
\eval{\{U\} U'}{\defstatepair,I,\beta} = \eval{U'}{\eval{U}{\defstatepair,I,\beta}I,\beta} \\
&\hspace{30mm}\defstatepair,I,\beta \models [\statement \halfsim^{\typesem_\type} {\typesyn_\type}] \Leftrightarrow \forall \trace \in \eval{\statement}{\objname,f,\methodname,\defstatepair}^\prgm.~\trace,I,\beta\models\typesem_{\type}(\typesyn_{\type})
\end{align*}
\end{minipage}
}
\caption{Semantics of \TPL. The satisfiability relation on the right of the semantics of behavioral modalities is the one of MSO.}
\label{fig:bpl}
\end{figure}

Object, program, method name, resolved future and type of \abs{result} are implicitly known, but we omit them for readability's sake.
We use a sequent calculus to reason about \TPL (resp. FOS).
\begin{definition}[Sequents and Rules]
Let $\Delta, \Gamma$ be sets of \TPL-formulas. A \emph{sequent} $\Gamma \Rightarrow \Delta$ has the semantics of $\bigwedge \Gamma \rightarrow \bigvee \Delta$.
$\Gamma$ is called the antecedent and $\Delta$ the succedent.
Let $C,P_i$ be sequents. A rule has the form 
\TINFER[cond]{name}{P_1 \quad\dots\quad P_n}{C}
Where $C$ is called the conclusion and $P_i$ the premise, while $cond$ is a side-condition. Side-conditions are always decidable. For readability's sake, we apply side conditions containing equalities directly in the premises. 
\end{definition}
\noindent Rules may contain, in addition to expressions, schematic variables. Their handling is standard~\cite{keybook}.
We assume the usual FO rules for the FOS part of \TPL handling all FO operators such as quantifiers. 
\begin{definition}[Soundness]
A rule is \emph{sound} 
if validity of all premisses implies validity of the conclusion.
\end{definition}
Soundness implicitly refers to a program \prgm, as behavioral modalities are defined over $\prgm$-selectable traces.
Rewrite rules $\typesyn_1 \leftrightsquigarrow \typesyn_2$ syntactically replace one type $\typesyn_1$ by another, $\typesyn_2$ (and vice versa) and are sound if $\typesem(\typesyn_1) \equiv \typesem(\typesyn_2)$.

\paragraph{Discussion.} Before we introduce method types, a particular behavorial specification, we illustrate \TPL with further examples.
To reason about postconditions, as standard modal logics, we define a behavioral specification that only uses the last state of a trace for its semantics.
\begin{example}
The specification for postconditions is the pair of the set of all FOS sentences and the function \posttypesem, defined below. 
\textsf{T} is the type of \abs{result}.

\noindent\scalebox{0.85}{\begin{minipage}{0.9\textwidth}\[
\posttypesem(\phi) =\cased{
 \exists v \in \mathsf{T}.~[\mathit{last}\!-\!1] \doteq \resolvev(\_,\_,\_,v) \wedge [\mathit{last}] \vdash \phi[\xabs{result}\!\setminus\!v]
 &\text{ if $\phi$ contains \abs{result}}
\\{[\mathit{last}]} \vdash \phi &\text{ otherwise}}
\]\end{minipage}}
\end{example}
A Hoare triple $\{\phi\}\statement\{\psi\}$ has the same semantics as the formula $\phi \rightarrow [\statement \halfsim^\posttypesem \psi]$.
A standard dynamic logic modality $[\statement]\psi$ has the same semantics as the behavioral modality $[\statement \halfsim^\posttypesem \psi]$
\footnote{This justifies our use of the term ``modality''. Contrary to standard modalities, behavioral modalities are not formulas that express modal statements about \emph{formulas}, but formulas that express a modal statement about \emph{more general specifications}.}. 
Behavioral modalities generalize these systems and can be used to express any trace property (expressible in MSO), independent of the form of its verification system. 
The following defines a points-to analysis (for the next statement), normally implemented in a data-flow framework.
\begin{example}[Points-To]\label{ex:ptpost}
The behavioral specification of a \emph{points-to analysis} specifies that the next statement reads a future resolved by a method from set $M$. 
\\\noindent\scalebox{0.85}{\begin{minipage}{\textwidth}\begin{align*}
\pttype &= (\mathcal{P}(\method),\pttypesem) \text{ with }\\
\pttypesem(M) &= \exists \objname\in \mathsf{O}.\exists f\in\xabs{Fut}.\exists \methodname\in\mathsf{M}.\exists v \in \mathsf{Any}.\exists i\in\mathbb{N}.~
[1] \doteq \resolvrev(\objname,f,\methodname,v,i) \wedge \bigvee_{\methodname' \in M} \methodname \doteq \methodname'
\end{align*}\end{minipage}}

The following formula expresses that the \abs{get} statement reads a positive number, if the future is resolved by \abs{Comp.cmp}.
This is the case if $\xabs{Comp.cmp}$ always returns positive values. 
The identifier connects the two modalities semantically.
\[\phi_p = [\xabs{r = f.get}_0 \halfsim^\pttypesem \{\xabs{Comp.cmp}\}] \rightarrow [\xabs{r = f.get}_0 \halfsim^{\posttypesem} \xabs{r}>0]\]
%

It is not necessary to include postcondition reasoning. 
Rule \rulename{ex1} in Fig.~\ref{fig:bsrules} expresses that if it is known that the next read from \abs{s} is from some set $E'$ and it is required to show that the next read is from $E$, it suffices to check whether $E$ is a subset of $E'$.
Similarly, rule \rulename{ex-$\halfsim$} connects two analyses: one may assume some formula $\psi$ for a read value, if one can show that this synchronization always reads from method \abs{Comp.cmp} and that $\statement_\xabs{Comp.cmp}$, the method body of \abs{Comp.cmp} establishes $\psi$. This corresponds to a generalization of Ex.~\ref{ex:mso}. 
\end{example}
\begin{figure}[tbh]
\scalebox{0.8}{\begin{minipage}{0.5\textwidth}
\SAINFER{$E \subseteq E'$}{ex1}{
$\Gamma, [\xabs{s} \ptsim E']\Rightarrow [\xabs{s} \ptsim E], \Delta$
}    
\end{minipage}}
\scalebox{0.8}{\begin{minipage}{0.5\textwidth}
\SSINFER[$v$ fresh]{ex-$\halfsim$}{
    \Gamma, \psi(v) \fCenter \{\xabs{v} := v\}[\xabs{s} \postsim  \phi],\Delta}{
    \fCenter[\xabs{v = f.get}_0 \ptsim \{\xabs{Comp.cmp}\}]\wedge [\statement_\xabs{Comp.cmp} \postsim \psi]
}{
\Gamma    \fCenter [\xabs{v = f.get}_0\xabs{;s} \postsim  \phi], \Delta
}
\end{minipage}}
\caption{Two example rules for behavioral specifications. $\psi(v)$ replaces \abs{result} by $v$ and we assume that $\psi$ contains no fields.}
\label{fig:bsrules}
\end{figure}

The above example illustrates the difference between modalites and typing judgments. Modalites are formulas and can be used for deductive reasoning about a type judgment (which, in our case, is encoded into $\halfsim$).
While a calculus for $\posttypesem$ is easily carried over from other sequent calculi, this is not possible for all behavioral specifications. 
The proof can still be closed in two ways.
\begin{itemize}
\item There may be some rules, such as \rulename{ex1} above, that enable to reason about the analysis without reducing the statement at all.
\item If the proof contains only open branches containing behavioral specification, one may run a static analysis to evaluate them to true or false directly. E.g., if for the formula $\phi_p$ above the pointer analysis returns that the synchronization
point 0 reads from \abs{L.log}, the first behavioral modality evaluates to false and the whole formula to true.
\end{itemize}
Using external analyses increases modularity: (1) the \TPL-calculus is simpler because it does not need to encode the implementation and (2) one may verify functional correctness of a method \emph{up to its context}.
Open branches are then a description of the context which the method requires. 
This may be verified once more context is known, thus extending proof repositories~\cite{repos} to external analyses.


\COMMENT{
Fig.~\ref{fig:logics} shows how FOS, MSO, behavioral modalities and \TPL build upon each other: FOS describes single states in traces of the locally abstract semantics of statements, MSOT describes single traces,
behavioral modalities describe multiple selectable traces by giving them a logical characterization as MSO formulas and \TPL uses multiple behavioral modalities to reason about multiple characterizations of statements.
\begin{figure}
\centering\includegraphics[scale=0.5]{drawing.pdf}
\caption{Locally abstract semantics, FOS, MSO, behavioral modalities and \TPL.} 
\label{fig:logics}
\end{figure}
}

\section{A Sequent Calculus for BPL: Behavioral Types}\label{sec:symbol}
IIn this section we characterize behavioral types as behavioral specifications with a set of sequent calculus rules and a constraint on the proof obligations of the methods within a program.
Before we formalize this in general, we introduce method types~\cite{ifm,icfem}, a behavioral type for Active Objects that suffices to generalize method contracts and object invariants by integrating the behavioral specifications for postcondition reasoning and points-to analysis. The method type of a method describes the local view of a method on a protocol.
\begin{definition}
The local protocol $\mathbf{L}$ of a method and the method type $\methtype$ are defined by the grammar below.
The behavioral specification for method types is $\localtype = (\methtype,\localtypesem)$. 
Let $\objname_0,\dots,\objname_n$ be roles, and $\xabs{f}_{\objname_0},\dots,\xabs{f}_{\objname_n}$ fields of fitting type.
\localtypesem is defined as $\exists \objname_0,\dots,\objname_n\in\mathsf{O}.~\bigwedge_{i\leq n}\objname_i \doteq \xabs{f}_{\objname_i} \wedge \localtypesem'(\localtype)$.
\\\noindent\scalebox{0.9}{
\begin{minipage}{\textwidth}
\begin{align*}
\mathbf{L} ::= ?\methodname(\phi).\methtype \qquad\qquad
\methtype &::= 
\objname!\methodname(\phi) \sep \downarrow\!(\phi) \sep \mathsf{skip} \sep \methtype.\methtype \sep \methtype^\ast \sep \oplus\{{\methtype}_i\}_{i\in I} \sep \&(\many{\methodname},\phi)\{\methtype,\methtype\} 
\end{align*}
\end{minipage}
}
\end{definition}
The local protocol of a method contains the receiving action $?\methodname(\phi)$, which models that the parameters satisfy the predicate $\phi$.
The method body is checked against the method type -- there is no statement corresponding to receiving.
Roles keep track of an object through the protocol. The assignment of roles to fields is generated by the projection of session types~\cite{icfem}.
We stress that statements and method types share syntactic elements -- it is possible to pattern match on statements/expressions on one side and a method type on the other side in rules. 

Calls are specified with the call action $\objname!\methodname(\phi)$, where $\objname.\methodname$ is the receiver and the predicate $\phi$ has to hold. Here, $\phi$ does not only specify the sent data but also 
local variables and fields. It can express properties such as ``the sent data is larger then some field''. The termination action $\downarrow\!(\phi)$ models termination in a state satisfying $\phi$ (which again may include \abs{result}).
The empty action $\mathsf{skip}$ models no visible actions and $\methtype_1.\methtype_2$ to sequential composition: all interactions in $\methtype_1$ must happen before $\methtype_2$.
Repetition $\methtype^\ast$ corresponds to the Kleene star (and loops) and models zero or more repetitions of the interactions in $\methtype$.

There are two choice operators: $\oplus\{{\methtype}_i\}_{i\in I}$ is the active choice, the method must select one branch $\methtype_i$. 
It is not necessary to implement all branches, the method may choose to never select some branches. The index set $I$ must not be empty.
$\&(\many{\methodname},\phi)\{\methtype_1,\methtype_2\}$ is the passive choice: some other method made a choice and this method has to follow the protocol according to this choice.
The choice is communicated via a future which has to be resolved by one of the methods in $\many{\methodname}$.
If the choice condition $\phi$, which may only include the program variable \abs{result}, is fulfilled by the read data, $\methtype_1$ has to be followed, otherwise $\methtype_2$ has to be followed.
Both branches have to be implemented. 

The semantics of the call and termination actions specify a trace with at least three elements with the correct event on second position and a state fulfilling the given predicate on the third position. Every other event is \noev.
The semantics of the empty action and active choice are straightforward. Sequential composition uses relativization: some position $i$ is chosen, such that the left translation holds before $i$
and the right translation afterwards. Note that $i$ is included in both relativization, to uphold the invariant that a trace always starts and ends with a state.
The semantics of repetition are the only point where we require second order quantifiers:
set $I$ is a set of indices, such that the first and last position are included and for every consecutive pair $k,l$ of elements of $I$,
the translation of the repeated type holds in the relativization between $k$ and $l$. 
Passive choice specifies that the first event is a read on a correct future (i.e., resolved by the correct method) and the suffix afterwards follows the communicated choice correctly.


\begin{figure}
\caption{Semantics for \localtype. Unbound variables are implicitly existentially quantified.}
\scalebox{0.9}{
\begin{minipage}{\textwidth}
\begin{align*}
\localtypesem'(\objname!\methodname(\phi)) &= \forall i \in \mathbf{I}.~\mathsf{isEvent}(i) \wedge [i] \not\doteq \noev \rightarrow [i] \doteq \invocev(x,\objname,f,\methodname,\many{\expr}) \wedge [i-1] \vdash \phi(\many{\expr})\\
\localtypesem'(\downarrow\!(\phi)) &= \forall i \in \mathbf{I}.~\mathsf{isEvent}(i) \wedge [i] \not\doteq \noev \rightarrow [i] \doteq \resolvev(x,f,m,\expr) \wedge [i-1] \vdash \phi[\xabs{result} \setminus \expr]\\
&\text{where $\phi(\many{\expr})$ replaces its free variables by $\many{\expr}$. $\phi[\xabs{result} \setminus \expr]$ replaces $\xabs{result}$ by $\expr$.}\\
\localtypesem'(\mathsf{skip}) &= \forall l \in \mathbf{I}.~[l]\doteq\noev \vee [l] \vdash \xabs{true}\qquad
\localtypesem'(\oplus\{{\methtype}_i\}_{i\in I}) = \bigvee_{i\in I} \localtypesem'(\methtype_i)\\
\localtypesem'(\methtype_1.\methtype_2) &= \exists i\in\mathbf{I}.~\localtypesem'(\methtype_1)[n \in \mathbf{I} \setminus n\leq i] \wedge \localtypesem'(\methtype_2)[n \in \mathbf{I} \setminus n\geq i] \\
\localtypesem'(\methtype^\ast) &= \exists I \subseteq \mathbf{I}.~\exists a,b \in I.~ a < b \wedge \\
&\phantom{\exists I \subseteq \mathbf{I}.~}\forall k\in \mathbf{I}.~\left((k < a \wedge \mathsf{isEvent}(i) \rightarrow [i] \not\doteq \noev)\vee (a \leq k \wedge k \leq b)\right)\wedge\\
&\phantom{\exists I \subseteq \mathbf{I}.~}\forall i_1,i_2 \in I.~\left((\forall l \in I.~ l \leq i_1 \wedge i_2 \leq l) \rightarrow \localtypesem'(\methtype)[n \in \mathbf{I} \setminus i_1 \leq n \wedge n \leq i_2)]\right)\\
\end{align*}
\vspace{-12mm}
\begin{align*}
\localtypesem'(&\&(\{\methodname_l\}_{l\in I},\phi)\{\methtype_1,\methtype_2\}) = \exists i,j,k\in\mathbf{I}.~i<j\wedge j<k \wedge \\
&(\forall l \in \mathbf{I}.~l\doteq j \vee l\geq k\vee(l \leq i \wedge ([l] \doteq \noev \vee [l] \vdash \xabs{true})) \wedge [j] \doteq \resolvrev(x,m,f,\expr,n) \wedge \\
&\bigvee_{l\in I}m \doteq \methodname_i  \wedge ([k] \vdash \phi \rightarrow \localtypesem'(\methtype_1)[n \in \mathbf{I} \setminus n\geq k]) \wedge ([k]\not\vdash \phi \rightarrow \localtypesem'(\methtype_2)[n \in \mathbf{I} \setminus n\geq k])
\end{align*}
\end{minipage}}
\label{fig:semlocaltype}
\end{figure}


\begin{example}\label{ex:mytype}
The following formalizes the behavior described informally in Ex.~\ref{ex:mso}:
\scalebox{0.76}{\begin{minipage}{\textwidth}
\begin{align*}
?\xabs{T.test}(\mathsf{true}).\xabs{S}!\xabs{Comp.cmp}(\xabs{data}\doteq\xabs{i}).\localpaschoice{\{\xabs{Comp.cmp}\},\xabs{result}<0}{\xabs{L}!\xabs{Log.log}(\xabs{data}\doteq\xabs{i}),\\\mathsf{skip}}.\downarrow\!(\xabs{result} \geq 0)
\end{align*}
\end{minipage}}

The \abs{result} variable in the guard of the passive choice is referring to the result of the read value, not the specified method.
\end{example}

We define behavioral types from a program logic perspective\footnote{Behavioral types are sometimes (informally) distinguished from data types
by having a subject reduction theorem where the typing relation is preserved,
but not the type itself~\cite{dezanimail}. In \TPL this would correspond to the property
that one of the rules has a premise where the type in the
behavioral modality is different than in the conclusion.} by a type system, which is a set of sequent calculus rules that match on behavioral modalities and a obligation scheme, that maps every method to a proof obligation
\begin{definition}[Behavioral Types]~\label{def:behtype}
A \emph{behavioral type} $\type$ is a behavioral specification $(\typesyn_\type, \typesem_\type)$ extended with $(\typecalc_\type, \typescheme_\type)$.

 The \emph{obligation scheme} $\typescheme_\type$ maps method names \methodname to \emph{proof obligations}, sequents of the form $\phi_\methodname \Rightarrow [ \statement_\methodname \halfsim^{\typesem_\type} \typesyn_\methodname]$, which have to be proven. $\statement_\methodname$ is the method body of \methodname. 
The \emph{type system} $\typecalc_\type$ is a set of rewrite rules for $\typesyn_{\type}$ and sequent calculus rules with conclusions matching the sequent $\Gamma \fCenter \{U\}[\statement \halfsim^{\typesem_\type} \typesyn_\type], \Delta$.
\end{definition}
We demand that obligation schemes are consistent, i.e., proof obligations do not contradict each other.
This would be the case if, for example a method is called and its precondition $\phi$ is checked caller-side, then $\phi$ must truly be used as a precondition by
the proof obligation for the called method.
\begin{definition}
Let $\mathbf{L}_{\methodname} = ?\methodname(\phi_{\methodname}).\methtype_{\methodname}$ be the local protocols in $\prgm$.
We require that all $\mathbf{L}_{\methodname}$ are consistent: If $\methodname$ is called in the method type of any other method $\methodname'$,
then the call condition implies $\phi_{\methodname}$. Furthermore, $\phi_{\xabs{X.run}} = \xabs{true}$.

The extension of the behavioral specification \localtype of method types to a behavioral type is given by the calculus in Fig.~\ref{fig:totalrules} and 
$\localtypescheme(\methodname) = \phi_\methodname \Rightarrow [ \statement_\methodname \halfsim^{\localtypesem} \methtype_\methodname]$. 
\begin{figure}[bth]
\scalebox{0.9}{
\begin{minipage}{\textwidth}
\hspace{-4mm}
\scalebox{0.95}{
\begin{minipage}{0.5\textwidth}
\SINFER{\textsf{met}-V}{
\Gamma \fCenter \{U\}\{\xabs{v} := \xabs{e}\}[\xabs{s} \localsim \mathsf{L}], \Delta
    }{
\Gamma \fCenter \{U\}[\xabs{v = e; s} \localsim \mathsf{L}], \Delta
}    
\end{minipage}
}
\hspace{-4mm}
\scalebox{0.95}{
\begin{minipage}{0.5\textwidth}
\SINFER{\textsf{met}-F}{
\Gamma \fCenter \{U\}\{\xabs{heap} := \mathsf{store}(\xabs{heap},\xabs{f},\xabs{e})\}[\xabs{s} \localsim \mathsf{L}], \Delta
    }{
\Gamma \fCenter \{U\}[\xabs{this.f = e; s} \localsim \mathsf{L}], \Delta
}    
\end{minipage}
}

\vspace{2mm}
\scalebox{0.95}{\def\fCenter{\ }
\begin{minipage}{0.5\textwidth}
\SSINFER[$v$ fresh]{\textsf{met}-\abs{get}}{
\Gamma \Rightarrow \{U\}\{\xabs{v} := v\}(\phi(v) \rightarrow [\xabs{s} \localsim \mathsf{L_1}]), \Delta
}{
\Gamma \Rightarrow \{U\}\{\xabs{v} := v\}(\neg\phi(v) \rightarrow [\xabs{s} \localsim \mathsf{L_2}]), \Delta
\qquad
   \Rightarrow [ \xabs{v = e.get}_i\xabs{;s} \ptsim  \{\many{\methodname}\}]
}{
\Gamma \Rightarrow \{U\}[\xabs{v = e.get}_i\xabs{;s}\fCenter \localsim \&(\many{\methodname},\phi)\{\mathsf{L}_1,\mathsf{L}_2\}], \Delta
}    
\end{minipage}
}

\SINFER{\textsf{met}-\abs{while}}{
\Gamma \fCenter \{U\}I, \Delta
\qquad 
I,\xabs{e} \Rightarrow [\xabs{s} \postsim I ]
\qquad 
I,\xabs{e} \Rightarrow [\xabs{s} \localsim \textsf{L} ]
    \qquad
    I,\neg\xabs{e} \Rightarrow [\xabs{s'} \localsim \mathsf{L}'], \Delta
    }{
\Gamma \fCenter \{U\}[\xabs{while e do s od s'} \localsim \mathsf{L}^\ast.\mathsf{L}'], \Delta
}    

\SSINFER[$I_1 \cup I_2 \subseteq I$]{\textsf{met}-\abs{if}}{
\Gamma \fCenter \{U\}(\xabs{e} \rightarrow [\xabs{s;s''} \localsim \oplus\{\mathsf{L}_1\}_{i \in I_1}]), \Delta
    }{
\Gamma \fCenter \{U\}(\neg\xabs{e} \rightarrow [\xabs{s';s''} \localsim \oplus\{\mathsf{L}_2\}_{i\in I_2}]), \Delta
    }{
\Gamma \fCenter \{U\}[\xabs{if e then s else s' fi s''} \localsim \oplus\{\mathsf{L}_i\}_{i \in I}], \Delta
}

\SINFER[$f$ fresh]{\textsf{met}-call}{
    \Gamma \Rightarrow \{U\}\phi(\xabs{e}), \Delta
    \qquad
\Gamma \fCenter \{U\}\{\xabs{v} := f\}[\xabs{s} \localsim \mathsf{L}], \Delta
    }{
\Gamma \fCenter \{U\}[\xabs{v = O!m(e); s} \localsim  \xabs{O!m}(\phi).\mathsf{L}], \Delta
}    

\scalebox{0.95}{
\begin{minipage}{0.5\textwidth}
\SINFER{\textsf{met}-\abs{return}}{
\Gamma \fCenter \{U\}\{\xabs{result} := \xabs{e}\}\phi, \Delta
    }{
\Gamma \fCenter \{U\}[\xabs{return e} \localsim \downarrow\!(\phi)], \Delta
}    
\end{minipage}
}
\hspace{10mm}
\scalebox{0.95}{
\begin{minipage}{0.5\textwidth}
\SINFER{\textsf{met}-\abs{skip}}{
    }{
\Gamma \fCenter \{U\}[\xabs{skip} \localsim \mathsf{skip}], \Delta
}    
\end{minipage}
}
\end{minipage}
}

\[\methtype \leftrightsquigarrow \localactchoice{\methtype} \qquad \mathsf{skip}.\methtype \leftrightsquigarrow \methtype\qquad \methtype.\mathsf{skip} \leftrightsquigarrow \methtype\]
\caption{Rules for \localtype. We remind that the sets $I_1,I_2$ are defined as non-empty. For simplicity, we assume that every branch and every loop body implicitly ends in \abs{skip}.}
\label{fig:totalrules}
\end{figure}

\end{definition}
The call condition may contain fields of the other objects, but this is not an issue when checking consistency, as the precondition only contains fields of the own object and the fields are simply uninterpreted function symbols.
The method in Fig.~\ref{fig:ex:met} can be typed with the type in Ex.~\ref{ex:mytype}. 

Rule \rulename{\textsf{met}--V} translates a variable-assignment into an update and \rulename{\textsf{met}--F} is analogous for fields.
Rule \rulename{\textsf{met}--\abs{get}} has three premises: one premise checks via \pttype that the correct methods are synchronized with.
The two others use a fresh constant $v$ for the read value and assign it to the target variable. The two premises differ in the branch that is checked afterwards, depending on whether or not
the choice condition holds.
Rule \rulename{\textsf{met}--\abs{while}} is a standard loop invariant rule. An invariant $I$ holds before the first iteration and is preserved by the loop to remove all other information afterwards. 
The loop body is checked against the repeated type and the continuation against the continuation of the type. Method types have no special action for the end of a statement, so \posttype is used for checking that the loop preserves its invariant.
Rule \rulename{\textsf{met}--\abs{if}} splits the set of possible choices into two and checks each branch against one of these sets. These sets may overlap and do not need to cover all original choices, but may not be empty.
Rule \rulename{\textsf{met}--call} checks the annotated condition of the called method and the correct target explicitly and that the correct method is called by matching call type and call statement. 
We remind that references are not reassigned, so call targets can be verified locally.
The other rules are straightforward.

\paragraph{Contracts and Invariants.}
Method types generalize method contracts and object invariants as follows. An object invariant is encoded by adding it to the formula in the receiving and terminating actions of all method in an object -- except the constructor \abs{run}, where it is only added to the terminating action.
A method contract (consisting of a precondition on the parameters and a postcondition) is encoded analogously by adding the precondition to the receiving and the postcondition to the terminating actions.
However, one additional step is required: Method types are generated by projection of global types~\cite{ifm}, so to use them for object invariants or method contracts requires to infer a method type first.
This is done by mapping every call to a call action, every branching to an active choice, every loop to a repetition, termination to a terminating action
and using $\mathsf{true}$ at every position where a formula is required, before adding precondition, postcondition or object invariant.
The most complex construct is synchronization. Each such read is mapped to a passive choice with all methods as the method set and $\mathsf{true}$ as the choice condition.
The following code is added in the first branch. The second branch is $\mathsf{skip}$. 
Invariants require fields in the precondition and a fitting notion of consistency, which was developed in~\cite{ifm}. 

\begin{example}
Consider the code in Fig.~\ref{fig:ex:class}, a variation of our running example.
It tracks the number of calls to \abs{T.test} and inverts the result if the input is positive.
It adheres to the contract with precondition $\xabs{i} \geq 0$ and postcondition $\xabs{result} \geq 0$ and the invariant
$\xabs{this.nr} \geq 0$.
Our algorithm derives the following type:

\hspace{5mm}\scalebox{0.8}{
\begin{minipage}{\textwidth}
\begin{align*}
?\xabs{T.test}(\mathsf{true})~.~\xabs{S}!\xabs{Comp.cmp}(\mathsf{true})~.~\localpaschoice{\mathsf{M},\mathsf{true}}{\localactchoice{\xabs{L}!\xabs{Log.log}(\mathsf{true}),\\\mathsf{skip}},\\\mathsf{skip}}.\downarrow\!(\mathsf{true})
\end{align*}
\end{minipage}}

\noindent 
Let $\phi_{\xabs{cmp}}$ and $\phi_{\xabs{log}}$ be the preconditions of the called methods.
The final specification, after adding the contract and the invariant, is shown on the right in Fig.~\ref{fig:ex:class} as $\methtype_1$.
The inferred type is not the one we gave in Ex.~\ref{ex:mytype}:
For one, it differs in its shape (two choice operators). 
For another, it neither keeps track of the passed data, nor specifies the relation between the return value of \abs{Comp.cmp} and the taken branch. 
These properties are typical for protocol specifications and require a global view, contrary to the local view of method contracts and object invariants.
However, one can add the pre- and postcondition and the object invariant also to the type given in Ex.~\ref{ex:mytype} and combine local and global specification.
The result is shown as $\methtype_2$ in in Fig.~\ref{fig:ex:class}. $\methtype_2$ expresses that the method follows the protocol and adheres to contract and object invariant. 
\begin{figure}[bt]
\begin{minipage}{0.45\textwidth}
\begin{abscode}
class T(Comp S, Log L){
 Int nr = 0;
 Int test(Int i){
  Fut<Int> f = S!cmp(i);
  this.nr = this.nr + 1;
  Int r = f.get$_0$;
  if(r < 0 && i > 0){
   r = -r; f = L!log(i);
  }
  return r;
 }
}
\end{abscode}
\end{minipage}
\scalebox{0.8}{
\begin{minipage}{0.55\textwidth}
\begin{tabular}{rl}
Precondition:& $\xabs{i} \geq 0$\\
Postcondition:& $\xabs{result} \geq 0$\\
Invariant:& $\xabs{this.nr} \geq 0$\\
\end{tabular}
\scalebox{0.85}{
\begin{minipage}{\textwidth}
\begin{align*}
\methtype_1=~~&?\xabs{T.test}(\xabs{i} \geq 0 \wedge \xabs{this.nr} \geq 0).\xabs{S}!\xabs{Comp.cmp}(\phi_{\xabs{cmp}})\\
.~&\localpaschoice{\mathsf{M},\mathsf{true}}{\localactchoice{\xabs{L}!\xabs{Log.log}(\phi_{\xabs{log}}),\\\mathsf{skip}},\\\mathsf{skip}}\\
.~&\downarrow\!(\xabs{result} \geq 0 \wedge \xabs{this.nr} \geq 0)\\[4mm]
\methtype_2=~~&?\xabs{T.test}(\xabs{i} \geq 0 \wedge \xabs{this.nr} \geq 0).\xabs{S}!\xabs{Comp.cmp}(\xabs{data}\doteq\xabs{i}\wedge\phi_{\xabs{cmp}})\\
.~&\localpaschoice{\{\xabs{Comp.cmp}\},\xabs{result}<0}{\xabs{L}!\xabs{Log.log}(\xabs{data}\doteq\xabs{i}\wedge\phi_{\xabs{log}}),\\\mathsf{skip}}\\
.~&\downarrow\!(\xabs{result} \geq 0) \wedge \xabs{this.nr} \geq 0
\end{align*}
\end{minipage}}
\end{minipage}}
\caption{An example method and two method types for method contracts and invariants.}
\label{fig:ex:class}
\end{figure}
\end{example}

\begin{theorem}\label{lem:local}
\localtype is sound for every program and consistent obligation scheme.
\end{theorem}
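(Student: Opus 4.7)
The plan is to prove soundness rule by rule by case analysis over the rules of Figure~\ref{fig:totalrules} together with the three rewrite rules at its bottom. Fix a program $\prgm$, a consistent obligation scheme, and arbitrary contextual parameters $\objname,\future,\methodname$, an object state $\defstatepair$, and an assignment $\beta$. For each rule with conclusion $\Gamma \Rightarrow \{U\}[\statement \localsim \typesyn], \Delta$ we assume all premises are valid; writing $\defstatepairo$ for the state after applying $U$, the goal reduces to showing that every $\prgm$-selectable trace $\trace$ of $\statement$ starting in $\defstatepairo$ is a model of $\localtypesem(\typesyn)$. In every case the strategy is identical: unfold the locally abstract semantics of $\statement$ according to Figure~\ref{fig:lasem}, unfold the MSO translation of $\typesyn$ according to Figure~\ref{fig:semlocaltype}, and match the two structurally, using the premises to discharge the resulting obligations.

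I would first dispatch the easy cases \rulename{met-skip}, \rulename{met-V}, \rulename{met-F}, \rulename{met-return}, \rulename{met-call}, and the three rewrite rules. For \rulename{met-skip} the unique trace is $\sequence{\defstatepairo}$, which contains no events and trivially satisfies $\mathsf{skip}$. For \rulename{met-return}, the trace has a single $\resolvev$ event between two copies of $\defstatepairo$, and the update $\xabs{result} := \xabs{e}$ in the premise makes the FOS formula $\phi$ evaluate on the returned value, matching the definition of $\downarrow\!(\phi)$. For \rulename{met-V}, \rulename{met-F}, and \rulename{met-call}, the trace prepends a single event ($\noev$, $\noev$, or $\invocev$) to the semantics of the continuation; the updates mirror the LA state transitions exactly, and the call-condition premise supplies the predicate required by $\objname!\methodname(\phi)$. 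The rewrite rules are immediate since $\mathsf{skip}.\methtype$ and $\methtype.\mathsf{skip}$ have the same MSO translation as $\methtype$, and the singleton active choice $\oplus\{\methtype\}$ unfolds to the same disjunction as $\methtype$.

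Next come \rulename{met-if} and \rulename{met-get}. For \rulename{met-if}, each selectable trace fixes a branch via its selection condition and, by the corresponding premise, satisfies $\oplus\{\mathsf{L}_i\}_{i\in I_j}$ for some $j \in \{1,2\}$; since $I_1 \cup I_2 \subseteq I$, this entails the conclusion's disjunction $\oplus\{\mathsf{L}_i\}_{i\in I}$. For \rulename{met-get}, a selectable trace begins with a resolving-reaction event introducing a symbolic value $\symbolic{v}$ that is concretized to some $v_c$ by the used concretizer. The \pttype premise excludes concretizations whose future was resolved by a method outside $\many{\methodname}$, so the first event has the shape required by $\&(\many{\methodname},\phi)\{\mathsf{L}_1,\mathsf{L}_2\}$; depending on whether $\phi(v_c)$ holds, the matching implication premise applies and yields that the continuation models $\mathsf{L}_1$ or $\mathsf{L}_2$, which is exactly the passive-choice semantics.

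The main obstacle is \rulename{met-while}. The LA semantics unfolds the loop recursively, so each selectable trace executes the body $k$ times for some finite $k \geq 0$ before the guard fails. I would proceed by induction on $k$. In the base $k = 0$ the guard is false and by premise~4 the trace of $\xabs{s'}$ models $\mathsf{L}'$, while $\mathsf{L}^\ast$ is witnessed by a degenerate index set so $\mathsf{L}^\ast.\mathsf{L}'$ collapses to $\mathsf{L}'$. In the step, premise~3 shows the current iteration models $\mathsf{L}$ (using the invariant $I$, which holds by premise~1 in the outermost call and by premise~2 in inner iterations), and the \posttype premise~2 re-establishes $I$ at the end of the iteration so that the induction hypothesis applies to the remaining unfolding. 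The technically delicate part is stitching the per-iteration MSO witnesses into the second-order witness demanded by $\localtypesem(\methtype^\ast)$: the set of chop points between consecutive iterations must be exhibited as the second-order variable $I$, the boundary indices $a,b$ chosen at the entry and exit of the first and last iterations, and the relativized body translation verified for each consecutive pair. Finiteness of \coreactor runs makes the induction on $k$ well-founded, and the chop structure in the local semantics provides exactly the index skeleton required by the second-order quantifier.
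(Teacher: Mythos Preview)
Your overall architecture matches the paper's proof: a case analysis over the rules of Fig.~\ref{fig:totalrules}, unfolding the LA semantics and the MSO translation and matching them, with the loop rule handled by induction on the number of iterations and the invariant $I$ used to re-establish the hypothesis at each chop point. The treatment of \rulename{met-skip}, \rulename{met-return}, \rulename{met-call}, \rulename{met-if}, \rulename{met-get}, and \rulename{met-while} is essentially what the paper does.

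There is, however, a genuine gap in your classification of \rulename{met-V} and \rulename{met-F} as ``easy''. In these rules the type $\mathsf{L}$ is \emph{unchanged} between premise and conclusion, so what you must actually show is: if $\trace$ is a model of $\localtypesem(\methtype)$, then so is the trace obtained by prepending $\sequence{\defstatepairo,\noev}$. This is not automatic from the definitions in Fig.~\ref{fig:semlocaltype}; it requires a separate lemma, proved by structural induction on $\methtype$, that every method-type translation is closed under $\noev$-prefixes. The interesting induction cases are sequential composition (shift the witness index $i$ and apply the hypothesis to the left factor) and repetition (the translation explicitly allows a $\noev$-only prefix before $a$). The paper makes this lemma the centerpiece of the \rulename{met-V} case and then reuses it for \rulename{met-F} and, dually (closure under $\noev$-suffixes), for the rewrite rules $\mathsf{skip}.\methtype \leftrightsquigarrow \methtype$ and $\methtype.\mathsf{skip} \leftrightsquigarrow \methtype$. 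Your claim that these rewrites are ``immediate since they have the same MSO translation'' is not correct: the translations are syntactically different and only semantically equivalent \emph{because} of this prefix/suffix lemma. Once you add this induction on $\methtype$, your proof goes through along the lines you sketch.
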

\begin{proof}
We need to show the soundness of all rules.
For convenience, we omit the quantifiers and assignment of roles to fields, which is trivial for all but one rule.
\begin{itemize}
\item \textbf{\textsf{met}-V} By definition of soundness we may assume that for all $\defstatepair,\beta$
\[
\defstatepair,I,\beta \models \bigwedge\Gamma \rightarrow \{U\}\{\xabs{v := e}\}[\statement \localsim \methtype] \vee \bigvee \Delta
\]
By semantics of the implication, we may assume for all $\defstatepair,\beta$ that if $\defstatepair,I,\beta \models \bigwedge\Gamma$ holds, then
\begin{equation}
\defstatepair,I,\beta \models\{U\}\{\xabs{v := e}\}[\statement \localsim \methtype] \text{ or } \defstatepair,I,\beta \models \bigvee \Delta \label{pr:1}
\end{equation}
We use the following abbrevations 
\begin{align*}
\statepair{\state'}{\heap'} &= \eval{U}{\defstatepair,I,\beta}\left(\defstatepair\right)\\
\statepair{\state''}{\heap''} &= \eval{\xabs{v := e}}{\statepair{\state'}{\heap'},I,\beta}\left(\defstatepair\right) = \statepair{\state'[\xabs{v} \mapsto \xabs{e}]}{\heap'}
\end{align*}
and unroll the semantics of the premise as follows. 
\begin{align}
&\defstatepair,I,\beta \models\{U\}\{\xabs{v := e}\}[\statement \localsim \methtype]\nonumber\\
\iff &\statepair{\state'}{\heap'},I,\beta \models \{\xabs{v := e}\}[\statement \localsim \methtype]\nonumber\\
\iff &\statepair{\state''}{\heap''},I,\beta \models [\statement \localsim \methtype]\nonumber\\
\iff &\forall \trace \in \eval{\statement}{\objname,\future,\methodname,\statepair{\state''}{\heap''}}^\prgm.~ \trace,I,\beta \models \localtypesem(\methtype)\label{pr:2}
\end{align}

We need to show that for all $\defstatepair,\beta$
\[
\defstatepair,I,\beta \models \bigwedge\Gamma \rightarrow \{U\}[\xabs{v = e;}\statement \localsim \methtype] \vee \bigvee \Delta
\]
Again, by semantics of the implication, we need to show for all $\defstatepair,\beta$ that if $\defstatepair,I,\beta \models \bigwedge\Gamma$ holds, then
\[
\defstatepair,I,\beta \models\{U\}[\xabs{v = e;}\statement \localsim \methtype] \text{ or } \defstatepair,I,\beta \models \bigvee \Delta
\]
The second case is obvious from \ref{pr:1}. It remain to show that for all $\defstatepair,\beta$ the statement~\ref{pr:2} implies the following
\begin{align}
&\defstatepair,I,\beta \models\{U\}[\xabs{v = e;}\statement \localsim \methtype]\nonumber\\
\iff &\statepair{\state'}{\heap'},I,\beta \models [\xabs{v = e;}\statement \localsim \methtype]\nonumber\\
\iff &\forall \trace' \in \eval{\xabs{v = e;}\statement}{\objname,\future,\methodname,\statepair{\state'}{\heap'}}^\prgm.~ \trace',I,\beta \models \localtypesem(\methtype)\label{pr:3}
\end{align}
Let $\trace = \selection \triangleright \history$. We observe that the first element of \history is $\statepair{\state'[\xabs{v} \mapsto \xabs{e}]}{\heap'} = \statepair{\state''}{\heap''}$ and
\[\trace' = \selection \triangleright \sequence{\statepair{\state'}{\heap'},\noev} \circ \history \]

We show that \ref{pr:2} implies \ref{pr:3} by induction on \methtype, i.e. for all traces $\trace',\trace$ as above the following holds:
\begin{equation}
\forall \methtype.~\trace,I,\beta \models \localtypesem(\methtype) \rightarrow \trace',I,\beta \models \localtypesem(\methtype)\label{intlem}
\end{equation}

In the following the notation $\trace[1..n]$ refers to the history of $\trace$.
\begin{itemize}
\item \emph{Base Case $\methtype = \mathsf{skip}$:} If $\trace$ only contains \noev events, then so does $\trace'$.
\item \emph{Base Case $\methtype = \objname!\methodname(\phi)$:} We observe that the translation of this type demands that there is exactly one non-\noev event. If this is the case for $\trace$, then it is also the case for $\trace'$.
\item \emph{Base Case $\methtype = \downarrow\!(\phi)$:} Analogous to the above case.
\item \emph{Step Case $\methtype = \methtype^\ast$:} We observe that the translation of this type allows a prefix consisting only of \noev-events.
\item \emph{Step Case $\methtype = \oplus\{{\methtype}_i\}_{i\in I}$:} If \trace is a model for $\oplus\{{\methtype}_i\}_{i\in I}$ then there is a $k\in I$ such that \trace is a model for $\methtype_k$. By induction hypothesis, $\trace'$ is also a model for $\methtype_k$.
\item \emph{Step Case $\methtype = \&(\many{\methodname},\phi)\{\methtype',\methtype''\}$:} We observe that the translation of this type allows a prefix consisting only of \noev-events.
\item \emph{Step Case $\methtype = \methtype'.\methtype''$:} The translation choses some $i$ such that $\trace[1..i]$ is a model for $\methtype'$ and $\trace[i..|\trace|]$ is a model for $\methtype''$. 
For $\trace'$ we chose the same position $i$. The subtrace $\trace'[i..|\trace'|]$ is a model for $\methtype''$, because 
\[
\trace[i..|\trace|] = \trace'[i..|\trace'|]
\]
It remains to show that $\trace'[1..i]$ is a model for $\methtype'$. We use the above observation
\[
\trace'[1..i] = \sequence{\statepair{\state'}{\heap'},\noev} \circ \trace[1..i] 
\]
Now by induction hypothesis, if $\trace[1..i]$ is a model for $\methtype'$, then so is $\trace'[1..i]$.
\end{itemize}
\item\textbf{\textsf{met}-F} Analogous to the above case.
\item\textbf{\textsf{met}-\abs{skip}} 
In this case, it suffices to show that 
\begin{align*}
&\forall \trace \in \defeval{\xabs{skip}}.~\trace,\beta,I \models \localtypesem(\mathsf{skip})\\
\iff&\emptyset \triangleright \sequence{\defstatepair},\beta,I \models  \forall l \in \mathbf{I}.~[l]\doteq\noev \vee [l] \vdash \xabs{true}
\end{align*}
This is obviously true.
\item\textbf{\textsf{met}-\abs{return}} 
In this case, it suffices to show that if $\{\xabs{result := e}\}\phi$ is valid, then
\\\noindent\scalebox{0.9}{\begin{minipage}{\textwidth}
\begin{align*}
&\forall \trace \in \defeval{\xabs{return e}}.~\trace,\beta,I \models \localtypesem(\downarrow(\phi))\\
\iff&\emptyset \triangleright \sequence{\defstatepair,\resolvev(X,f,\methodname,\expr),\defstatepair},\beta,I \models\\ 
&\forall i \in \mathbf{I}.~\mathsf{isEvent}(i) \wedge [i] \not\doteq \noev \rightarrow [i] \doteq \resolvev(X,f,\methodname,\expr) \wedge [i+1] \vdash \phi[\xabs{result} \setminus \expr]\\
\iff&\defstatepair,\beta,I \models \phi[\xabs{result} \setminus \expr] \iff \defstatepair,\beta,I \models \{\xabs{result := e}\}\phi
\end{align*}
\end{minipage}}

Which is valid by assumption. The last state is due to the semantics of updates.
\item\textbf{\textsf{met}-call} 
We may assume that $\phi(\expr)$ and $\{\xabs{v :=} f'\}[\statement  \localsim \methtype]$ hold in every state \defstatepair and that the object adheres to the assigment of roles to fields.
Analogous to the first case, it is required to show that in this case the following holds.
\\\noindent\scalebox{0.9}{\begin{minipage}{\textwidth}
\begin{align*}
&\forall \trace \in \eval{\xabs{v = f!m'(e);s}}{\objname,f,\methodname,\defstatepair}.~ \trace,\beta,I \models \localtypesem'(\xabs{X!m'(}\phi\xabs{)}.\methtype)\iff\\
&\forall \trace' \in\eval{\xabs{s}}{\objname,f,\methodname,\statepair{\state[\xabs{v} \mapsto f']}{\heap}}.~ \emptyset \triangleright \sequence{\defstatepair,\invocev(\objname,\xabs{O'},f',\xabs{m'},\expr)}\circ \theta',\beta,I \models \localtypesem(\xabs{O!m'(}\phi\xabs{)}.\methtype)
\end{align*}
\end{minipage}}

The assigment of roles to fields is adhered to by the second clause of the first premisse.
For the translation of the sequential composition we chose $i = 3$. The second disjunct thus holds by the second premise. It remains to show that
\begin{align*}
 \statepair{\state}{\heap},\beta,I \models \phi(\expr)
\end{align*}
This holds by the first premise.
Note that we have no identity of the called object, however the condition that these fields cannot be reassigned ensure that the binding of all objects in $\localtypesem$ is adhered to.
\item\textbf{\textsf{met}-\abs{get}} 
This case is analogous to the above. We stress that we do not rely on the read post-condition here, as we have one branch for the guarding formula and one for its negation.
\item \textbf{\textsf{met}-\abs{if}} 
This case is trivial.
\item\textbf{\textsf{met}-\abs{while}} 
We need to show that
\[\forall \trace \in \defeval{\xabs{while(e)}\{\xabs{s}\}\xabs{s'}}.~\trace,\beta,I \models \localtypesem(L^\ast.L') \]
First, we only consider the loop
\\\noindent\scalebox{0.9}{\begin{minipage}{\textwidth}
\begin{align*}
&\forall \trace \in \defeval{\xabs{while(e)}\{\xabs{s}\}}.~\trace,\beta,I \models \localtypesem(L^\ast) \iff\\
&\forall \trace \in \defeval{\xabs{while(e)}\{\xabs{s}\}}.~\trace,\beta,I \models \exists I \subseteq \mathbf{I}.~\exists a,b \in I.~\forall k\in \mathbf{I}.~(a \leq k \wedge k \leq b) \wedge\\
&\phantom{= \exists I \subseteq \mathbf{I}.~}\forall i_1,i_2 \in I.~\left((\forall l \in I.~ l \leq i_1 \wedge i_2 \leq l) \rightarrow \localtypesem'(\methtype)[n \in \mathbf{I} \setminus i_1 \leq n \wedge n \leq i_2)]\right)\\
\end{align*}
\end{minipage}}

We choose $I$ such that each $0\in I,|\trace|\in I$ and furthermore exactly those indices are included, which are generated by the start of another iteration.
Now we must show that for each consecutive pair $i_1,i_2 \in I$ the inner translation holds.

First we show that $J$\footnote{It is denoted $I$ in the rule, we use $J$ here because $I$ is already used as the index set.} is a loop invariant, i.e., that it holds at every $\trace[i]$ if $i \in I$, by induction on the subset of naturals in $I$.
For the induction base, the very first premise establishes that $J$ holds at $\trace[0]$.
For the infuction step, we may assume that if $J$ holds at $\trace[i]$ that it holds at $\trace(i')$ where $i'$ is the next element of $I$ after $i$. This is established by the second premise.
Note that here we use post-condition reasoning, not the method type! 

Finally, the third premisse establishes that if started in a state where $J$ holds, then \abs{s} realizes \methtype. As \abs{s} is executed within the loop and by the above argument, $J$ holds in all such states, this establishes
that the inner translation holds for each consecutive pair $i_1,i_2 \in I$.

Finally, we observe that the state where the last iteration\footnote{We remind that we only consider terminating runs} ends is in $I$, thus $J$ holds in this state. Thus, \abs{s'} starts in a state where $J$ holds and thus by the forth premise the continuation of the loop realizes the continuation of the type.

\item \textbf{Rewrite Rules} The first rule is trivial, for the others we observe that all types admit a prefix consisting only of \noev events (see first case) as well as a suffix with the same condition.
\end{itemize}
\end{proof}
\begin{corollary}
If (1) for every method \methodname  with type $?\methodname(\phi).\methtype_{\methodname}$ the formula $\localtypescheme(\methodname)$ is valid and (2) the obligation scheme is consistent, 
then for every selected trace $\trace$ of any method \methodname, the trace after the invocation reaction event follows its type: 
\[\trace[2..|\trace|],I,\emptyset \models \localtypesem(\methtype_{\methodname})\]
\begin{proof}
The proof obligation establishes the above property if for every method \methodname the formula $\phi_{\methodname}$ holds in the first state where it is scheduled. i.e., if every process of $\methodname$ agrees on a state $\defstatepair$
with 
\[\defstatepair,\emptyset,I \models \phi_{\methodname}\]
For this we observe that the first state is determined by the call parameters, the heap and the LA semantics of the method. 
In the type system presented here, $\phi_{\methodname}$ includes only the call parameters, not the heap, thus the precondition does not restrict the possible heaps.
Now, if the scheme is consistent, then every process is added after call such that the first state fulfills $\phi_\methodname$.
\end{proof}
\end{corollary}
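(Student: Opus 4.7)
The plan is to reduce the claim to the proof obligation $\localtypescheme(\methodname) = \phi_\methodname \Rightarrow [\statement_\methodname \halfsim^{\localtypesem} \methtype_\methodname]$, whose validity is part of hypothesis~(1). By the LA semantics of methods, every selected trace $\trace$ of $\methodname$ decomposes as $\sequence{\defstatepair_0,\invocrev(\objname,\future,\methodname,\many{\sexpr}),\defstatepair_0} \chop \trace_b$ with $\trace_b \in \defeval{\statement_\methodname}^\prgm$; selectability of $\trace_b$ transfers from selectability of $\trace$ via the continuation-sharing lemma, since any used concretizer of $\trace$ restricts to a used concretizer of $\trace_b$. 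Provided that $\defstatepair_0,I,\emptyset \models \phi_\methodname$, the proof obligation then delivers $\trace_b,I,\emptyset \models \localtypesem(\methtype_\methodname)$; because the suffix of $\trace$ after the invocrev event coincides with $\trace_b$ (up to the shared initial state), this yields the claim.

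Hence the real work is to establish $\defstatepair_0 \models \phi_\methodname$ for every process actually created in a terminating run of $\prgm$. I would proceed by induction on the step at which rule \rulename{O-Add} installs the process containing this trace. The base case is the initial call from the main block: there $\methodname$ is of the form $\xabs{X.run}$ and $\phi_{\xabs{X.run}} = \xabs{true}$ by convention, so the goal is immediate. In the inductive step, the process was installed by \rulename{O-Add} in reaction to an invocation event $\invocev(\objname',\objname,f,\methodname,\many{\sexpr})$ communicated via \rulename{S-Invoc} by some caller method $\methodname'$ whose own process was added strictly earlier. Applying the induction hypothesis to the caller's (shorter) trace yields adherence to $\localtypesem(\methtype_{\methodname'})$; isolating the sub-trace ending at the invocation event and unfolding the semantics of the matching call action $\objname!\methodname(\phi)$ forces $\phi$ to hold in the caller's state immediately before that event, evaluated with the argument expressions $\many{\sexpr}$. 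Since \rulename{O-Add} sets up $\defstatepair_0$ by binding the callee's formal parameters to the values of $\many{\sexpr}$, the consistency requirement $\phi \Rightarrow \phi_\methodname$ then transports satisfaction across the call and gives $\defstatepair_0 \models \phi_\methodname$.

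The main obstacle will be making the syntactic consistency requirement semantically rigorous across the caller/callee scope change. The caller-side formula $\phi$ lives over argument expressions while the callee-side precondition $\phi_\methodname$ lives over parameter names, so the inductive step really needs a substitution lemma showing that satisfaction transports along the parameter binding performed by \rulename{O-Add}. A secondary subtlety is that $\phi_\methodname$ cannot meaningfully constrain the callee's fields: \rulename{O-Add} installs a fresh symbolic initial heap $\heapid$ whose field values are undetermined at process creation, so any field occurrence in $\phi_\methodname$ is effectively unbound at that moment. The convention stated in the text that method preconditions range only over parameters sidesteps this cleanly, and I would invoke it explicitly---or, more generally, argue that the remaining field symbols behave as uninterpreted constants whose later concretization in the GC semantics stays compatible with $\phi_\methodname$.
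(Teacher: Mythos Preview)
Your proposal is correct and follows the same overall line as the paper: reduce the claim to the validity of the proof obligation $\phi_\methodname \Rightarrow [\statement_\methodname \localsim \methtype_\methodname]$, and then argue that the initial state of every installed process satisfies $\phi_\methodname$. The paper's own proof simply asserts the latter from consistency (``if the scheme is consistent, then every process is added after call such that the first state fulfills $\phi_\methodname$''), whereas you actually supply the missing argument by induction on the order in which \rulename{O-Add} installs processes, using the caller's type adherence plus the semantics of the call action to extract the call condition, and then transporting it across the parameter binding via consistency. Your treatment of the two subtleties---the caller/callee substitution and the fact that $\phi_\methodname$ constrains only parameters, not the fresh symbolic heap $\heapid$---matches precisely the observations the paper makes in passing. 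In short, you have written out the proof the paper sketches; nothing is missing and nothing diverges.
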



\section{Conclusion and Related Work}\label{sec:other}
This works presents \TPL, a program logic for object-oriented distributed programs 
that enables deductive reasoning about the results of static analyses and integrates concepts from behavioral types by pattern-matching statement and specification. 
The \emph{method type} behavioral type generalizes method contracts, session types and object invariants.
In the following, we discuss related work. 

\paragraph{Dynamic Logics.}
Beckert and Bruns~\cite{bern} use LTL formulas in dynamic logic modalities in their Dynamic Trace Logic (DTL) for Java. Given an LTL formula $\phi$, the DTL-formula $[\statement]\phi$ expresses that $\phi$ describes all traces of $\statement$. 
DTL uses a restricted form of pattern matching: its three loop invariant rules depend on the outermost operator of $\phi$ and other rules may consume a\`{ }next'' operator. 
DTL does not use events and specifies patterns of state changes, not of interactions.

The Abstract Behavior Specification Dynamic Logic (ABSDL) of Din and Owe~\cite{absdl2} is for the ABS language~\cite{abs}.
In ABSDL, a formula $[\statement]\phi$, where $\phi$ is a first-order formula over the program state, has the standard meaning that $\phi$ holds after $\statement$ is executed.
ABSDL uses a special program variable to keep track of the visible events. 
Its rules are tightly coupled with object-invariant reasoning. This makes it impossible to specify the state at arbitrary interactions. 

Bubel et al.~\cite{tracedl} define dynamic logic with coinductive traces (DLCT).
In DLCT,  a formula $[\statement]\phi$, where $\phi$ is a trace modality formula, containing symbolic trace formulas, has the meaning that every trace of $\statement$ is a model for $\phi$.
Contrary to ABSDL, DLCT keeps track of the whole trace, not just the events.
DLCT is not able to specify the property that between two states, some form of event does \emph{not} occur, as symbolic trace formulas are not closed under negation.

\paragraph{Behavioral Types.}
A number of behavioral types deals with assertions~\cite{berger1,Bocchi12} or Actors~\cite{elena,HenrioLM17,NeykovaY16}. 
Stateful Behavioral Types for Active Objects (STAO)~\cite{ifm} uses both and defines the judgment $\phi, \statement' \vdash \statement : \tau$, 
that expresses that all traces of $\statement$ are models for the translation of $\tau$. $\phi$ and $\statement'$ keep track of the chosen path so far.
STAO is not able to reason about multiple judgments, but relies on external analyses for precision. Reasoning about these results happens on a meta-level.

Finally, Propositions-as-Types theorems (PaT) have been established~\cite{intu1,intu2} between  session types for the $\pi$-calculus and intuitionistic linear logic.
They are specific to this setting and do not characterize general behavioral types. To our best knowledge, Def.~\ref{def:behtype} is the first formal characterization of behavioral types.

\paragraph{Future Work.}
An implementation of \TPL for full ABS is ongoing and as future work, we plan to investigate further types and concurrency models, in particular systems with shared memory and effect type systems.

\bibliography{paper.bib}

\begin{thebibliography}{10}

\bibitem{keybook}
{\sc Ahrendt, W., Beckert, B., Bubel, R., H{\"{a}}hnle, R., Schmitt, P.~H., and
  Ulbrich, M.}, Eds.
\newblock {\em Deductive Software Verification - The KeY Book - From Theory to
  Practice}, vol.~10001 of {\em Lecture Notes in Computer Science}.
\newblock Springer, 2016.

\bibitem{Armstrong07}
{\sc Armstrong, J.}
\newblock {\em Programming Erlang: Software for a Concurrent World}.
\newblock Pragmatic Bookshelf, 2007.

\bibitem{BakerH77}
{\sc Baker, H.~G., and Hewitt, C.}
\newblock The incremental garbage collection of processes.
\newblock {\em {SIGART} Newsletter 64\/}.

\bibitem{bern}
{\sc Beckert, B., and Bruns, D.}
\newblock Dynamic logic with trace semantics.
\newblock In {\em Automated Deduction - {CADE-24} - 24th International
  Conference on Automated Deduction, Lake Placid, NY, USA, June 9-14, 2013.
  Proceedings\/} (2013), M.~P. Bonacina, Ed., vol.~7898 of {\em Lecture Notes
  in Computer Science}, Springer, pp.~315--329.

\bibitem{berger1}
{\sc Berger, M., Honda, K., and Yoshida, N.}
\newblock Completeness and logical full abstraction in modal logics for typed
  mobile processes.
\newblock In {\em Automata, Languages and Programming, 35th International
  Colloquium, {ICALP} 2008, Reykjavik, Iceland, July 7-11, 2008, Proceedings,
  Part {II} - Track {B:} Logic, Semantics, and Theory of Programming {\&} Track
  {C:} Security and Cryptography Foundations\/} (2008), L.~Aceto,
  I.~Damg{\aa}rd, L.~A. Goldberg, M.~M. Halld{\'{o}}rsson,
  A.~Ing{\'{o}}lfsd{\'{o}}ttir, and I.~Walukiewicz, Eds., vol.~5126 of {\em
  Lecture Notes in Computer Science}, Springer, pp.~99--111.

\bibitem{Bocchi12}
{\sc Bocchi, L., Lange, J., and Tuosto, E.}
\newblock Three algorithms and a methodology for amending contracts for
  choreographies.
\newblock {\em Sci. Ann. Comp. Sci. 22}, 1 (2012), 61--104.

\bibitem{BrandauerCCFJPT15}
{\sc Brandauer, S., Castegren, E., Clarke, D., Fernandez{-}Reyes, K., Johnsen,
  E.~B., Pun, K.~I., {Tapia Tarifa}, S.~L., Wrigstad, T., and Yang, A.~M.}
\newblock Parallel objects for multicores: {A} glimpse at the parallel language
  encore.
\newblock In {\em {SFM}\/} (2015), vol.~9104 of {\em Lecture Notes in Computer
  Science}, Springer, pp.~1--56.

\bibitem{repos}
{\sc Bubel, R., Damiani, F., H{\"{a}}hnle, R., Johnsen, E.~B., Owe, O.,
  Schaefer, I., and Yu, I.~C.}
\newblock Proof repositories for compositional verification of evolving
  software systems - managing change when proving software correct.
\newblock {\em Trans. Found. Mastering Chang. 1\/} (2016), 130--156.

\bibitem{tracedl}
{\sc Bubel, R., Din, C.~C., H{\"{a}}hnle, R., and Nakata, K.}
\newblock A dynamic logic with traces and coinduction.
\newblock In {\em Automated Reasoning with Analytic Tableaux and Related
  Methods - 24th International Conference, {TABLEAUX} 2015, Wroc{\l}aw, Poland,
  September 21-24, 2015. Proceedings\/} (2015), H.~de~Nivelle, Ed., vol.~9323
  of {\em Lecture Notes in Computer Science}, Springer, pp.~307--322.

\bibitem{BubelDHN15}
{\sc Bubel, R., Din, C.~C., H{\"{a}}hnle, R., and Nakata, K.}
\newblock A dynamic logic with traces and coinduction.
\newblock In {\em Automated Reasoning with Analytic Tableaux and Related
  Methods - 24th International Conference, {TABLEAUX} 2015, Wroc{\l}aw, Poland,
  September 21-24, 2015. Proceedings\/} (2015), H.~de~Nivelle, Ed., vol.~9323
  of {\em Lecture Notes in Computer Science}, Springer, pp.~307--322.

\bibitem{intu1}
{\sc Caires, L., and Pfenning, F.}
\newblock Session types as intuitionistic linear propositions.
\newblock In {\em {CONCUR} 2010 - Concurrency Theory, 21th International
  Conference, {CONCUR} 2010, Paris, France, August 31-September 3, 2010.
  Proceedings\/} (2010), P.~Gastin and F.~Laroussinie, Eds., vol.~6269 of {\em
  Lecture Notes in Computer Science}, Springer, pp.~222--236.

\bibitem{ProActive}
{\sc Caromel, D., and Henrio, L.}
\newblock {\em A theory of distributed objects - asynchrony, mobility, groups,
  components}.
\newblock Springer, 2005.

\bibitem{boer}
{\sc de~Boer, F.~S., Serbanescu, V., H{\"{a}}hnle, R., Henrio, L., Rochas, J.,
  Din, C.~C., Johnsen, E.~B., Sirjani, M., Khamespanah, E., Fernandez{-}Reyes,
  K., and Yang, A.~M.}
\newblock A survey of active object languages.
\newblock {\em {ACM} Comput. Surv. 50}, 5 (2017), 76:1--76:39.

\bibitem{BoerSHHRDJSKFY17}
{\sc de~Boer, F.~S., Serbanescu, V., H{\"{a}}hnle, R., Henrio, L., Rochas, J.,
  Din, C.~C., Johnsen, E.~B., Sirjani, M., Khamespanah, E., Fernandez{-}Reyes,
  K., and Yang, A.~M.}
\newblock A survey of active object languages.
\newblock {\em {ACM} Comput. Surv. 50}, 5 (2017), 76:1--76:39.

\bibitem{dezanimail}
{\sc Dezani{-}Ciancaglini, M.}
\newblock personal communication, 19.10.2018.

\bibitem{DinHJPT17}
{\sc Din, C.~C., H{\"{a}}hnle, R., Johnsen, E.~B., Pun, K.~I., and {Tapia
  Tarifa}, S.~L.}
\newblock Locally abstract, globally concrete semantics of concurrent
  programming languages.
\newblock In {\em Automated Reasoning with Analytic Tableaux and Related
  Methods - 26th International Conference, {TABLEAUX} 2017, Bras{\'{\i}}lia,
  Brazil, September 25-28, 2017, Proceedings\/} (2017), R.~A. Schmidt and
  C.~Nalon, Eds., vol.~10501 of {\em Lecture Notes in Computer Science},
  Springer, pp.~22--43.

\bibitem{absdl2}
{\sc Din, C.~C., and Owe, O.}
\newblock A sound and complete reasoning system for asynchronous communication
  with shared futures.
\newblock {\em J. Log. Algebr. Meth. Program. 83}, 5-6 (2014), 360--383.

\bibitem{DinTHJ15}
{\sc Din, C.~C., {Tapia Tarifa}, S.~L., H{\"{a}}hnle, R., and Johnsen, E.~B.}
\newblock History-based specification and verification of scalable concurrent
  and distributed systems.
\newblock In {\em Formal Methods and Software Engineering - 17th International
  Conference on Formal Engineering Methods, {ICFEM} 2015, Paris, France,
  November 3-5, 2015, Proceedings\/} (2015), M.~J. Butler, S.~Conchon, and
  F.~Za{\"{\i}}di, Eds., vol.~9407 of {\em Lecture Notes in Computer Science},
  Springer, pp.~217--233.

\bibitem{elena}
{\sc Giachino, E., Johnsen, E.~B., Laneve, C., and Pun, K.~I.}
\newblock Time complexity of concurrent programs - - {A} technique based on
  behavioural types -.
\newblock In {\em Formal Aspects of Component Software - 12th International
  Conference, {FACS} 2015, Niter{\'{o}}i, Brazil, October 14-16, 2015, Revised
  Selected Papers\/} (2015), C.~Braga and P.~C. {\"{O}}lveczky, Eds., vol.~9539
  of {\em Lecture Notes in Computer Science}, Springer, pp.~199--216.

\bibitem{future}
{\sc Halstead~Jr., R.~H.}
\newblock Multilisp: a language for concurrent symbolic computation.
\newblock {\em {ACM TOPLAS} 7}, 4 (1985), 501--538.

\bibitem{HeddenZ18}
{\sc Hedden, B., and Zhao, X.}
\newblock A comprehensive study on bugs in actor systems.
\newblock In {\em {ICPP}\/} (2018), {ACM}, pp.~56:1--56:9.

\bibitem{old}
{\sc Henkin, L.}
\newblock Relativization with respect to formulas and its use in proofs of
  independence.
\newblock {\em Compositio Mathematica 20\/} (1968), 88--106.

\bibitem{HenrioLM17}
{\sc Henrio, L., Laneve, C., and Mastandrea, V.}
\newblock Analysis of synchronisations in stateful active objects.
\newblock In {\em Integrated Formal Methods - 13th International Conference,
  {IFM} 2017, Turin, Italy, September 20-22, 2017, Proceedings\/} (2017),
  N.~Polikarpova and S.~Schneider, Eds., vol.~10510 of {\em Lecture Notes in
  Computer Science}, Springer, pp.~195--210.

\bibitem{actor}
{\sc Hewitt, C., Bishop, P., and Steiger, R.}
\newblock A universal modular {ACTOR} formalism for artificial intelligence.
\newblock In {\em Proceedings of the 3rd International Joint Conference on
  Artificial Intelligence\/} (1973), IJCAI'73, Morgan Kaufmann Publishers Inc.,
  pp.~235--245.

\bibitem{Honda08}
{\sc Honda, K., Yoshida, N., and Carbone, M.}
\newblock Multiparty asynchronous session types.
\newblock In {\em Proceedings of the 35th {ACM} {SIGPLAN-SIGACT} Symposium on
  Principles of Programming Languages, {POPL} 2008\/} (2008), pp.~273--284.

\bibitem{hut}
{\sc H\"{u}ttel, H., Lanese, I., Vasconcelos, V.~T., Caires, L., Carbone, M.,
  Deni{\'e}lou, P.-M., Mostrous, D., Padovani, L., Ravara, A., Tuosto, E.,
  Vieira, H.~T., and Zavattaro, G.}
\newblock Foundations of session types and behavioural contracts.
\newblock {\em ACM Comput. Surv. 49}, 1 (Apr. 2016), 3:1--3:36.

\bibitem{abs}
{\sc Johnsen, E.~B., H{\"{a}}hnle, R., Sch{\"{a}}fer, J., Schlatte, R., and
  Steffen, M.}
\newblock {ABS:} {A} core language for abstract behavioral specification.
\newblock In {\em Formal Methods for Components and Objects, 9th Intl.\ Symp.,
  {FMCO}\/} (2010), B.~K. Aichernig, F.~S. de~Boer, and M.~M. Bonsangue, Eds.,
  pp.~142--164.

\bibitem{JohnsenOY06}
{\sc Johnsen, E.~B., Owe, O., and Yu, I.~C.}
\newblock Creol: {A} type-safe object-oriented model for distributed concurrent
  systems.
\newblock {\em Theor. Comput. Sci. 365}, 1-2 (2006), 23--66.

\bibitem{ifm}
{\sc Kamburjan, E., and Chen, T.}
\newblock Stateful behavioral types for active objects.
\newblock In {\em Integrated Formal Methods - 14th International Conference,
  {IFM} 2018, Proceedings\/} (2018), C.~A. Furia and K.~Winter, Eds.,
  vol.~11023 of {\em Lecture Notes in Computer Science}, Springer,
  pp.~214--235.

\bibitem{icfem}
{\sc Kamburjan, E., Din, C.~C., and Chen, T.}
\newblock Session-based compositional analysis for actor-based languages using
  futures.
\newblock In {\em Formal Methods and Software Engineering - 18th International
  Conference on Formal Engineering Methods, {ICFEM} 2016, Proceedings\/}
  (2016), K.~Ogata, M.~Lawford, and S.~Liu, Eds., vol.~10009 of {\em Lecture
  Notes in Computer Science}, pp.~296--312.

\bibitem{KamburjanH17}
{\sc Kamburjan, E., and H{\"{a}}hnle, R.}
\newblock Deductive verification of railway operations.
\newblock In {\em Reliability, Safety, and Security of Railway Systems.
  Modelling, Analysis, Verification, and Certification - Second International
  Conference, RSSRail 2017, Pistoia, Italy, November 14-16, 2017,
  Proceedings\/} (2017), A.~Fantechi, T.~Lecomte, and A.~B. Romanovsky, Eds.,
  vol.~10598 of {\em Lecture Notes in Computer Science}, Springer,
  pp.~131--147.

\bibitem{NeykovaY16}
{\sc Neykova, R., and Yoshida, N.}
\newblock Multiparty session actors.
\newblock {\em Logical Methods in Computer Science 13}, 1 (2017).

\bibitem{intu2}
{\sc Wadler, P.}
\newblock Propositions as types.
\newblock {\em Commun. {ACM} 58}, 12 (2015), 75--84.

\end{thebibliography}

\end{document}